\DeclareMathOperator{\sign}{sign}
\DeclareMathOperator{\minimum}{minimum}
\DeclareMathOperator*{\argmax}{arg\,max}
\DeclareMathSymbol{\mh}{\mathord}{operators}{`\-}
\DeclareMathOperator{\round}{round}
\DeclareMathOperator{\proj}{Proj}
\DeclareMathOperator{\ce}{\mathcal{F}}
\DeclareMathOperator{\de}{\mathcal{G}}
\newcommand{\cc}{\cellcolor{lightgray!30}}
\newcommand{\removelatexerror}{\let\@latex@error\@gobble}
\newtheorem{theorem}{Theorem}
\newtheorem*{theorem*}{Theorem}
\newtheorem{lemma}{Lemma}
\newtheorem{assumption}{Assumption}
\newtheorem{proposition}{Proposition}
\newtheorem*{proposition*}{Proposition}
\newcommand{\ignore}[1]{}
\newcommand{\specialcell}[2][c]{\begin{tabular}[#1]{@{}c@{}}#2\end{tabular}}
\newcommand{\specialcellleft}[2][l]{\begin{tabular}[#1]{@{}l@{}}#2\end{tabular}}
\newcommand{\vthead}[1]{\makebox[1em][l]{\rotatebox{90}{#1}}}
\newcommand*\circled[1]{
	\begin{tikzpicture}[scale=1., baseline=(char.base)] 
		\node[shape=circle,fill,inner sep=1.5pt] (char) {\textcolor{white}{#1}};
	\end{tikzpicture}
}
\newcommand*\whitecircled[1]{\tikz[baseline=(char.base)]{
		\node[shape=circle,draw,inner sep=0.8pt] (char) {#1};}}
\definecolor{dkgreen}{rgb}{0,0.6,0}
\definecolor{mauve}{rgb}{0.58,0,0.82}
\tiny\color{black},
\begin{document}
\title{PAD: Towards Principled Adversarial Malware Detection Against Evasion Attacks} 

\author{Deqiang Li,
        Shicheng Cui,
        Yun Li,
        Jia Xu,
        Fu Xiao
        and Shouhuai Xu 
\IEEEcompsocitemizethanks{\IEEEcompsocthanksitem D. Li is with the School of Computer Science, Nanjing University of Posts and Telecommunications, Nanjing, 210023, China
\IEEEcompsocthanksitem S. Cui is with the School of Computer Engineering, Nanjing Institute of Technology, Nanjing, 211167, China 
\IEEEcompsocthanksitem Y. Li, J. Xu, and F. Xiao are with the School of Computer Science, Nanjing University of Posts and Telecommunications, Nanjing, 210023, China 
\IEEEcompsocthanksitem S. Xu is with Department of Computer Science, University of Colorado Colorado Springs, 1420 Austin Bluffs Pkwy, Colorado Springs, Colorado, 80918 USA. \protect \\Email: sxu@uccs.edu
}
\thanks{More information can be found at \url{http://ieeexplore.ieee.org}}
}

\markboth{IEEE TRANSACTIONS ON DEPENDABLE AND SECURE COMPUTING, ~Vol.~XX, No.~X, August~2023}{PAD: Principled Adversarial Malware Detection Against Evasion Attacks, D. Li, S. Cui, Y. Li, J. Xu, F. Xiao and S. Xu.}

\IEEEpubid{1545-5971~\copyright~2023 IEEE. Personal use of this material is permitted, but republication/redistribution requires IEEE permission.}

\IEEEtitleabstractindextext{%
\begin{abstract}
Machine Learning (ML) techniques can facilitate the automation of \underline{mal}icious soft\underline{ware} (malware for short) detection, but suffer from evasion attacks. Many studies counter such attacks in heuristic manners, lacking theoretical guarantees and defense effectiveness. In this paper, we propose a new adversarial training framework, termed \underline{P}rincipled \underline{A}dversarial Malware \underline{D}etection (PAD), which offers convergence guarantees for robust optimization methods. PAD lays on a learnable convex measurement that quantifies distribution-wise discrete perturbations to protect malware detectors from adversaries, 
whereby for smooth detectors, adversarial training can be performed with theoretical treatments. To promote defense effectiveness, we propose a new mixture of attacks to instantiate PAD to enhance deep neural network-based measurements and malware detectors. Experimental results on two Android malware datasets demonstrate: (i) the proposed method significantly outperforms the state-of-the-art defenses; (ii) it can harden ML-based malware detection against 27 evasion attacks with detection accuracies greater than 83.45\%, at the price of suffering an accuracy decrease smaller than 2.16\% in the absence of attacks;
(iii) it matches or outperforms many anti-malware scanners in VirusTotal against realistic adversarial malware.
\end{abstract}

\begin{IEEEkeywords}
Malware Detection, Evasion Attack, Adversarial Example, Provable Defense, Deep Neural Network.
\end{IEEEkeywords}}

\maketitle

\IEEEdisplaynontitleabstractindextext

\IEEEraisesectionheading{\section{Introduction}\label{sec:introduction}}

\IEEEPARstart{I}{nternet} is widely used for connecting various modern devices, which facilitates the communications of our daily life but can spread cyber attacks at the same time. For example, Kaspersky \cite{kasparsky:Online} reported detecting 33,412,568 malware samples in the year of 2020, 64,559,357 in 2021, and 109,183,489 in 2022. The scale of this threat motivates the use of Machine Learning (ML) techniques, including Deep Learning (DL), to automate malware detection. Promisingly, empirical evidence demonstrates the advanced performance of ML-based detection (see, e.g., \cite{raff2017malware,DBLP:journals/csur/YeLAI17,10.1145/3372297.3417291,hou2017hindroid,onwuzurike2019mamadroid}).

Unfortunately, ML-based malware detectors are vulnerable to {\em adversarial examples}. These examples are a type of malware variants and are often generated by modifying non-functional instructions in the existing executable programs (rather than writing them from scratch) \cite{8782574,Chen:2017:SES,li2020adversarial,pierazzi2019intriguing,DBLP:conf/ccs/ZhaoZZZZLYYL21,DBLP:conf/asiaccs/SongLAGKY22}. Adversarial examples can be equipped with {\em poisoning attacks} \cite{DBLP:journals/compsec/ChenXFHXZL18,suciu2018does}, {\em evasion attacks} \cite{DBLP:journals/tifs/DemetrioBLRA21,DBLP:journals/tissec/DemetrioCBLAR21,DBLP:conf/asiaccs/SongLAGKY22}, or both \cite{demontis2018intriguing}. In this paper, we focus on evasion attacks, which aim to mislead a malware detection model in the test phase. To combat evasive attacks, pioneers proposed several approaches, such as input transformation \cite{YeFOSINT-SI-2018}, weight regularization \cite{7917369}, and classifier randomization \cite{DBLP:journals/corr/abs-2005-11671}, most of which, however, have been broken by sophisticated attacks (e.g.,  \cite{carlini2017adversarial,DBLP:journals/corr/abs-1802-00420,pierazzi2019intriguing,DBLP:journals/corr/abs-2106-15023}). Nevertheless, recent studies empirically demonstrate that {\em adversarial training} can harden ML models to certain extent \cite{al2018adversarial,DBLP:journals/tnse/LiLYX21}, which endows a model with robustness by learning from adversarial examples, akin to ``vaccines". 

Figure \ref{fig:limit} illustrates the schema of adversarial training. Owing to the efficiency of mapping representation perturbations back to the problem space, researchers conduct adversarial training in the feature space \cite{rndic_laskov,al2018adversarial,DBLP:journals/tnse/LiLYX21,pierazzi2019intriguing,DBLP:journals/tifs/DemetrioBLRA21}. However, ``side-effect'' features \cite{pierazzi2019intriguing} cause inaccuracy when conducting the inverse representation-mapping, leading to the robustness gap that the attained robustness cannot propagate to the problem space. In the feature space, adversarial training typically involves inner maximization (searching perturbations) and outer minimization (optimizing model parameters). Both are handled with heuristic methods, lacking  theoretical guarantees \cite{al2018adversarial,DBLP:journals/tnse/LiLYX21}. This leads to the limitation of disallowing a rigorous analysis on the types of attacks that can be thwarted by the resultant model, especially in the context of discrete domains (e.g., malware detection). The fundamental concern is the optimization convergence: the inner maximization shall converge to a stationary point, and the resultant perturbation approaches the optimal one; the outer minimization has gradients of loss w.r.t. parameters proceeding toward zero regarding certain metrics (e.g., $\ell_2$ norm) in gradient-based optimization. Intuitively, as long as convergence requirements are met, the defense model can mitigate other attacks less effectively than
the one that is used for adversarial training.

\begin{figure}[!htbp]
	\centering
\includegraphics[width=0.486\textwidth]{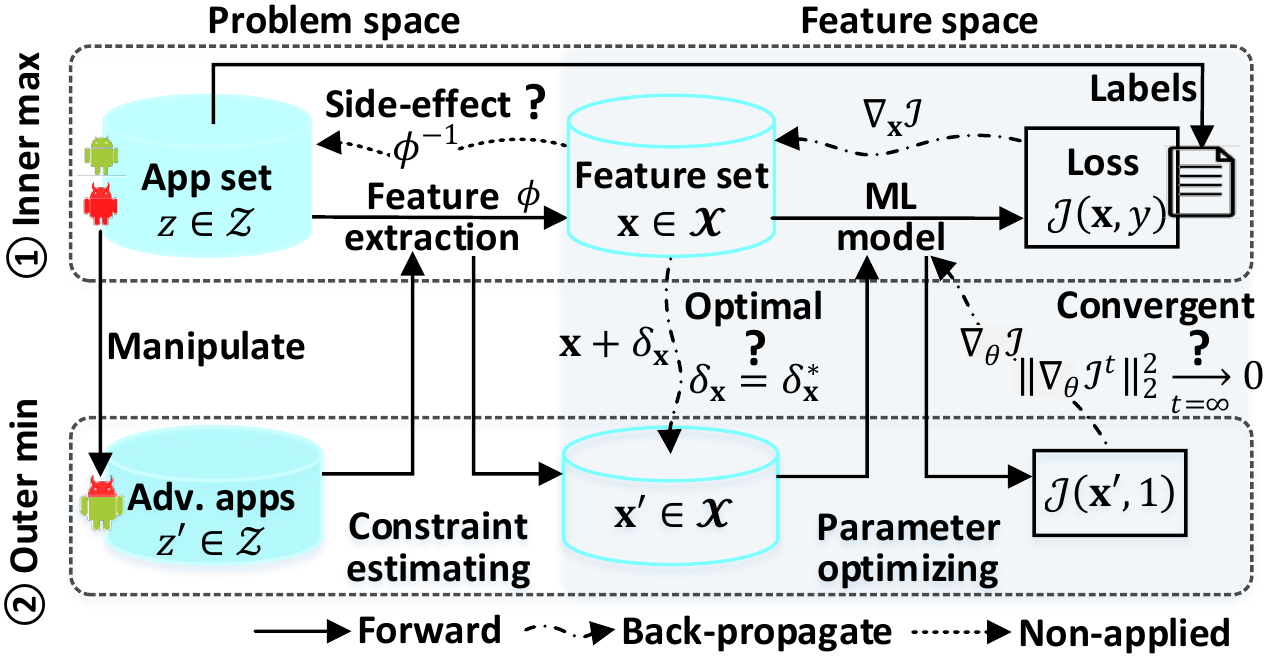}
	\caption{Schema of feature space adversarial training and its three limitations related that: (i) the attained robustness back-propagates to the problem space (upper left); (ii) the inner maximization searches perturbations optimally (middle); (iii) the outer minimization optimizes model parameters convergently (right).}
	\label{fig:limit}
\end{figure}

Existing methods cope with the limitations mentioned above with new assumptions \cite{DBLP:conf/codaspy/IncerTA018,DBLP:conf/mlsys/LeiWCDDW19,DBLP:conf/iclr/BaoHZS022}. For instance, Qi et al. propose searching text perturbations with theoretical guarantees on attackability by assuming the non-negativity of models \cite{DBLP:conf/mlsys/LeiWCDDW19}, which produce attacks counting on submodular optimization \cite{DBLP:conf/kdd/WangHBSML020}. Indeed, the non-negativity of models leads to binary monotonic classification (without involving the outer minimization mentioned above), which circumvents any attack that utilizes either feature addition or feature removal based perturbations, but not both \cite{DBLP:conf/codaspy/IncerTA018,DBLP:conf/uss/Chen0SJ20}. This type of classifiers tend to sacrifice detection accuracy notably \cite{pierazzi2019intriguing}. In order to relax this overly restrictive assumption, a recent study \cite{DBLP:conf/iclr/BaoHZS022} resorts to the theory of weakly submodular optimization, which necessitates a concave and smooth model. However, modern ML architectures (e.g., deep neural networks) may not have a built-in concavity. Moreover, these models are not geared toward malware detection or adversarial training. 
From the domain of image processing, pioneers propose utilizing smooth ML models \cite{DBLP:conf/iclr/SinhaND18,pmlr-v97-wang19i,DBLP:conf/cvpr/JiaZW0WC22}, because specific distance metrics (e.g., $\ell_2$ norm) can be incorporated to shape the loss landscape, leading to local concavity w.r.t. the input and thus easing the inner maximization. 
Furthermore, smoothness benefits the convergence of the outer minimization \cite{DBLP:conf/iclr/SinhaND18}. Because the proposed metrics are geared toward continuous input, they may not be suitable for software samples that are inherently discrete. Worst yet, semantics-preserving adversarial malware examples are not necessarily generated by small perturbations \cite{al2018adversarial,pierazzi2019intriguing}.

\smallskip
\noindent{\bf Our Contributions}. In this paper, we investigate adversarial training methods for malware detection by tackling three limitations of existing methods as follows. (i) We tackle the {\em robustness gap} by relaxing the constraint of ``side-effect'' features in training, and demonstrating that the resultant feature-space model can defend against practical attacks. (ii) We address the issue of {\em adversarial training without convergence guarantee} by learning convex measurements from data for quantifying distribution-wise perturbations, which regard examples falling outside of the underlying distribution as adversaries.
In this way, the inner maximizer has to bypass the malware detector and the newly introduced adversary detector, leading to a constrained optimization problem whose Lagrangian relaxation for smooth malware detectors can be concave. Consequently, the smoothness benefits the convergence of gradient-based outer minimization \cite{DBLP:conf/iclr/SinhaND18}. (iii) We address the {\em incapability of rigorously resisting a range of attacks} by mixing multiple types of gradient-based attack methods to approximate the optimal attack, which is used to implement adversarial training while enjoying the optimization convergence mentioned in (ii). Our contributions are summarized as follows:
\begin{itemize}[leftmargin=*]
\item {\bf Adversarial training with formal treatment}. We propose a new adversarial training framework, dubbed \underline{P}rincipled \underline{A}dversarial Malware \underline{D}etection (PAD). PAD extends the malware detector with a customized adversary detector, where the customization is the convex distribution-wise measurement. For smooth models, PAD benefits convergence guarantees for adversarial training, resulting in provable robustness. 

\item {\bf Robustness improvement}. We establish a PAD model by combining a Deep Neural Network (DNN) based malware detector and an input convex neural network based adversary detector. Furthermore, we enhance the model by leveraging adversarial training to incorporate a new mixture of attacks, termed \underline{S}tepwise \underline{M}ixture of \underline{A}ttacks, leading to the defense model dubbed PAD-SMA. Theoretical analysis shows the robustness of PAD-SMA, including attackability of inner maximization and convergence of outer minimization.

\item {\bf Experimental validation}. 
We compare PAD-SMA with seven defenses proposed in the literature via the widely-used Drebin \cite{Daniel:NDSS} and Malscan \cite{DBLP:conf/kbse/WuLZYZ019} malware datasets while considering a spectrum of attack methods, ranging from no attacks, 13 oblivious attacks, to 18 adaptive attacks. 
Experimental results show that PAD-SMA significantly outperforms the other defenses, by slightly sacrificing the detection accuracy when there are no adversarial attacks.
Specifically, PAD-SMA thwarts a broad range of attacks effectively, exhibiting an accuracy $\geq 81.18\%$ under 30 attacks on Drebin and an accuracy $\geq 83.45\%$ under 27 attacks on Malscan, except for the Mimicry attack guided by multiple (e.g., 30 on Drebin or 10 on Malscan) benign software samples \cite{rndic_laskov,li2020adversarial}; it outperforms some anti-malware scanners (e.g., Symantec, Comodo), matches with some others (e.g., Microsoft), but falls behind Avira and ESET-NOD32 in terms of defense against adversarial malware examples (while noting that the attacker knows our features but not that of the scanners.)

\end{itemize}

To the best of our knowledge, this is the first principled adversarial training framework for malware detection. We have made our code publicly available at \url{https://github.com/deqangss/pad4amd}.

\noindent
\textbf{Paper outline}. 
Section \ref{sec:background-knowledge} reviews some background knowledge. Section \ref{sec:pad} describes the framework of principled adversarial malware detection. Section \ref{sec:methodology} presents a defense method instantiated from the framework. Section \ref{sec:theory-ana} analyzes the proposed method. Section \ref{sec:exp} presents our experiments
and results. Section \ref{sec:related-work} discusses related prior studies. Section \ref{sec:conclusion} concludes the paper.

\section{Background Knowledge} \label{sec:background-knowledge}

\noindent{\bf Notations}. The main notations are summarized as follows:
\begin{itemize}[leftmargin=*]
    \item {\bf Input space}: Let $\mathcal{Z}$ be the software space (i.e., problem space), and $z\in\mathcal{Z}$ be an example.

    \item {\bf Feature extraction}: Let $\phi:\mathcal{Z}\to\mathcal{X}$ be a hand-crafted feature extraction function, where $\mathcal{X}\subset\mathbb{R}^d$ is a discrete space and $d$ is the number of dimensions.
    
    \item {\bf Malware detector}: Let $f:\mathcal{Z}\to\mathcal{Y}$ map $z\in \mathcal{Z}$ to label space $\mathcal{Y}=\{0,1\}$, where ``0'' (``1'') means software example $z$ is benign (malicious). 
    
    \item {\bf Adversary detector}: Let $g:\mathcal{Z}\to\mathbb{R}$ map $z\in\mathcal{Z}$ to a real-valued confidence score such that $g(z)>\tau$ means $z$ is adversarial and non-adversarial otherwise, where $\tau$ is a pre-determined threshold. 

    \item {\bf Learning model}: We extend malware detector $f$ with a secondary detector $g$ for identifying adversarial examples. Suppose $f$ uses an ML model $\varphi_\theta:\mathcal{X}\to\mathcal{Y}$ with $f(\cdot)=\varphi_\theta(\phi(\cdot))$ and $g$ uses an ML model $\psi_\vartheta$ with $g(\cdot)=\psi_\vartheta(\phi(\cdot))$, where $\theta,\vartheta$ are learnable parameter sets.

    \item {\bf Loss function for model}: $\ce(\theta,\mathbf{x},y)$ and $\de(\vartheta,\mathbf{x})$ are the loss functions for learning models $\varphi_\theta$ and $\psi_\vartheta$, respectively.

    \item {\bf Criterion for attack}: Let $\mathcal{J}(\mathbf{x})$ justify an adversarial example, which is based on $\ce$ or a combination of $\ce$ and $\psi_\vartheta$ depending on the context. 
    
    \item {\bf Training dataset}: Let $D_z$ denote the training dataset that contains example-label pairs. Furthermore, we have $D_\mathbf{x}=\{(\mathbf{x},y):\mathbf{x}=\phi(z),(z,y)\in D_z\}$ in the feature space, which is sampled from a unknown distribution $\mathbb{P}$.
    
    \item {\bf Adversarial example}: Adversarial malware example $z'=z+\delta_z$ misleads $f$ and $g$ simultaneously (if $g$ is present), where $\delta_z$ is a set of manipulations (e.g., string injection). Correspondingly, let $\mathbf{x}'=\phi(z')$ denote the adversarial example in the feature space with $\delta_\mathbf{x}=\mathbf{x}'-\mathbf{x}$.
\end{itemize}
 
\subsection{ML-based Malware \& Adversary Detection}

We treat malware detection as binary classification. In addition, an auxiliary ML model is used to detect adversarial examples \cite{carlini2017adversarial,DBLP:conf/eurosp/SperlKCLB20,DBLP:journals/corr/abs-2106-15023}.
\begin{figure}[!htbp]
	\centering
	\includegraphics[width=0.32\textwidth]{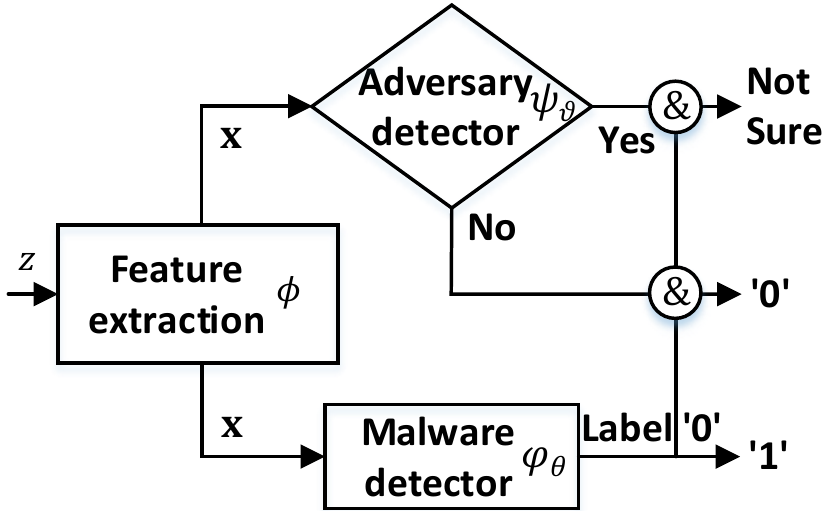}
	\caption{Integrated malware and adversary detectors.}
	\label{fig:fw-wf}
\end{figure}

Fig.\ref{fig:fw-wf} illustrates the workflow of integrated malware and adversary detectors. Formally, given an example-label pair $(z,y)$, an malware detector $f=\varphi_\theta\circ\phi$, and an adversary detector $g=\psi_\vartheta\circ\phi$, the prediction is 
\begin{equation}
	\text{predict}(z) = \left \{
\begin{aligned}
&f(z), && \text{if}~ g(z)\leq\tau \\
&1, && \text{if}~ (g(z)>\tau) \land (f(z)=1) \\
&\text{not sure}, && \text{if}~(g(z)>\tau) \land (f(z)=0). 
\end{aligned} \right. \label{eq:predict}
\end{equation} 
Intuitively, $g$ ``protects'' $f$ against $z$ when $g(z)>\tau$ and $f(z)=1$; ``not sure'' 
abstains $f$ from classification, calling for further analysis.
Hence, a small portion of normal (i.e., unperturbed) examples will be flagged by $g$. Detectors $\varphi_\theta$ and $\psi_\vartheta$ are learned from training dataset $D_\mathbf{x}$ by minimizing:
\begin{eqnarray}
	\min_{\theta,\vartheta}\mathbb{E}_{(z,y)\in D_\mathbf{x}}\left[\ce(\theta,\mathbf{x},y)+\de(\vartheta,\mathbf{x})\right], \label{eq:loss1}
\end{eqnarray}
where $\ce$ is the loss for learning $\varphi_\theta$ (e.g., cross-entropy \cite{lecun2015deep}) and $\de$ is for learning $\psi_\vartheta$ (which is specified according to the downstream un-supervised task).

\subsection{Evasion Attacks} \label{sec:aea}

The evasion attack can be manifested in both the problem space and the feature space \cite{pierazzi2019intriguing,li2020adversarial}. In the problem space, an attacker perturbs a malware example $z$ to $z'$ to evade both $f$ and $g$ (if $g$ is present). Consequently, we have $\mathbf{x}=\phi(z)$ and  $\mathbf{x}'=\phi(z')$ in the feature space. Owing to the non-differentiable nature of $\phi$,
previous studies suggest $\mathbf{x}'$ obeys a ``box'' constraint $\underline{\mathbf{u}}\preceq\mathbf{x}'\preceq\overline{\mathbf{u}}$ (i.e., $\mathbf{x}'\in[\underline{\mathbf{u}}, \overline{\mathbf{u}}]$) corresponding to file manipulations, where ``$\preceq$'' is the element-wise ``no bigger than'' relation between vectors \cite{al2018adversarial,demontis2018intriguing,li2020adversarial}. The evasion attack in the feature space can be described as: 
\begin{align}
    \mathbf{x'}=\mathbf{x} +& \delta_\mathbf{x}, \\\label{eq:evs-obj}
    \text{s.t.}~ (\varphi_\theta(\mathbf{x}')=0) \land (\psi_\vartheta(\mathbf{x}')\leq\tau) &\land (\mathbf{x}'\in \mathcal{X}) \land (\mathbf{x'}\in[\underline{\mathbf{u}},\overline{\mathbf{u}}]). \nonumber
\end{align}
Since $\psi_\vartheta$ may not be present, in what follows we review former attack methods as they are, introduce the existing strategies to target both $\varphi_\theta$ and $\psi_\vartheta$, and bring in the current inverse-mapping solutions (i.e., mapping feature perturbations to the problem space; see $\phi^{-1}$ in Figure \ref{fig:limit}).

\subsubsection{Evasion Attack Methods} \label{sec:attack-method}

\noindent{\bf Mimicry attack}. 
A mimicry attacker \cite{Biggio:Evasion,rndic_laskov,7917369} perturbs a malware example to mimic a benign application as much as possible. The attacker does not need to know the internal knowledge of models, but can query them. In such case, the attacker uses $N_{ben}~(N_{ben}\geq 1)$ benign examples separately to guide manipulation, resulting in $N_{ben}$ perturbed examples, of which the one bypassing the victim is used.

\noindent{\bf Grosse attack}. This attack \cite{grosse2017adversarial} perturbs ``sensitive'' features to evade detection,
where sensitivity is quantified by the gradients of the DNN's \emph{softmax} output with respect to the input. A larger gradient value means higher sensitivity. This attack adds features to an original example.

\noindent{\bf FGSM attack}. This attack is introduced in the context of image classification \cite{goodfellow6572explaining} and later adapted to malware detection \cite{YeFOSINT-SI-2018,al2018adversarial}. It perturbs a feature vector $\mathbf{x}$ in the direction of the $\ell_\infty$ norm of gradients (i.e., sign operation) of the loss function with respect to the input:
$$\mathbf{x}' = \round\left(\proj_{[\underline{\mathbf{u}},\overline{\mathbf{u}}]}\left(\mathbf{x} + \varepsilon\cdot \sign(\nabla_\mathbf{x}\ce(\theta,\mathbf{x}, 1))\right)\right),$$ 
where $\varepsilon>0$ is the step size, $\proj_{[\underline{\mathbf{u}},\overline{\mathbf{u}}]}$ projects an input into $[\underline{\mathbf{u}},\overline{\mathbf{u}}]$, and $\round$ is an element-wise operation which returns an integer-valued vector. 

\noindent{\bf Bit Gradient Ascent (BGA)} and \textbf{Bit Coordinate Ascent (BCA) attacks}. Both attacks \cite{al2018adversarial} iterate multiple times. In each iteration, BGA increases the feature value from `0' to `1' (i.e., adding a feature) if the corresponding partial derivative of the loss function with respect to the input is greater than or equal to the gradient's $\ell_2$ norm divided by $\sqrt{d}$, where $d$ is the input dimension. By contrast, at each iteration, BCA flips the value of the feature from `0' to `1' corresponding to the max gradient of the loss function with respect to the input. Technically speaking, given a malware instance-label pair $(\mathbf{x},y)$, the attacker needs to solve 
\begin{equation}
\max \limits_{\mathbf{x}' \in [\underline{\mathbf{u}},\overline{\mathbf{u}}]}\ce(\theta,\mathbf{x}', 1)~\text{s.t.,}~ \mathbf{x}'\in\mathcal{X} \nonumber.
\end{equation}

\noindent{\bf Projected Gradient Descent (PGD) attack}. It is proposed in the image classification context \cite{madry2017towards} and adapted to malware detection by accommodating the discrete input space \cite{DBLP:journals/tnse/LiLYX21}. The attack permits both feature addition and removal while retaining malicious functionalities, giving more freedom to the attacker. It finds perturbations via an iterative process with the initial perturbation as a zero vector:
\begin{equation}
\delta^{(t+1)}_\mathbf{x}=\proj_{[\underline{\mathbf{u}}-\mathbf{x},\overline{\mathbf{u}}-\mathbf{x}]}\left(\delta^{(t)}_\mathbf{x}+\alpha\nabla_{\delta_\mathbf{x}}\ce(\theta,\mathbf{x} + \delta^{(t)}_\mathbf{x}, 1)\right) \label{eq:pgd-basic}
\end{equation}
where $t$ is the iteration, $\alpha>0$ is the step size, $\proj_{[\underline{\mathbf{u}}-\mathbf{x},\overline{\mathbf{u}}-\mathbf{x}]}$ projects perturbations into the predetermined space $[\underline{\mathbf{u}}-\mathbf{x},\overline{\mathbf{u}}-\mathbf{x}]$, and $\nabla_{\delta_\mathbf{x}}$ denotes the derivative of loss function $\ce$ with respect to $\delta_\mathbf{x}^{(t)}$. Since the derivative may be too small to make the attack progress, researchers normalize $\nabla_{\delta_\mathbf{x}}\ce$ in the direction of $\ell_1$, $\ell_2$, or $\ell_\infty$ norm \cite{madry2017towards,li2020enhancing}:
\begin{equation}
\mathbf{e}_p=\argmax\limits_{\|\mathbf{e}\|_p= 1}\langle\nabla_{\delta_\mathbf{x}}\ce(\theta,\mathbf{x} + \delta^{(t)}_\mathbf{x}, 1),\mathbf{e}\rangle, \nonumber
\end{equation}
where $\mathbf{e}_p$ is the direction of interest, $\langle\cdot,\cdot\rangle$ denotes the inner product, and $p=1,2,\infty$. Adjusting $p$ leads to PGD-$\ell_1$, PGD-$\ell_2$, and PGD-$\ell_\infty$ attacks, respectively. After the loop, an extra operation is conducted to discretize the real-valued vector. For example, $\round(\mathbf{a})$ returns the vector closest to $\mathbf{a}$ in terms of $\ell_1$ norm distance.

\noindent{\bf Mixture of Attacks (MA)}. This attack \cite{li2020adversarial} organizes a mixture of attack methods upon a set of manipulations as large as possible. Two MA strategies are used: the ``max'' strategy selects the adversarial example generated by several attacks via maximizing a criterion (e.g., classifier's loss function $\ce$); the iterating ``max'' strategy puts the resulting example from the last iteration as the new starting point, where the initial point is $\mathbf{x}$. The iteration can promote attack effectiveness because of the non-concave ML model.

\subsubsection{Oblivious vs. Adaptive Attacks} \label{sec:bk-adaptive-attack}

The attacks mentioned above do not consider the adversary detector $g$, meaning that they degrade to oblivious attacks when $g$ is present and would be less effective. By contrast, an adaptive attacker is conscious of the presence of $g(\cdot)=\psi_\vartheta(\phi_\theta(\cdot))$, leading to an additional constraint $\psi_\vartheta(\mathbf{x}')\leq \tau$ for a given feature representation vector $\mathbf{x}$:
\begin{equation}
\max\limits_{\mathbf{x'}\in[\mathbf{\underline{u}}, \mathbf{\overline{u}}]}\ce(\theta, \mathbf{x}', 1)~~\text{s.t.,}~~(\psi_\vartheta(\mathbf{x}')\leq \tau) \land (\mathbf{x}'\in\mathcal{X}),
\end{equation}
where we substitute $\varphi(\mathbf{x}')=0$ with maximizing $\ce(\theta,\mathbf{x}',1)$ owing to the aforementioned issue of non-differentiability.

However, $\psi_\vartheta$ may not be affine (e.g., linear transformation), meaning that the effective projection strategies used in PGD are not applicable anymore. In order to deal with $\psi_\vartheta(\mathbf{x}')\leq \tau$, researchers suggest three approaches: (i) Use gradient-based methods to cope with
\begin{equation}
\max\limits_{\mathbf{x'}\in[\mathbf{\underline{u}}, \mathbf{\overline{u}}]}[\ce(\theta, \mathbf{x}', 1)-\lambda\psi_\vartheta( \mathbf{x}')], \label{eq:lag}
\end{equation}
where $\lambda\geq0$ is a penalty factor for modulating the importance between the two items \cite{DBLP:journals/corr/abs-2106-15023}. 
(ii) Maximize $\ce(\theta, \mathbf{x}', 1)$ and $-\psi_\vartheta(\mathbf{x}')$ alternatively as it is notoriously difficult to set $\lambda$ properly \cite{carlini2017adversarial}. (iii) Maximize $\ce(\theta, \mathbf{x}', 1)$ and $-\psi_\vartheta(\mathbf{x}')$ in an orthogonal manner \cite{DBLP:journals/corr/abs-2106-15023}, where ``orthogonal'' means eliminating the mutual interaction between $\ce$ and $\psi$ from the geometrical perspective. For example, the attack perturbs $\mathbf{x}$ in the direction orthogonal to the direction of the gradients of $-\psi_\vartheta$, which is in the direction of the gradients of $\ce$, to make it evade $\varphi_\theta$ but not react $\psi_\vartheta$. Likewise, the attack alters the orthogonal direction to evade $\psi_\vartheta$ but not react $\varphi_\theta$.

\subsubsection{The Inverse Feature-Mapping Problem} \label{sec:bk-ps-attack}

There is a gap between the feature space and the problem (i.e., software) space. Since feature extraction $\phi$ is non-differentiable, gradient-based methods cannot produce end-to-end adversarial examples. Moreover, $\phi^{-1}$ cannot be derived analytically due to ``side-effect'' features, which cause a non-bijective $\phi$ \cite{pierazzi2019intriguing}. 

To fill the gap, Srndic and Laskov \cite{rndic_laskov} propose directly mapping the perturbation vector $\delta_\mathbf{x}$ to the problem space, leading to $\phi(\tilde{\phi}^{-1}(\mathbf{x}'))\neq\mathbf{x}'$, where $\tilde{\phi}^{-1}$ is an approximation of  $\phi^{-1}$. Nevertheless, experiments demonstrate that the attacks can evade anti-malware scanners.
Li and Li \cite{li2020adversarial} leverage this strategy to produce adversarial Android examples. Researchers also attempt to align $\delta_z$ with $\delta_\mathbf{x}$ as much as possible. For example, Pierazzi et al. \cite{pierazzi2019intriguing} collect a set of manipulations from gadgets of benign applications and implement ones that mostly align with the gradients of the loss function with respect to the input. Zhao et al. \cite{DBLP:conf/ccs/ZhaoZZZZLYYL21} propose incorporating gradient-based methods with Reinforcement Learning (RL), of which an RL-based model assists in obtaining manipulations in the problem space under the guidance of gradient information. In addition, black-box attack methods (without knowing the internals of the detector) directly manipulate malware examples, which avoids the inverse feature-mapping procedure \cite{DBLP:journals/tifs/DemetrioBLRA21}. 

In this paper, we use an approximate $\tilde{\phi}^{-1}$ (implementation details are deferred to the supplementary material).
This strategy relatively eases the attack implementation and besides, our preliminary experiments show the ``side-effect'' features cannot decline the attack effectiveness notably in the refined Drebin feature space \cite{Daniel:NDSS}.

\subsection{Adversarial Training}

Adversarial training augments training dataset with adversarial examples by solving a min-max optimization problem \cite{xu2014evasion,DBLP:conf/eisic/ChenYB17,grosse2017adversarial,madry2017towards,DBLP:conf/iclr/TramerKPGBM18,al2018adversarial}, as shown in Figure \ref{fig:limit}. The inner maximization looks for adversarial examples, while the outer minimization optimizes the model parameters upon the updated training dataset. Formally, given the training dataset $D_\mathbf{x}$, we have
\begin{align}
\min\limits_{\theta}\mathbb{E}_{(\mathbf{x},y)\in D_\mathbf{x}}&\left[\ce(\theta,\mathbf{x},y)+\beta\max\limits_{\mathbf{x}'\in[\mathbf{\underline{u}}, \mathbf{\overline{u}}]}\ce(\theta,\mathbf{x}',1)\right], \label{eq:min-max-basic}\\ ~\text{s.t.,}~&(\mathbf{x}'=\mathbf{x}+\delta_\mathbf{x}) \land (\mathbf{x}'\in\mathcal{X}) \nonumber
\end{align}
where $\beta\geq0$ is used to balance between detection accuracy and robustness, while noting that only malware representations play a role in the inner maximization.

However, Owing to the NP-hard nature of searching discrete perturbations \cite{DBLP:conf/iclr/SinhaND18}, the adversarial training methods incorporate the (approximate) optimal attack without convergence guaranteed  \cite{al2018adversarial,DBLP:journals/tnse/LiLYX21}, making their robustness questionable. For example, Al-Dujaili et al. \cite{al2018adversarial} approximate the inner maximization via four types of attack algorithms, while showing that a hardened model cannot mitigate the attacks that are absent in the training phase. Furthermore, a mixture of attacks is used to instantiate the framework of adversarial training \cite{li2020adversarial}. Though the enhanced model can resist a range of attacks, it is still vulnerable to a mixture of attacks with iterative ``max'' strategy (more iterations are used, see Section \ref{sec:attack-method}). Thereby, it remains a question of rigorously uncovering the robustness of adversarial training. 
\section{The PAD Framework} \label{sec:pad}

PAD aims to reshape adversarial training by rendering the inner maximization solvable analytically, with the establishment of a concave loss w.r.t. the input.
The core idea is a learnable convex distance metric, with which distribution-wise perturbations can be measured, leading to a constraint attack problem, whose Lagrange relaxation is concave (owing to the maximization) at reasonable circumstances.

\subsection{Threat Model and Design Objective}

\noindent{\bf Threat model}. Given a malware example $z$, malware detector $f$, and adversary detector $g$ (if $g$ exists), an attacker modifies $z$ by searching for a set of non-functional instructions $\delta_z$ upon knowledge of detectors. 
Guided by Kerckhoff's principle that defense should not count on ``security by obscurity'' \cite{pierazzi2019intriguing}, we consider {\em white-box} attacks, meaning that the attacker has full knowledge of $f$ and $g$. For assessing robustness of defense models, we use {\em grey-box} attacks where the attacker knows $f$ but not $g$ (i.e., oblivious attack \cite{pang2018towards}), or knows features used by $f$ and $g$.

\noindent{\bf Design Objective}. As aforementioned, PAD is rooted in adversarial training. We propose incorporating $f$ with an adversary detector $g(\cdot)=\psi_\vartheta(\phi(\cdot))$, where $\psi_\vartheta$ is the convex measurement. To this end, given a malware instance-label pair $(\mathbf{x},y)$ where $\mathbf{x}=\phi(z)$ and $y=1$, we mislead both $\phi_\theta$ and $\psi_\vartheta$ by perturbing $\mathbf{x}$ to $\mathbf{x}'$, upon which we optimize model parameters. Formally, PAD uses objective
\begin{subequations}
\begin{equation}
    \begin{split}
\min\limits_{\theta,\vartheta}\mathbb{E}_{(\mathbf{x},y)\in D_\mathbf{x}}&\Bigr[\ce(\theta,\mathbf{x},y)+\de(\vartheta,\mathbf{x})\\
        &+\beta_1\ce(\theta,\mathbf{x}',1)+\beta_2\de(\vartheta,\mathbf{x}')\Bigr], \\
    \end{split} \label{eq:pad-min}
\end{equation}
where
\begin{equation} 
\begin{split}
\mathbf{x}':=\max\limits_{\mathbf{x}'\in[\mathbf{\underline{u}},\mathbf{\overline{u}}]}&\left[\ce(\theta,\mathbf{x}',1)-\lambda\psi_\vartheta(\mathbf{x}')\right], \\
    \text{s.t.}~&(\mathbf{x}+\delta_\mathbf{x}=\mathbf{x}')\land(\mathbf{x}'\in\mathcal{X}), 
    \end{split}  \label{eq:pad-max}
\end{equation}
\end{subequations}
$\beta_1$ and $\beta_2$ weight the robustness against $\mathbf{x}'$, and $\lambda\geq 0$ is a penalty factor. This formulation has three merits:

\begin{itemize}
\item[({\bf i})] {\bf Manipulations in the feature space}: Eq.\eqref{eq:pad-max} says that we can search feature perturbations without doing inverse-feature mapping, implying shorter training time. The remaining issue is 
whether the attained robustness can propagate to the problem space or not; we will answer this affirmatively later (Section \ref{sec:design-rat}).

\item[({\bf ii})] {\bf Box-constraint manipulation}: Eq.\eqref{eq:pad-max} says the attacker can search $\mathbf{x}' \in [\mathbf{\underline{u}},\mathbf{\overline{u}}]$ without considering any norm-type constraints, meaning that the defender should resist semantics-based attacks rather than small perturbations.

\item[({\bf iii})] {\bf Continuous perturbation may be enough}: It is NP-hard to search optimal discrete perturbations even for attacking linear models \cite{DBLP:conf/iclr/SinhaND18}. Eq.\eqref{eq:pad-max} contains an auxiliary detector $\psi_\vartheta$, which can treat continuous perturbations in the range of $[\mathbf{\underline{u}},\mathbf{\overline{u}}]$ as anomalies while relaxing the discrete space $\mathcal{X}$ constraint in the training phase.
\end{itemize}
The preceding formulation suggests that we can use the efficient gradient-based optimization methods to solve Eq.\ref{eq:pad-min} and Eq.\ref{eq:pad-max}. In what follows we explain this intuitively and why a smooth $\ce$ is necessary (e.g., for setting a proper $\lambda$, which is challenging as discussed in Section \ref{sec:bk-adaptive-attack}).

\subsection{Design Rationale} \label{sec:design-rat}

\noindent{\bf Bridge robustness gap}. Recall that adversarial training is performed in the feature space while adversarial malware is in the problem space. Moreover, the perturbed instance $\mathbf{x}'$ used for training may not be mapped back to any $z'\in\mathcal{Z}$, because ``side-effect'' features incur interdependent perturbations (i.e., modifying one feature would require to changing some of the others so as to preserve the functionality or semantics) \cite{xu2014evasion,pierazzi2019intriguing}. This leaves a ``seam'' for attackers when a non-bijective feature extraction $\phi$ is used. Indeed, the interdependence of features is reminiscent of the structural graph representation. This prompts us to propose using a directed graph to denote the relation: modifiable features are represented by graph nodes and their interdependencies are represented by graph edges. As a result, an asymmetrical adjacent matrix (i.e., directed graph) can be used to represent the edge information, which however shrinks the manipulations in the space of $[\mathbf{\underline{u}},\mathbf{\overline{u}}]$.

Suppose for a given malware representation $\mathbf{x}$, we can obtain the optimal adversarial example in the feature space w.r.t. criterion $\mathcal{J}$. With or without considering the adjacent matrix constraint, we get the optimum $\tilde{\mathbf{x}}^\ast,\mathbf{x}^\ast\in[\mathbf{\underline{u}},\mathbf{\overline{u}}]$ with the criterion results satisfying 
$\mathcal{J}(\tilde{\mathbf{x}}^\ast)=\ce(\theta,\tilde{\mathbf{x}}^\ast,1)-\lambda\psi_\vartheta(\tilde{\mathbf{x}}^\ast)\leq\mathcal{J}(\mathbf{x}^\ast)$.
This in turn demonstrates that if an adversarial training model can resist $\mathbf{x}^\ast$, so can $\tilde{\mathbf{x}}^\ast$ (otherwise, it contradicts the meaning of optimization).

Therefore, we relax the attack constraint related to ``side-effect'' features and conduct adversarial training in the feature space so that the robustness can propagate to the problem space, at the potential price of sacrificing the detection accuracy because more perturbations are considered.

\noindent{\bf Defense against distribution-wise perturbation}. We explain Eq.\eqref{eq:pad-max} via distributionally robust optimization \cite{DBLP:conf/iclr/SinhaND18}. We establish a point-wise metric $C(\cdot,\mathbf{x})=\max\{0, \psi_\vartheta(\cdot)-\tau\}$ to measure how far a point, say $\mathbf{x}'$, to a population, while noting that other measures are also suitable as long as they are convex and continuous. 
A large portion (e.g., 95\%) of training examples will have $\psi_\vartheta(\mathbf{x})\leq\tau$. Based on $C$, we have a Wasserstein distance \cite{villani2021topics}:
$$W(\mathbb{P}',\mathbb{P}):=\inf\limits_{\Gamma}\left\{\int C(\mathbf{x}',\mathbf{x})d\Gamma(\mathbf{x}',\mathbf{x}):\Gamma\in\prod(\mathbb{P}',\mathbb{P})\right\}$$
where $\prod(\mathbb{P}',\mathbb{P})$ is the joint distribution of $\mathbb{P}'$ and $\mathbb{P}$ with marginal as $\mathbb{P}'$ and $\mathbb{P}$ w.r.t. to the first and second argument, respectively. That is, the Wasserstein distance gets the infimum from a set of expectations. Because points $\mathbf{x},\mathbf{x}'$ are in discrete space $\mathcal{X}$, the integral form in the definition is a linear summation.
We aim to build a malware detector $f$ that can classify $\mathbf{x}'$ correctly with $\mathbf{x}'\sim\mathbb{P}'$ and $W(\mathbb{P}',\mathbb{P})\leq0$. Formally, the corresponding inner maximization is
\begin{equation}
\max\limits_{\mathbb{P}':W(\mathbb{P}',\mathbb{P})\leq0}\mathbb{E}_{\mathbf{x}'\sim\mathbb{P}'}\ce(\theta,\mathbf{x}',1). \label{eq:wd-max}
\end{equation}
It is non-trivial to tackle $W(\mathbb{P}',\mathbb{P})$ directly owing to massive vectors on $\mathcal{X}\times\mathcal{X}$. Instead, the dual problem is used:
\begin{proposition} \label{prop:framework}
Given a continuous function $\ce$, and continuous and convex distance $C(\cdot, \mathbf{x})=\max\{0, \psi_\vartheta(\cdot)-\tau\}$ with $\mathbf{x}\sim\mathbb{P}$, the dual problem of Eq.\eqref{eq:wd-max} is 
    $$\inf_{\lambda}\Bigl\{\mathbb{E}_{\mathbf{x}\sim\mathbb{P}}\max\limits_{\mathbf{x}'}(\ce(\theta,\mathbf{x}',1)-\lambda\psi_\vartheta(\mathbf{x}') + \lambda\tau):\lambda\geq 0\Bigl\},$$ where $\mathbf{x}+\delta_\mathbf{x}=\mathbf{x}'\in\mathcal{X}$, $\mathbf{x}'\sim\mathbb{P}'$ and $\psi_\vartheta(\mathbf{x}')\geq \tau$.
\end{proposition}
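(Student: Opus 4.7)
The plan is to derive Proposition~\ref{prop:framework} by the standard Wasserstein distributionally-robust duality argument of Sinha--Namkoong--Duchi, adapted to a zero-radius Wasserstein ball and a cost that depends only on the perturbed point $\mathbf{x}'$. First I would rewrite Eq.\eqref{eq:wd-max} in terms of the coupling $\Gamma$ rather than the distribution $\mathbb{P}'$: since $W(\mathbb{P}',\mathbb{P})\le 0$ simply asks for the existence of some $\Gamma\in\prod(\mathbb{P}',\mathbb{P})$ with $\int C\,d\Gamma\le 0$, and since the objective depends on $\mathbb{P}'$ only through the first marginal of $\Gamma$, the primal is equivalent to
\[
\sup_{\Gamma:\,\Gamma_{2}=\mathbb{P},\ \int C\,d\Gamma\le 0}\ \int \ce(\theta,\mathbf{x}',1)\,d\Gamma(\mathbf{x}',\mathbf{x}),
\]
where $\Gamma_{2}$ denotes the second marginal. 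This removes the nested infimum that made $W$ awkward to work with.

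The next step is classical Lagrangian relaxation: attach a multiplier $\lambda\ge 0$ to the constraint $\int C\,d\Gamma\le 0$ and form
\[
\mathcal{L}(\Gamma,\lambda)=\int\bigl[\ce(\theta,\mathbf{x}',1)-\lambda C(\mathbf{x}',\mathbf{x})\bigr]\,d\Gamma.
\]
Disintegrating $\Gamma=\Gamma(\cdot\mid\mathbf{x})\,\mathbb{P}(d\mathbf{x})$ and exploiting the fact that each fibre $\Gamma(\cdot\mid\mathbf{x})$ may be chosen independently, the maximisation of $\mathcal{L}$ over couplings with $\Gamma_{2}=\mathbb{P}$ decouples into pointwise maximisation and yields
\[
\sup_{\Gamma_{2}=\mathbb{P}}\mathcal{L}(\Gamma,\lambda)=\mathbb{E}_{\mathbf{x}\sim\mathbb{P}}\sup_{\mathbf{x}'}\bigl[\ce(\theta,\mathbf{x}',1)-\lambda C(\mathbf{x}',\mathbf{x})\bigr].
\]
Plugging in $C(\mathbf{x}',\mathbf{x})=\max\{0,\psi_\vartheta(\mathbf{x}')-\tau\}$ and restricting (as the proposition does) to the adversarial region $\psi_\vartheta(\mathbf{x}')\ge\tau$, where $C=\psi_\vartheta(\mathbf{x}')-\tau$, the integrand collapses to $\ce(\theta,\mathbf{x}',1)-\lambda\psi_\vartheta(\mathbf{x}')+\lambda\tau$, matching the statement once the outer $\inf_{\lambda\ge 0}$ is taken.

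The hard part, and the only non-mechanical step, is justifying strong duality: $\sup_{\Gamma}\inf_{\lambda\ge 0}\mathcal{L}=\inf_{\lambda\ge 0}\sup_{\Gamma}\mathcal{L}$. Weak duality is automatic; for the reverse inequality I would invoke Sion's minimax theorem on $\mathcal{L}$, which is linear (hence concave) in $\Gamma$ and affine (hence convex) in $\lambda$. The coupling set $\{\Gamma:\Gamma_{2}=\mathbb{P}\}$ is convex and, because perturbations are confined to the bounded box $[\mathbf{\underline{u}},\mathbf{\overline{u}}]$ and $\mathcal{X}$ is discrete, weakly compact; the hypotheses that $\ce$ be continuous and $C(\cdot,\mathbf{x})$ be convex and continuous are precisely the regularity conditions demanded by the minimax theorem. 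The only technical nuisance is the measurability-and-interchange step needed to pass from the coupling formulation to the pointwise dual (Rockafellar's interchange theorem for integral functionals), which is routine in this discrete, box-constrained setting.
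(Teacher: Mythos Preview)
Your proposal is correct and follows essentially the same route as the paper: Lagrangian relaxation at the coupling level followed by pointwise decomposition via disintegration of $\Gamma$, then substitution of the explicit form of $C$. The only difference is in how the $\sup$/$\inf$ swap is justified---the paper invokes Slater's condition, whereas you appeal to Sion's minimax theorem via compactness of the box-constrained coupling set; your choice is arguably the sounder one here, since with $C\ge 0$ the zero-radius constraint $W\le 0$ admits no strictly feasible point in the usual Slater sense.
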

\noindent Its empirical version is Eq.\eqref{eq:pad-max} for fixed $\lambda$ and $\tau$, except for the constraint $[\mathbf{\underline{u}},\mathbf{\overline{u}}]$ handled by clip operation. The proposition says PAD can defend against distributional perturbations. Proof is deferred to the supplementary material. 

\noindent{\bf Concave inner maximization}. 
Given an instance-label pair $(\mathbf{x},y)$, let Taylor expansion approximate $\ce(\theta,\mathbf{x}+\delta_\mathbf{x},y)-\lambda\psi_\vartheta(\mathbf{x}+\delta_\mathbf{x})$:
\begin{align*}
\ce(\theta,\mathbf{x}+&\delta_\mathbf{x},y)-\lambda\psi_\vartheta(\mathbf{x}+\delta_\mathbf{x})\cong\ce-\lambda\psi_\vartheta\\
    +&\langle\nabla_\mathbf{x}(\ce-\lambda\psi_\vartheta),\delta_\mathbf{x}\rangle +\frac{1}{2} \delta_\mathbf{x}^\top\nabla^2_\mathbf{x}(\ce-\lambda\psi_\vartheta)\delta_\mathbf{x}.
\end{align*}
where $\ce-\lambda\psi_\vartheta$ denotes $\ce(\theta,\mathbf{x},y)-\lambda\psi_\vartheta(\mathbf{x})$ for short.
The insight is that if (i) the values of the entities in $\nabla_\mathbf{x}\ce$ are finite (i.e., smoothness \cite{DBLP:conf/iclr/SinhaND18}) when $\mathbf{x}\in[\mathbf{\underline{u}},\mathbf{\overline{u}}]$, and (ii) $\nabla_\mathbf{x}\psi_\vartheta>0$ (i.e., strongly convex), then we can make $\ce-\lambda\psi_\vartheta$ concave by tweaking $\lambda$; this eases the inner maximization.   

\begin{figure}[t]
    \centering
    \begin{tabular}{l l}
        \hspace{-10pt}\parbox{1.65in}{\includegraphics[width=1.65in,height=0.7in]{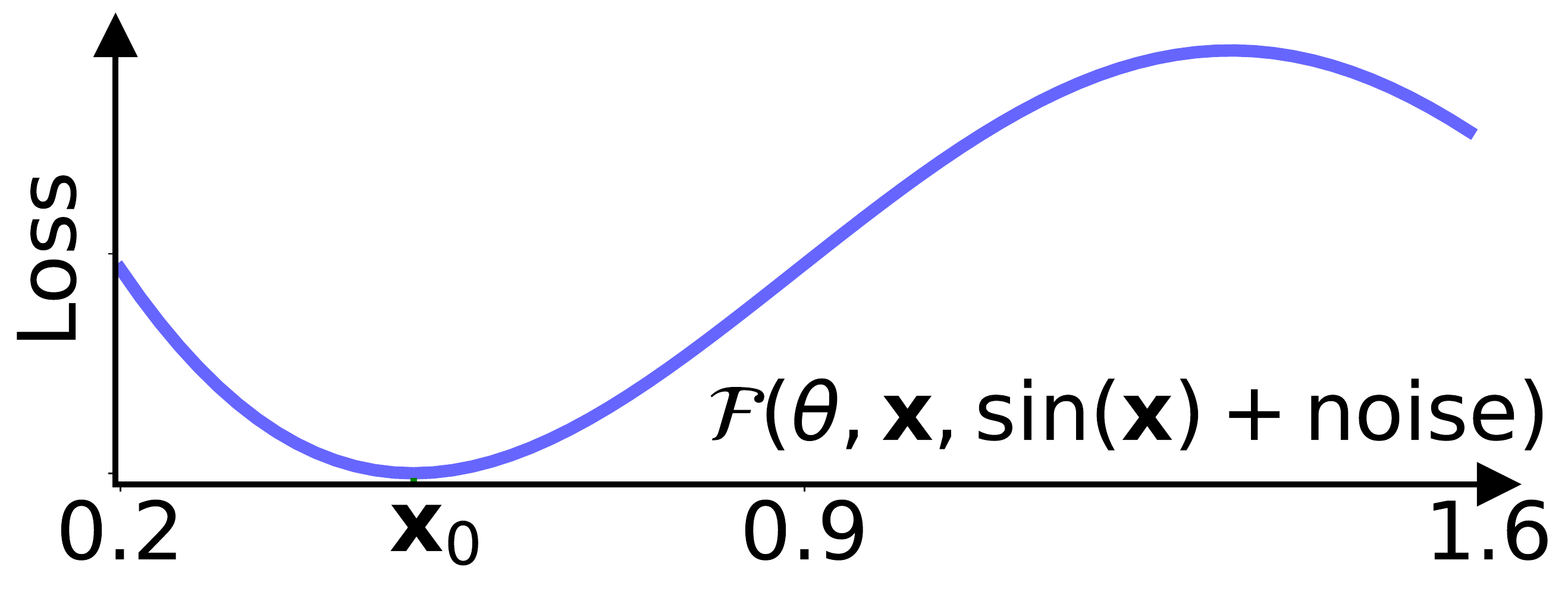}\\ \includegraphics[width=1.65in,height=0.7in]{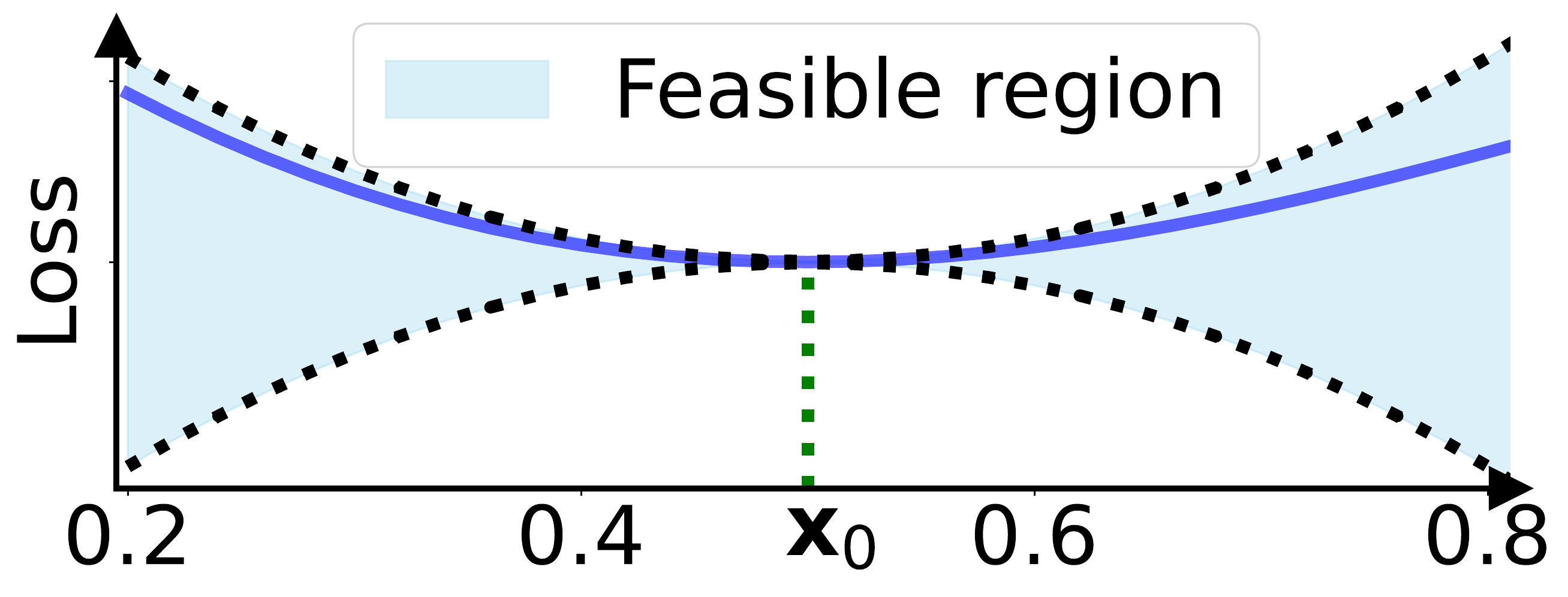}} &\hspace{-15pt} \parbox{1.7in}{\includegraphics[width=1.8in,height=1.45in]{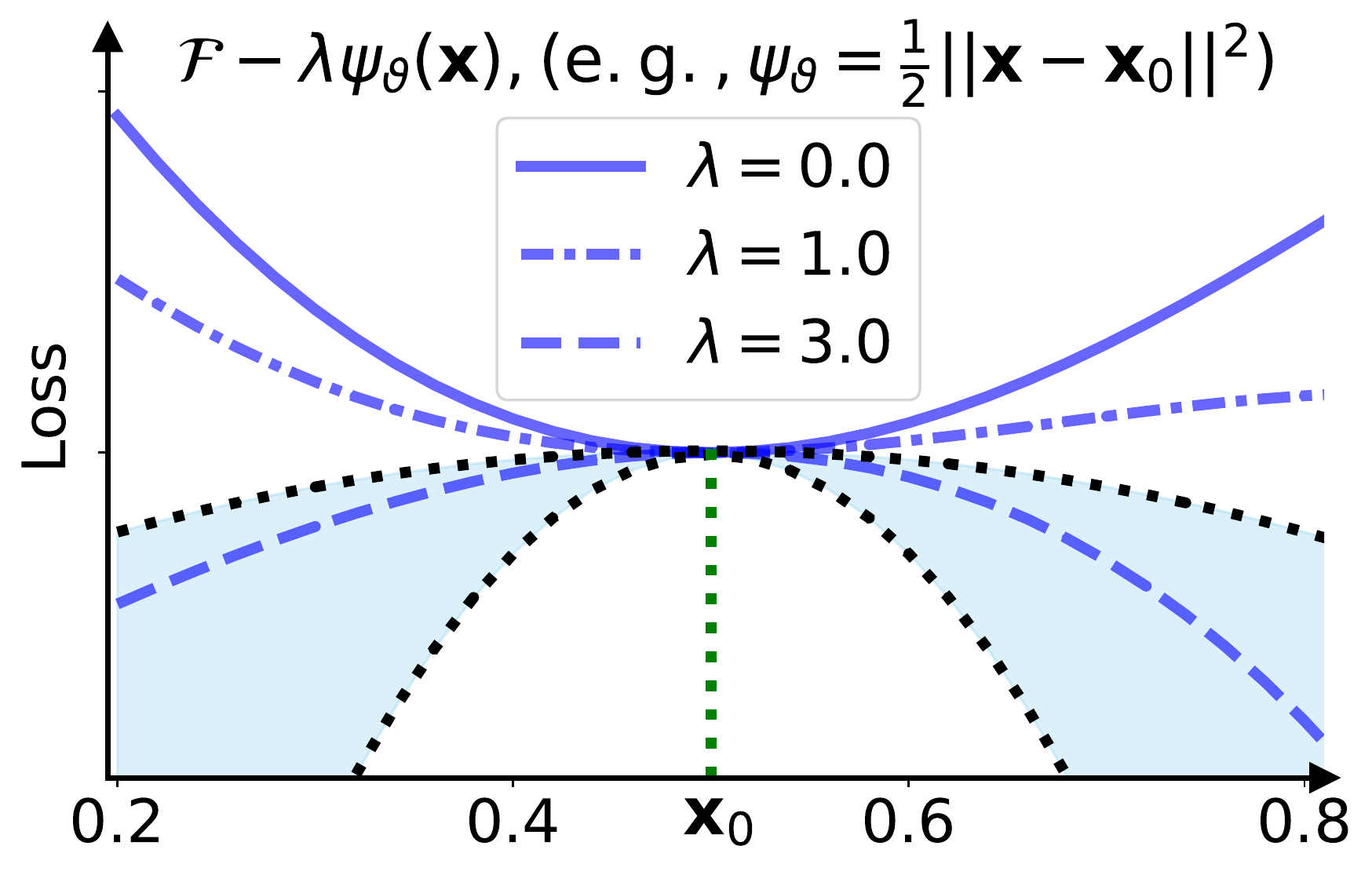}}\\
    \end{tabular}
\caption{An example showing how the loss changes under perturbations when $\ce$ is smooth (feasible region in the bottom-left figure), making $\ce-\lambda\psi_\vartheta$ strongly convex (feasible region in the rightmost figure) at $\mathbf{x}_0$ when $\lambda=3.0$.}
    \label{fig:fw-mtv}
\end{figure}

Figure \ref{fig:fw-mtv} illustrates the idea behind the design, by using a smoothed DNN model to fit the noising $\sin$ function (top-left figure). Owing to the smoothness of $\varphi_\theta$ and $\ce$ (bottom-left figure), we transform the loss function to a concave function by incorporating a convex $\psi_\vartheta$. The concavity is achieved gradually by raising $\lambda$, along with the feasible region changed, as shown in the right-hand figure.
In the course of adjusting $\lambda$, there are three possible scenarios \cite{DBLP:journals/corr/abs-2106-15023}: (i) $\lambda$ is large enough, leading to a concave inner maximization. (ii) A proper $\lambda$ may result in a linear model, which would be rare because of the difference between $\varphi_\theta$ and $\psi_\vartheta$. (iii) $\lambda$ is so small that the inner maximization is still a non-concave and nonlinear problem, which is true as former heuristic adversarial training. In summary, we propose enhancing the robustness of $f$ and $g$, which can reduce the smoothness factor of $f$ \cite{szegedyZSBEGF13,DBLP:conf/cvpr/Moosavi-Dezfooli19} and thus force the attacker to increase $\lambda$ when generating adversarial examples.

Since the interval $\mathbf{x}+\delta_\mathbf{x}\in[\mathbf{\underline{u}},\mathbf{\overline{u}}]$ relaxes the constraint on a discrete input, we can address this issue by treating continuous perturbations as anomalies, as stated earlier. Therefore, instead of heuristically searching for discrete perturbations, we directly use $\psi_\theta$ to detect continuous perturbations without using the discretization trick.

\section{Instantiating the PAD Framework}
\label{sec:methodology}

We instantiate PAD into a model and associated adversarial training algorithm. Though PAD may be applicable to any differentiable ML algorithms, we consider Deep Neural Network (DNN) based malware detection because it has been intensively investigated \cite{DBLP:journals/tnn/YuanHZL19,10.1145/3544968,kolosnjaji2018adversarial,li2020adversarial}. 

\subsection{Adjusting Malware Detector} \label{sec:method:adj-md}

PAD requires the composition of $\ce$ and $\varphi_\theta$ to be smooth. DNN consists of hierarchical layers, each of which typically has a linear mapping followed by a non-linear activation function. Most of these ingredients meet the smoothness condition, except for some activation functions (e.g., Rectified Linear Unit or ReLU \cite{lecun2015deep}) owing to non-differentiability at point zero.   
To handle non-smooth activation functions, researchers suggest using over-parameterized DNNs, which yield semi-smooth loss landscapes \cite{pmlr-v97-allen-zhu19a}. Instead of increasing learnable parameters, we replace ReLU with smooth activation functions (e.g., Exponential Linear Unit or ELU \cite{DBLP:journals/corr/ClevertUH15}). The strategy is simple in the sense that the model architecture is changed slightly and fine-tuning suffices to recover the detection accuracy. Despite this, our preliminary experiments show it slightly reduces the detection accuracy.

\subsection{Adversary Detector} \label{sec:ad-icnn}

We propose a DNN-based $g$ that is also learned from the features extracted by $\phi$. Figure \ref{fig:icnn} shows the architecture of $\psi_\vartheta$, which is an $l$-layer Input Convex Neural Network (ICNN) \cite{DBLP:conf/icml/AmosXK17}. ICNN maps an input $\mathbf{x}$ recursively via non-negative transformations, along with adding a normal transformation on $\mathbf{x}$:
$$
\mathbf{x}^{i+1}=\sigma(\bm{\vartheta}^{i}\mathbf{x}^{i} + \bm\vartheta^{i}_\mathbf{x}\mathbf{x}+{\bf b}^{i}),
$$
where $\vartheta=\{\bm\vartheta^{i},\bm\vartheta^{i}_{\mathbf x},\mathbf{b}^{i}:i=1,\ldots,l\}$, $\bm\vartheta^{i}$ is non-negative, $\bm\vartheta^{i}_{\mathbf x}$ has no such constraint, $\mathbf{x}^{1}=\mathbf{x}$,  $\bm\vartheta^{1}$ is identity matrix, and $\sigma$ is a smooth activation function (e.g., ELU or Sigmoid \cite{DBLP:journals/corr/ClevertUH15}).

\begin{figure}[!tbhp]
	\centering
	\includegraphics[width=0.44\textwidth]{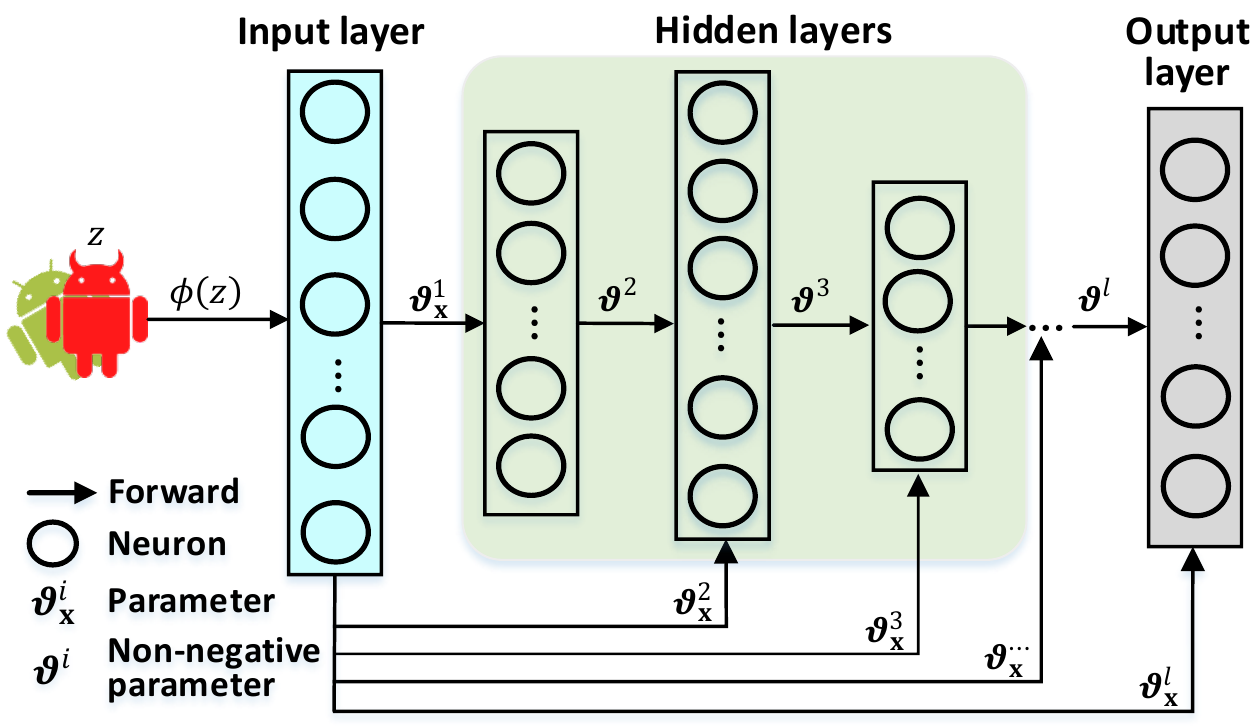}
	\caption{Architecture of an input convex neural network.}
	\label{fig:icnn}
\end{figure}

We cast the adversary detection as a one-class classification task \cite{DBLP:journals/spl/OzaP19}. In the training phase, we perturb examples in $D_\mathbf{x}$ to obtain a set of new examples $\{\mathbf{x}+\delta_\mathbf{x}:(\mathbf{x},y)\in D_\mathbf{x}$\}, where $\delta_\mathbf{x}$ is a vector of salt-and-pepper noises, meaning that at least half of elements in ${\bf x}$ are randomly selected and their values are set as their respective maximum.
Formally, given an example $\mathbf{x}^{1}\in\{\mathbf{x}:(\mathbf{x},y)\in D_\mathbf{x}\}\cup\{\mathbf{x}+\delta_\mathbf{x}:(\mathbf{x},y)\in D_\mathbf{x}\}$, the loss function $\de$ is
$$
\mathcal{G}(\vartheta,\mathbf{x}^{1})={\sf pert}\log(\psi_\vartheta(\mathbf{x}^{1})) + (1-{\sf pert})\log(1 - \psi_\vartheta(\mathbf{x}^{1})),
$$
where ${\sf pert}=0$ indicates $\mathbf{x}^{1}$ is from $D_\mathbf{x}$, and ${\sf pert}=1$ otherwise. In the test phase, we let the input pass through $\psi_\vartheta$ to perform the prediction as shown in Eq.\eqref{eq:predict}.

\subsection{Adversarial Training Algorithm} \label{sec:ata}

For the {\em inner maximization} (Eq.\ref{eq:pad-max}), we propose a mixture of PGD-$\ell_1$, PGD-$\ell_2$ and PGD-$\ell_\infty$ attacks (see Section \ref{sec:attack-method}). The attacks proceed iteratively via ``normalized'' gradients
\begin{equation}
\mathbf{e}_p=\argmax\limits_{\|\mathbf{e}\|_p=1}\langle\nabla_{\delta_\mathbf{x}}(\ce(\theta,\mathbf{x}+\delta_{\mathbf x}^{(t)},1)-\lambda\psi_\vartheta(\mathbf{x}+\delta_{\mathbf x}^{(t)})),\mathbf{e}\rangle, \label{eq:pgd-norm2}
\end{equation}
and perturbation vectors
\begin{equation}
\left\{\delta^{(t+1)}_{\mathbf{x},p}=\proj_{[\mathbf{\underline{u}}-\mathbf{x},\mathbf{\overline{u}}-\mathbf{x}]}\left(\delta^{(t)}_{\mathbf{x},p}+\alpha_p\mathbf{e}_p\right):p\in\{1,2,\infty\}\right\}, \label{eq:pgd}
\end{equation}
where a perturbation vector is chosen by the scoring rule
\begin{equation}
\begin{aligned}
\smash{\delta_\mathbf{x}^{(t+1)}=\argmax_{\delta_{\mathbf{x},p}^{(t+1)}}}\Bigr[&\ce(\theta,\round(\mathbf{x}+\delta_{\mathbf{x},p}^{(t+1)}),1)\\
    &-\lambda\psi_\vartheta(\round(\mathbf{x}+\delta_{\mathbf{x},p}^{(t+1)}))\Bigr] 
\end{aligned}\label{eq:pgd-sel}
\end{equation}
at the $t^\text{th}$ iteration. The $\round$ operation is used because our initial experiments show that it leads to better robustness. Since the goal is to select the best attack in a stepwise fashion, it is termed Stepwise Mixture of Attacks (SMA).

Note that from an attacker's perspective, there are three more steps: (i) We treat the dependencies between features as graphical edges. Since the summation of gradients can measure the importance of a group in the graph \cite{wu2019adversarial}, we accumulate the gradients of the loss function with respect to the ``side-effect'' features and use the resulting gradient to decide whether to modify these features together.
(ii) The $\round$ operation is used to discretize perturbations when the loop is terminated \cite{DBLP:journals/tnse/LiLYX21}. (iii) Map the perturbations back into the problem space.

For the {\em outer minimization} (Eq.\ref{eq:pad-min}), we leverage a Stochastic Gradient Descent (SGD) optimizer, which proceeds iteratively to find the model parameters. 
Basically, SGD samples a batch of $B$ (a positive integer) pairs $\{(\mathbf{x}_i,y_i)\}_{i=1}^B$ from $ D_\mathbf{x}$ and updates the parameters with
\begin{align}
    \theta^{(j+1)}&=&\theta^{(j)}-\gamma\nabla_\theta\frac{1}{B}\smash{\tiny \sum_{i=1}^{B}}\ce(\theta^{(j)},\mathbf{x}_i+\delta_{\mathbf{x}_i}^{(T)},y_i) ~~\text{and}~~\nonumber \\
    \vartheta^{(j+1)}&=&\vartheta^{(j)}-\gamma\nabla_\vartheta\frac{1}{B}{\tiny \sum_{i=1}^{B}}\de(\vartheta^{(j)},\mathbf{x}_i+\delta_{\mathbf{x}_i}^{(T)}), \nonumber
\end{align}
where $j$ is the iteration, $\gamma$ is the learning rate, and $\delta_{\mathbf{x}_i}^{(T)}$ is obtained from Eq.\eqref{eq:pgd-sel} with $T$ loops for perturbing $\mathbf{x}_i$. 
We optimize the model parameters by Eq.\eqref{eq:pad-min}.

\begin{algorithm}[!htbp]
\KwIn{Training set $D_z$, epoch $N$, batch size $B$, factors $\beta_1$, $\beta_2$ and $\lambda$, iteration $T$, and step size $\alpha_p$ for norm $p\in\{1,2,\infty\}$.}

Get $D_\mathbf{x}=\{(\phi(z),y):(z,y)\in D_z\}$ for the given $D_z$;

\For{$j=$ \rm ${1}$ to $N$}{
Sample a mini-batch $\{\mathbf{x}_i, y_i\}_{i=1}^B$ from $D_\mathbf{x}$;
\BlankLine

Apply salt-and-pepper noises to $\{\mathbf{x}_i\}_{i=1}^B$;

\For{$t=$ \rm ${0}$ to $T-1$}{
\For{$p\in\{1,2,\infty\}$}{
    Calculate perturbation $\delta_{\mathbf{x},p}^{(t+1)}$ by Eq.\eqref{eq:pgd-norm2} and Eq.\eqref{eq:pgd} for $\mathbf{x}\in\{\mathbf{x}_i\}_{i=1}^B$ with $y_i=1$;
}

Select $\delta_{\mathbf{x}}^{(t+1)}$ by Eq.\eqref{eq:pgd-sel};
}
\BlankLine

Calculate the adversarial training loss via Eq.\eqref{eq:pad-min};

\BlankLine
Backpropagate the errors for updating $\theta$ and $\vartheta$;
}

\caption{Adversarial training}
\label{alg:adv-train}
\end{algorithm}

Algorithm \ref{alg:adv-train} summarizes a PAD-based adversarial training by incorporating the stepwise mixture of attacks. Given a training set, we preprocess software examples and obtain their feature representations (line 1). At each epoch, we first perturb the feature representations via salt-and-pepper noises (line 4) and then generate adversarial examples with the mixture of attacks (lines 5-10). Using the union of the original examples and their perturbed variants, we learn malware detector $f$ and adversary detector $g$ (lines 11-13). 

\section{Theoretical Analysis} \label{sec:theory-ana}

We analyze effectiveness of the inner maximization and optimization convergence of the outer minimization, which together support robustness of the proposed method. 
As mentioned above, we make an assumption that PAD requires smooth learning algorithms (Section \ref{sec:method:adj-md}).

\begin{assumption}[Smoothness assumption \cite{DBLP:conf/iclr/SinhaND18}] \label{assump:smooth}
	\ignore{Layer of neural network $\mathbf{F}^{(i)}~(i=1,\ldots,l)$ is $L^{(i)}_0$-Lipschitz continuous, and its Jacobian $\mathbf{J_F}^{(i)}$ is $L^{(i)}_1$-Lipschitz continuous, where $L^{(i)}_0,L^{(i)}_1$ are Lipschitz constants. Further,}
	The composition of $\ce$ and $\varphi_\theta$ meets the smoothness condition:
	\begin{align*}
	&\|\nabla_{\mathbf{x}}\ce(\theta,\mathbf{x},y) - \nabla_{\mathbf{x}}\ce(\theta,\mathbf{x}',y)\|_2\leq {\sf L}_{\mathbf{xx}}^f\|\mathbf{x}-\mathbf{x}'\|_2, \\
		&\|\nabla_{\mathbf{x}}\ce(\theta,\mathbf{x},y) - \nabla_{\mathbf{x}}\ce(\theta',\mathbf{x},y)\|_2\leq {\sf L}_{\mathbf{x}\theta}^f\|\theta-\theta'\|_2, \\
		&\|\nabla_{\theta}\ce(\theta,\mathbf{x},y) - \nabla_{\theta}\ce(\theta,\mathbf{x}',y)\|_2\leq {\sf L}_{\theta\mathbf{x}}^f\|\mathbf{x}-\mathbf{x}'\|_2,
	\end{align*}
	and
	$\psi_\vartheta$ 
	meets the smoothness condition:
	\begin{align*}
		&\|\nabla_{\mathbf{x}}\psi_\vartheta(\mathbf{x}) - \nabla_{\mathbf{x}}\psi_\vartheta(\mathbf{x}')\|_2\leq {\sf L}_{\mathbf{xx}}^g\|\mathbf{x}-\mathbf{x}'\|_2,\\
		&\|\nabla_{\mathbf{x}}\psi_\vartheta(\mathbf{x}) - \nabla_{\mathbf{x}}\psi_{\vartheta'}(\mathbf{x})\|_2\leq {\sf L}_{\mathbf{x}\vartheta}^g\|\vartheta-\vartheta'\|_2,
	\end{align*}
	where $\mathbf{x}'\in[\mathbf{\underline{u}},\mathbf{\overline{u}}]$ is changed from $\mathbf{x}=\phi(z)$ for a given example $z$ and ${\sf L}^{\ast}_{\ast\ast}>0$ denotes the smoothness factor ($\ast$ is the wildcard).
\end{assumption}

Recall that the $\psi_\vartheta$ meets the strongly-convex condition:
\begin{equation*}
     \|\nabla_{\mathbf{x}}\psi_\vartheta(\mathbf{x}) - \nabla_{\mathbf{x}}\psi_\vartheta(\mathbf{x}')\|_2\geq {\sf M}_{\mathbf{xx}}^g\|\mathbf{x}-\mathbf{x}'\|_2,
\end{equation*}
where ${\sf M}_{\mathbf{xx}}^g>0$ is the convexity factor.

\begin{proposition} \label{lemma:conca-smooth}
Assume the smoothness assumption holds. The loss of $\ce-\lambda\psi_\vartheta$ is $(\lambda {\sf M}_{\bf xx}^g-{\sf L}_{\bf xx}^f)$-strongly concave and $(\lambda {\sf L}_{\bf xx}^g+{\sf L}_{\bf xx}^f)$-smoothness when ${\sf L}_{\bf xx}^f < \lambda {\sf M}_{\bf xx}^g$. That is
	\begin{align*}
		-\frac{\lambda {\sf L}_{\bf xx}^g+{\sf L}_{\bf xx}^f}{2}\|\mathbf{x}'-\mathbf{x}\|_2^2\leq \mathcal{L}\leq-\frac{\lambda {\sf M}_{\bf xx}^g-{\sf L}_{\bf xx}^f}{2}\|\mathbf{x}'-\mathbf{x}\|_2^2,
	\end{align*}
	where $\mathcal{L}=\ce(\theta,\mathbf{x}',y)-\lambda\psi_\vartheta(\mathbf{x}')-\ce(\theta,\mathbf{x},y)+\lambda\psi_\vartheta(\mathbf{x})-\\ \langle\nabla_\mathbf{x}(\ce-\lambda\psi_\vartheta),\delta_\mathbf{x}\rangle=\mathcal{J}(\mathbf{x}')-\mathcal{J}(\mathbf{x})-\langle\nabla_\mathbf{x}\mathcal{J}(\mathbf{x}),\delta_\mathbf{x}\rangle$.
\end{proposition}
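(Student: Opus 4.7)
The plan is to bound $\mathcal{L}=\mathcal{J}(\mathbf{x}')-\mathcal{J}(\mathbf{x})-\langle\nabla_\mathbf{x}\mathcal{J}(\mathbf{x}),\delta_\mathbf{x}\rangle$ by separately bounding the corresponding second-order residuals of $\ce$ and of $\psi_\vartheta$, and then assembling. Writing $R_f=\ce(\theta,\mathbf{x}',y)-\ce(\theta,\mathbf{x},y)-\langle\nabla_\mathbf{x}\ce,\delta_\mathbf{x}\rangle$ and $R_g=\psi_\vartheta(\mathbf{x}')-\psi_\vartheta(\mathbf{x})-\langle\nabla_\mathbf{x}\psi_\vartheta,\delta_\mathbf{x}\rangle$, we clearly have $\mathcal{L}=R_f-\lambda R_g$, so the claim reduces to two-sided control on $R_f$ (from smoothness of $\ce$) and on $R_g$ (from smoothness and strong convexity of $\psi_\vartheta$).

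First I would invoke the standard descent-lemma argument. Using the integral representation
\[
R_f=\int_0^1\langle\nabla_\mathbf{x}\ce(\theta,\mathbf{x}+t\delta_\mathbf{x},y)-\nabla_\mathbf{x}\ce(\theta,\mathbf{x},y),\delta_\mathbf{x}\rangle\,dt,
\]
together with Cauchy--Schwarz and the ${\sf L}_{\bf xx}^f$-Lipschitz bound on $\nabla_\mathbf{x}\ce$ from Assumption~\ref{assump:smooth}, I would obtain $|R_f|\leq\tfrac{{\sf L}_{\bf xx}^f}{2}\|\delta_\mathbf{x}\|_2^2$, i.e.\ the two-sided bound $-\tfrac{{\sf L}_{\bf xx}^f}{2}\|\delta_\mathbf{x}\|_2^2\leq R_f\leq\tfrac{{\sf L}_{\bf xx}^f}{2}\|\delta_\mathbf{x}\|_2^2$. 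The same integral identity applied to $\psi_\vartheta$, combined with its ${\sf L}_{\bf xx}^g$-smoothness, yields the upper bound $R_g\leq\tfrac{{\sf L}_{\bf xx}^g}{2}\|\delta_\mathbf{x}\|_2^2$. For the matching lower bound on $R_g$, I would use the fact that $\psi_\vartheta$ is ${\sf M}_{\bf xx}^g$-strongly convex, so that $R_g\geq\tfrac{{\sf M}_{\bf xx}^g}{2}\|\delta_\mathbf{x}\|_2^2$; this is where convexity of the ICNN (Section~\ref{sec:ad-icnn}) is essential, since the gradient lower-Lipschitz inequality stated in the paper only implies the quadratic lower bound once monotonicity of the gradient along the segment is in hand.

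Combining the three bounds gives the upper estimate
\[
\mathcal{L}=R_f-\lambda R_g\leq\tfrac{{\sf L}_{\bf xx}^f}{2}\|\delta_\mathbf{x}\|_2^2-\lambda\tfrac{{\sf M}_{\bf xx}^g}{2}\|\delta_\mathbf{x}\|_2^2=-\tfrac{\lambda{\sf M}_{\bf xx}^g-{\sf L}_{\bf xx}^f}{2}\|\delta_\mathbf{x}\|_2^2,
\]
which, under the hypothesis ${\sf L}_{\bf xx}^f<\lambda{\sf M}_{\bf xx}^g$, is strictly negative and witnesses $(\lambda{\sf M}_{\bf xx}^g-{\sf L}_{\bf xx}^f)$-strong concavity of $\mathcal{J}$ at $\mathbf{x}$. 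Symmetrically, the lower estimate
\[
\mathcal{L}\geq-\tfrac{{\sf L}_{\bf xx}^f}{2}\|\delta_\mathbf{x}\|_2^2-\lambda\tfrac{{\sf L}_{\bf xx}^g}{2}\|\delta_\mathbf{x}\|_2^2=-\tfrac{\lambda{\sf L}_{\bf xx}^g+{\sf L}_{\bf xx}^f}{2}\|\delta_\mathbf{x}\|_2^2
\]
establishes the $(\lambda{\sf L}_{\bf xx}^g+{\sf L}_{\bf xx}^f)$-smoothness of $\mathcal{J}$, completing the sandwich in the statement.

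The main obstacle I anticipate is the lower bound on $R_g$. The paper states strong convexity in the form $\|\nabla_\mathbf{x}\psi_\vartheta(\mathbf{x})-\nabla_\mathbf{x}\psi_\vartheta(\mathbf{x}')\|_2\geq{\sf M}_{\bf xx}^g\|\mathbf{x}-\mathbf{x}'\|_2$, which by itself does not force the required bilinear-form inequality $\langle\nabla_\mathbf{x}\psi_\vartheta(\mathbf{x})-\nabla_\mathbf{x}\psi_\vartheta(\mathbf{x}'),\mathbf{x}-\mathbf{x}'\rangle\geq{\sf M}_{\bf xx}^g\|\mathbf{x}-\mathbf{x}'\|_2^2$. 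I would close this gap by using the ICNN architecture, which guarantees convexity of $\psi_\vartheta$ in $\mathbf{x}$; combined with the gradient lower-Lipschitz condition, one recovers the standard strong-convexity quadratic lower bound needed to conclude. Everything else is a routine descent-lemma calculation plus a linear combination, so no genuinely hard analytic step is expected beyond this single reconciliation.
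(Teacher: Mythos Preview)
Your proposal is correct and follows essentially the same route as the paper: decompose $\mathcal{L}=R_f-\lambda R_g$, apply the two-sided quadratic bound on $R_f$ from smoothness of $\ce$, and sandwich $R_g$ between the strong-convexity and smoothness quadratics for $\psi_\vartheta$, then combine. Your explicit flagging of the strong-convexity step (that the gradient lower-Lipschitz condition alone does not yield the quadratic lower bound without convexity, which the ICNN architecture supplies) is exactly what the paper compresses into the phrase ``Since $\psi_\vartheta$ is convex, we get\ldots''.
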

\begin{proof}
By quadratic bounds derived from the smoothness, we have
	$-\frac{{\sf L}_{\bf xx}^f}{2}\|\mathbf{x}'-\mathbf{x}\|_2^2\leq\ce(\theta,\mathbf{x}',y)-\ce(\theta,\mathbf{x},y)-\langle\nabla_\mathbf{x}\ce,\mathbf{x}'-\mathbf{x}\rangle\leq\frac{{\sf L}_{\bf xx}^f}{2}\|\mathbf{x}'-\mathbf{x}\|_2^2$. Since $\psi_\vartheta$ is convex, we get $\psi_\vartheta(\mathbf{x}')-\psi_\vartheta(\mathbf{x})-\langle\nabla_\mathbf{x}\psi_\vartheta,\mathbf{x}'-\mathbf{x}\rangle\geq\frac{{\sf M}_{\bf xx}^g}{2}\|\mathbf{x}'-\mathbf{x}\|_2^2$. Since $\psi_\vartheta$ is smooth, we get $\psi_\vartheta(\mathbf{x}')-\psi_\vartheta(\mathbf{x})-\langle\nabla_\mathbf{x}\psi_\vartheta,\mathbf{x}'-\mathbf{x}\rangle\leq\frac{{\sf L}_{\bf xx}^g}{2}\|\mathbf{x}'-\mathbf{x}\|_2^2$. Combining these two inequalities leads to the proposition.
\end{proof}

Theorem \ref{theorem:attack} below quantifies the gap between the approximate adversarial example $\mathbf{x}'=\mathbf{x}+\delta_{\mathbf{x}}^{(T)}$ and the optimal one, denoted by $\mathbf{x}^\ast=\mathbf{x}+\delta_\mathbf{x}^\ast$. The proof is lengthy and deferred to the supplementary material.

\begin{theorem} \label{theorem:attack}
Suppose the smoothness assumption holds. If ${\sf L}_{\bf xx}^f < \lambda {\sf M}_{\bf xx}^g$, the perturbed example $\mathbf{x}'=\mathbf{x}+\delta_{\mathbf{x}}^{(T)}$ from Algorithm \ref{alg:adv-train} satisfies:
	$$
	\frac{\mathcal{J}(\mathbf{x}^\ast)-\mathcal{J}(\mathbf{x}')}{\mathcal{J}(\mathbf{x}^\ast)-\mathcal{J}(\mathbf{x})}\leq\exp(-\frac{T}{d}\cdot\frac{\lambda {\sf M}_{\bf xx}^g-{\sf L}_{\bf xx}^f}{\lambda {\sf L}_{\bf xx}^g+{\sf L}_{\bf xx}^f}),
	$$
	where $d$ is the input dimension.
\end{theorem}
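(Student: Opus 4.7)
The plan is to establish the exponential contraction by reducing one iteration of Algorithm \ref{alg:adv-train} to a greedy coordinate-ascent step on a strongly concave, smooth objective, then iterating the resulting per-step bound $T$ times.

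To begin, I would invoke Proposition \ref{lemma:conca-smooth}: under ${\sf L}_{\bf xx}^f < \lambda {\sf M}_{\bf xx}^g$ the attack criterion $\mathcal{J}(\mathbf{x}) = \ce(\theta,\mathbf{x},1) - \lambda \psi_\vartheta(\mathbf{x})$ is $\mu$-strongly concave and $L$-smooth on $[\mathbf{\underline{u}}, \mathbf{\overline{u}}]$, with $\mu = \lambda {\sf M}_{\bf xx}^g - {\sf L}_{\bf xx}^f$ and $L = \lambda {\sf L}_{\bf xx}^g + {\sf L}_{\bf xx}^f$. Strong concavity yields the Polyak--Lojasiewicz-type inequality
\[ \|\nabla_\mathbf{x} \mathcal{J}(\mathbf{x})\|_2^2 \geq 2\mu\bigl(\mathcal{J}(\mathbf{x}^\ast) - \mathcal{J}(\mathbf{x})\bigr), \]
which converts gradient norms into optimality gaps.

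Next, I would exploit the fact that the selector in Eq.\eqref{eq:pgd-sel} takes the argmax of $\mathcal{J}$ over the PGD-$\ell_1$, PGD-$\ell_2$, and PGD-$\ell_\infty$ candidates, so the SMA one-step increase dominates that of the PGD-$\ell_1$ branch alone. By construction the PGD-$\ell_1$ direction $\mathbf{e}_1$ is the unit-$\ell_1$ vector pointing along the largest-magnitude coordinate of $\nabla \mathcal{J}$, so the step $\alpha_1 \mathbf{e}_1$ is exactly greedy coordinate ascent. Applying the smoothness-based descent lemma and optimizing over the step size (ideally $\alpha_1 = \|\nabla \mathcal{J}\|_\infty / L$) yields a one-step gain of at least $\|\nabla \mathcal{J}\|_\infty^2 / (2L)$. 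Combining the elementary bound $\|\nabla \mathcal{J}\|_\infty^2 \geq \|\nabla \mathcal{J}\|_2^2 / d$ with the inequality above produces the per-iteration contraction
\[ \mathcal{J}(\mathbf{x}^\ast) - \mathcal{J}(\mathbf{x} + \delta_\mathbf{x}^{(t+1)}) \leq \Bigl(1 - \frac{\mu}{L d}\Bigr)\bigl(\mathcal{J}(\mathbf{x}^\ast) - \mathcal{J}(\mathbf{x} + \delta_\mathbf{x}^{(t)})\bigr). \]
Iterating this bound $T$ times and applying $1-s \leq e^{-s}$ delivers the claimed rate with exponent $-(T/d)(\mu/L)$.

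The main obstacle is bridging this clean continuous-coordinate argument with the projection and rounding actually performed by the algorithm. The projection $\proj_{[\mathbf{\underline{u}}-\mathbf{x}, \mathbf{\overline{u}}-\mathbf{x}]}$ can truncate the ideal coordinate step near the box boundary, which one handles either via non-expansiveness of the projection or by showing the ideal step is already feasible; the $\round$ inside Eq.\eqref{eq:pgd-sel} compares discretized vectors and must be absorbed through smoothness, typically by bounding the rounding-induced change in $\mathcal{J}$ by an $L$-controlled remainder. A secondary subtlety is that the algorithm uses a fixed per-norm step size $\alpha_p$ rather than the gradient-adaptive $\|\nabla \mathcal{J}\|_\infty / L$; matching the rate presumably requires $\alpha_1 \leq 1/L$ (or a back-off/line-search argument) so that the descent lemma still certifies the $\|\nabla \mathcal{J}\|_\infty^2/(2L)$ one-step improvement.
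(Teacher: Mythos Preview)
Your proposal is correct and follows essentially the same route as the paper: both invoke Proposition~\ref{lemma:conca-smooth} to get $\mu$-strong concavity and $L$-smoothness, derive the PL-type bound $\|\nabla\mathcal{J}\|_2^2 \ge 2\mu(\mathcal{J}(\mathbf{x}^\ast)-\mathcal{J}(\mathbf{x}))$ (the paper packages this as a separate lemma), use the descent lemma with the idealized step size $\alpha_p=\|\nabla\mathcal{J}\|_q/L$ together with $\|\nabla\mathcal{J}\|_\infty^2\ge \|\nabla\mathcal{J}\|_2^2/d$ to obtain the per-step contraction $(1-\mu/(Ld))$, and then iterate and exponentiate. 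The caveats you flag about projection, rounding, and the fixed versus adaptive step size are exactly the ones the paper also sidesteps, remarking only that ``the value of $\alpha_p$ is not always held''.
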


We now focus on the convergence of SGD when applied to the outer minimization. Without loss of generality, the following theorem is customized to the composition of $\varphi_\theta$ and $\ce$, which can be extended to the composition of $\psi_\vartheta$ and $\de$. Let  $\mathcal{H}(\theta)=\mathbb{E}_{(\mathbf{x},y)\in D_\mathbf{x}}\ce(\theta,\mathbf{x}^\ast(\theta),y)$ denote the optimal adversarial loss on the entire training dataset $D_\mathbf{x}$. 
\begin{theorem} \label{theorem:convergence}
Suppose the smoothness assumption holds. Let $\Delta=\mathcal{H}(\theta^{(0)})-\min_\theta \mathcal{H}(\theta)$. If we set the learning rate to $\gamma^{(j)}=\gamma=\min\{1/{\sf L}, \sqrt{\Delta/({\sf L}\zeta^2N)}\}$, the adversarial training satisfies
	\begin{equation}
		\frac{1}{N}\sum_{j=0}^{N}\mathbb{E}\left\|\nabla \mathcal{H}(\theta^{(j)})\right\|\leq \zeta\sqrt{8\frac{\Delta {\sf L}}{N}} + 2\hat{c},
	\end{equation} 
	where N is the number of epochs, ${\sf L}=\frac{{\sf L}^f_{\theta \mathbf{x}}(\lambda {\sf L}_{\mathbf{x}\theta}^g+{\sf L}_{\mathbf{x}\theta}^f)}{\lambda {\sf M}_{\bf xx}^g-{\sf L}_{\bf xx}^f}+{\sf L}^f_{\theta\theta}$, $\hat{c}=(\mathcal{J}(\mathbf{x}^\ast)-\mathcal{J}({\mathbf{x}}))\frac{ 2{\sf L}^f_{\theta \mathbf{x}}}{\lambda {\sf M}_{\bf xx}^g-{\sf L}_{\bf xx}^f}\exp(-\frac{T}{d}\cdot\frac{{\lambda {\sf M}_{\bf xx}^g-{\sf L}_{\bf xx}^f}}{\lambda {\sf L}_{\bf xx}^g+{\sf L}_{\bf xx}^f})$, and $\zeta$ is the variance of stochastic gradients.
\end{theorem}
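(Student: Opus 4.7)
The plan is to treat Theorem \ref{theorem:convergence} as a standard non-convex SGD convergence analysis with biased stochastic gradients, where the bias stems from using the approximate inner maximizer $\mathbf{x}'=\mathbf{x}+\delta_{\mathbf{x}}^{(T)}$ produced by Algorithm \ref{alg:adv-train} in place of the true $\mathbf{x}^\ast(\theta^{(j)})$. The argument has three layers: (i) promote $\mathcal{H}(\theta)$ to an ${\sf L}$-smooth function of $\theta$ with exactly the ${\sf L}$ in the statement; (ii) bound the bias of the mini-batch gradient by $\hat{c}$; and (iii) chain these into a descent lemma, telescope over $j=0,\ldots,N$, and pass from a squared-gradient bound to the $\|\nabla\mathcal{H}\|$-bound via Jensen's inequality.

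For (i), the $(\lambda{\sf M}_{\mathbf{xx}}^g-{\sf L}_{\mathbf{xx}}^f)$-strong concavity of $\mathcal{J}(\mathbf{x};\theta)=\ce(\theta,\mathbf{x},1)-\lambda\psi_\vartheta(\mathbf{x})$ from Proposition \ref{lemma:conca-smooth} guarantees that $\mathbf{x}^\ast(\theta)$ is unique, so Danskin's theorem yields $\nabla_\theta\mathcal{H}(\theta)=\mathbb{E}[\nabla_\theta\ce(\theta,\mathbf{x}^\ast(\theta),y)]$. Applying the implicit function theorem to the stationarity condition $\nabla_\mathbf{x}\mathcal{J}(\mathbf{x}^\ast(\theta);\theta)=0$ and invoking Assumption \ref{assump:smooth} shows that $\mathbf{x}^\ast(\cdot)$ is Lipschitz in $\theta$ with constant $(\lambda{\sf L}_{\mathbf{x}\theta}^g+{\sf L}_{\mathbf{x}\theta}^f)/(\lambda{\sf M}_{\mathbf{xx}}^g-{\sf L}_{\mathbf{xx}}^f)$; chaining with ${\sf L}_{\theta\mathbf{x}}^f$ and adding the direct term ${\sf L}_{\theta\theta}^f$ recovers the announced ${\sf L}$. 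For (ii), smoothness of $\ce$ in its second argument gives $\|\mathbb{E}[g^{(j)}]-\nabla\mathcal{H}(\theta^{(j)})\|\leq {\sf L}_{\theta\mathbf{x}}^f\,\mathbb{E}\|\mathbf{x}'-\mathbf{x}^\ast\|$, strong concavity at $\mathbf{x}^\ast$ controls $\|\mathbf{x}'-\mathbf{x}^\ast\|$ through the suboptimality $\mathcal{J}(\mathbf{x}^\ast)-\mathcal{J}(\mathbf{x}')$, and Theorem \ref{theorem:attack} supplies the exponential-in-$T$ decay of that suboptimality, assembling into the stated $\hat{c}$.

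With these ingredients, step (iii) applies the descent lemma
\begin{equation*}
\mathbb{E}[\mathcal{H}(\theta^{(j+1)})]\leq \mathbb{E}[\mathcal{H}(\theta^{(j)})]-\gamma\,\mathbb{E}\|\nabla\mathcal{H}(\theta^{(j)})\|^2+\gamma\hat{c}\,\mathbb{E}\|\nabla\mathcal{H}(\theta^{(j)})\|+\tfrac{{\sf L}\gamma^2}{2}\mathbb{E}\|g^{(j)}\|^2,
\end{equation*}
obtained by expanding $\theta^{(j+1)}=\theta^{(j)}-\gamma g^{(j)}$ in the smoothness bound on $\mathcal{H}$ and using the bias bound in the cross term. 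Splitting $\mathbb{E}\|g^{(j)}\|^2\leq 2\|\nabla\mathcal{H}(\theta^{(j)})\|^2+2\hat{c}^2+\zeta^2$ via the variance bound, selecting $\gamma\leq 1/{\sf L}$ so that the quadratic coefficient stays at most $\gamma/2$, rearranging, summing in $j$, telescoping $\mathcal{H}(\theta^{(0)})-\mathcal{H}(\theta^{(N)})\leq\Delta$, and applying AM-GM to absorb the $\gamma\hat{c}\|\nabla\mathcal{H}\|$ cross-term gives $\frac{1}{N}\sum_j\mathbb{E}\|\nabla\mathcal{H}(\theta^{(j)})\|^2=O(\Delta/(\gamma N)+{\sf L}\gamma\zeta^2+\hat{c}^2)$. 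Plugging in $\gamma=\min\{1/{\sf L},\sqrt{\Delta/({\sf L}\zeta^2 N)}\}$ balances the first two terms at $8\zeta^2\Delta{\sf L}/N$, and Jensen's inequality combined with $\sqrt{a+b}\leq\sqrt{a}+\sqrt{b}$ converts the result to $\|\nabla\mathcal{H}\|$ and reproduces the announced bound $\zeta\sqrt{8\Delta{\sf L}/N}+2\hat{c}$.

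The main obstacle I expect is securing ${\sf L}$-smoothness of $\mathcal{H}$ with precisely the stated constant: Danskin plus the implicit function theorem is standard in the unconstrained case, but here the inner maximization lives on the box $[\mathbf{\underline{u}},\mathbf{\overline{u}}]$ (and over the discrete $\mathcal{X}$, relaxed during training), so one must argue via the $1$-Lipschitzness of the projection or by restricting to a neighborhood where the active set is locally constant, in order to keep the envelope intact. A secondary subtlety is matching the precise algebraic form of $\hat{c}$ — whether the dependence on Theorem \ref{theorem:attack}'s contraction enters linearly or with a square root, which hinges on whether $\|\mathbf{x}'-\mathbf{x}^\ast\|$ is bounded through the suboptimality gap directly or through the gradient at $\mathbf{x}'$; this bookkeeping affects the constants but not the skeleton of the proof.
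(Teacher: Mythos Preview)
Your proposal is correct and follows essentially the same three-layer structure as the paper's proof (smoothness of $\mathcal{H}$, bias bound via Theorem~\ref{theorem:attack}, descent-lemma telescoping following \cite{DBLP:conf/iclr/SinhaND18}), including an extra bounded-variance assumption on the stochastic gradients that the paper states separately. On your flagged obstacle: the paper avoids Danskin/IFT entirely and instead derives the Lipschitz continuity of $\mathbf{x}^\ast(\theta)$ directly from the constrained first-order optimality condition $\langle\nabla_\mathbf{x}\mathcal{J}_{\theta_1}(\mathbf{x}^\ast(\theta_1)),\mathbf{x}^\ast(\theta_2)-\mathbf{x}^\ast(\theta_1)\rangle\leq 0$ combined with strong concavity, which handles the box $[\underline{\mathbf{u}},\overline{\mathbf{u}}]$ cleanly without any active-set or projection-Lipschitz argument.
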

The proof is also deferred to the supplementary material. Theorem \ref{theorem:convergence} says that the convergence rate of the adversarial training is  $\mathcal{O}({1}/{\sqrt{N}})$. Moreover, the approximation of the inner maximization has a constant effect on the convergence because of $\hat{c}$. More importantly, attacks  achieving a lower attack effectiveness than this approximation possibly enlarge the effect and can be mitigated by this defense. 
\section{Experiments} \label{sec:exp}

We conduct experiments to validate the soundness of the proposed defense in the absence and presence of evasion attacks, by answering 4 Research Questions (RQs):
\begin{itemize}[leftmargin=*]
	\item {\bf RQ1: Effectiveness of defenses in the absence of attacks}: How effective is {PAD-SMA} when there is no attack? This is important because the defender does not know for certain whether there is an adversarial attack or not.
	\item {\bf RQ2: Robustness against oblivious attacks}: How robust is {PAD-SMA} against oblivious attacks where ``oblivious'' means the attacker is unaware of adversary detector $g$?
	\item {\bf RQ3: Robustness against adaptive attacks}: How robust is {PAD-SMA}  against adaptive attacks? 
	\item {\bf RQ4: Robustness against practical attacks}: How robust is {PAD-SMA} against attacks in the problem space? 
\end{itemize}

\noindent{\bf Datasets}. Our experiments utilize two Android malware datasets: Drebin \cite{Daniel:NDSS} and Malscan \cite{DBLP:conf/kbse/WuLZYZ019}, which are widely used in the literature. The Drebin dataset initially contains 5,560 malicious apps and features extracted from 123,453 benign apps; both were collected before the year 2013. In order to obtain the customized features,  \cite{li2020adversarial} re-collects benign apps from the Androzoo repository \cite{Allix:2016:ACM:2901739.2903508} and re-scans the collections via VirusTotal, resulting in 42,333 benign examples. This leads to the Drebin dataset used in this paper containing 5,560 malicious apps and 42,333 benign apps. Malscan \cite{DBLP:conf/kbse/WuLZYZ019} contains 11,583 malicious apps and 11,613 benign apps, spanning from 2011 to 2018.
These apps are labeled using VirusTotal \cite{VirusTotal:Online};
an app is flagged as malicious if five or more malware scanners say the app is malicious, and as benign if no malware scanners flag it as malicious. We randomly split a dataset into three disjoint sets: 60\% for training, 20\% for validation, and 20\% for testing. 

\noindent{\bf Feature extraction and manipulation}. 
We use two families of features. ({\bf i}) {\bf Manifest} features, including: {\em hardware} statements (e.g., camera and GPS module) because they may incur security concerns; 
{\em permissions} because they may be abused to breach a user's privacy; implicit {\em Intents} because they are related to communications between app components (e.g., services). These features can be perturbed by injecting operations but may not be removed without undermining a program's functionality \cite{7917369,li2020adversarial}.
({\bf ii}) {\bf Classes.dex} features, including: ``restricted'' and ``dangerous'' Application Programming Interfaces (APIs), where a ``restricted'' API means that its invocation 
requires declaring the corresponding permissions and ``dangerous'' APIs include the ones related to Java reflection usage (e.g., {\tt getClass}, {\tt getMethod}, {\tt getField}), encryption usage (e.g., {\tt javax.crypto}, {\tt Crypto.Cipher}), the explicit intent indication (e.g., {\tt setDataAndType}, {\tt setFlags}, {\tt addFlags}), dynamic code loading (e.g., {\tt DexClassLoader}, {\tt System.loadLibrary}), and low-level command execution (e.g., {\tt Runtime.getRuntime.exec}). These APIs can be injected along with dead codes \cite{pierazzi2019intriguing}. Note that APIs with the {\tt public} modifier can be hidden via Java reflection \cite{li2020adversarial}, which involves reflection-related APIs used by our detector, referred to as ``side-effect'' features as mentioned above. These features may benefit the defender.

We exclude some features. For {\bf manifest} features (e.g., package name, {\em activities}, {\em services}, {\em provider}, and {\em receiver}), they can be injected or renamed \cite{aamo:Online,li2020adversarial}. For {\bf Classes.dex} features, existing manipulations include {\em string} (e.g., IP address) injection/encryption \cite{7917369,li2020adversarial}, {\em public} or {\em static} API calls hidden by Java reflection  \cite{aamo:Online,li2020adversarial}, Function Call Graph (FCG) addition and rewiring \cite{aonzo2020obfuscapk}, anti-data flow obfuscation \cite{jung:avpass-bh}, and control flow obfuscation (by using arithmetic branches) \cite{aamo:Online}. For {\bf other types} of features, app signatures can be re-signed \cite{aamo:Online}; native libraries can be modified by Executable and Linkable Format (ELF) section-wise addition, ELF section appending, and instruction substitution \cite{lief:Online}. 

We use Androguard, a reverse engineering toolkit  \cite{Androguard:Online}, to extract features. We apply a binary feature vector to denote an app, where ``1'' means a feature is present and ``0'' otherwise. The 10,000 top-frequency features are used.

\noindent{\bf Defenses that are considered for comparison purposes}. 
We consider 8 representative defenses:
\begin{itemize}[leftmargin=*,align=left]
    \item {\bf DNN} \cite{grosse2017adversarial}: 
    DNN based malware detector with no defensive hardening, which serves as the baseline;
    
    \item {\bf AT-rFGSM$^k$} \cite{al2018adversarial}: DNN-based malware detector hardened by {\em\underline{A}}dversarial {\em\underline{T}}raining with the {\em\underline{r}}andomized $\round$ operation enabled {\em\underline{FGSM}}$^k$ attack (AT-rFGSM$^k$);
    
    \item {\bf AT-MaxMA} \cite{li2020adversarial}: DNN-based malware detector hardened by {\em\underline{A}}dversarial {\em\underline{T}}raining with the ``{\em \underline{Max}}'' strategy enabled {\em\underline{M}}ixture of {\em\underline{A}}ttacks (AT-MaxMA);
    
    \item {\bf KDE} \cite{pang2018towards}: Combining DNN model with a secondary detector for quarantining adversarial examples. The detector is a {\em\underline{K}}ernel {\em\underline{D}}ensity {\em\underline{E}}stimator (KDE) built upon activations from the penultimate layer of DNN on normal examples;
    
    \item {\bf DLA}\cite{DBLP:conf/eurosp/SperlKCLB20}: The secondary detector aims to capture differences in DNN activations from the normal and adversarial examples. The adversarial examples are generated upon DNN. The activations from all dense layers are utilized, referred to as Dense Layer Analysis (DLA);
    
    \item {\bf DNN$^+$}\cite{DBLP:journals/corr/GrosseMP0M17,carlini2017adversarial}: The secondary detector plugs an extra class into the DNN model for detecting adversarial examples generated from DNN (DNN$^+$);
    
    \item {\bf ICNN}: The secondary detector is the {\em\underline{I}}nput {\em\underline{C}}onvexity {\em\underline{N}}eural {\em\underline{N}}etwork (ICNN), which is established upon the feature space and does not change the DNN (Section \ref{sec:ad-icnn});
    
    \item {\bf PAD-SMA}: {\em\underline{P}}rincipled {\em\underline{A}}dversarial {\em\underline{D}}etection is realized by a DNN-based malware detector and an ICNN-based adversary detector, both of which are hardened by adversarial training incorporating the {\em\underline{S}}tepwise {\em\underline{M}}ixture of {\em\underline{A}}ttacks (PAD-SMA, Algorithm \ref{alg:adv-train}).
\end{itemize}
At a high level, these defenses either harden the malware detector or introduce an adversary detector. 
More specifically, AT-rFGSM$^k$ can achieve better robustness than adversarial training methods with the BGA, BCA, or Grosse attack \cite{al2018adversarial}; AT-MaxMA with three PGD attacks can thwart a broad range of attacks but not iMaxMA, which is the iterative version of MaxMA \cite{li2020adversarial}; KDE, DLA, DNN$^+$ and ICNN aim to identify the adversarial examples by leveraging the underlying difference inherent in ML models between a pristine example and its variant; PAD-SMA hardens the combination of DNN and ICNN by adversarial training. 

\noindent{\bf Metrics}. We report classification results on the test set via five standard metrics of False Negative Rate (FNR), False Positive Rate (FPR), F1 score, Accuracy (Acc for short, which is the percentage of the test examples that are correctly classified) and balanced Accuracy (bAcc) \cite{5597285}. Since we introduce $g$, a threshold $\tau$ is calculated on the validation set for rejecting examples. Let ``@\#'' denote the percentage of the examples in the validation set being outliers (e.g., @$5$ means 5\% of the examples are rejected by $g$).

\subsection{RQ1: Effectiveness in the Absence of Attacks} \label{sec:exp:dwoa}
\noindent{\bf Experimental setup}. We learn the aforementioned 8 detectors from the two datasets, respectively. In terms of malware detector model architecture, the DNN detector has 2 fully-connected hidden layers (each layer having 200 neurons) with ELU activation. The other 7 models also use this architecture. The adversary detector of DLA has the same setting as in \cite{DBLP:conf/eurosp/SperlKCLB20}: ICNN has 2 convex hidden layers with 200 neurons each. For adversarial training, feature representations can be flipped from ``0'' to ``1'' if injection operation is conducted and from ``1'' to ``0'' if removal operation is conducted. Moreover, AT-rFGSM$^k$ uses the PGD-$\ell_\infty$ attack, which additionally allows feature removals. It has 50 iterations with step size 0.02. AT-MaxMA uses three attacks, including PGD-$\ell_\infty$ iterates 50 times with step size 0.02, PGD-$\ell_2$ iterates 50 times with step size 0.5, and PGD-$\ell_1$ attack iterates 50 times, to conduct the training with penalty factor $\beta=0.01$ because a large $\beta$ incurs a low detection accuracy on the test sets. DLA and DNN$^+$ are learned from the adversarial examples generated by the MaxMA attack against the DNN model (i.e., adversarial training with an oblivious attack). 
PAD-SMA has three PGD attacks with the same step size as AT-MaxMA's except for $g$, which is learned from continuous perturbations. We set penalty factors $\beta_1=0.1$ and $\beta_2=1.0$ on the Drebin dataset and $\beta_1=0.01$ and $\beta_2=1.0$ on the Malscan dataset. In addition, we conduct a group of preliminary experiments to choose $\lambda$ from $\{10^{-3}, 10^{-2}, \ldots, 10^{3}\}$ and finally set $\lambda=1$ on both datasets. All detectors are tuned by the Adam optimizer with 50 epochs, mini-batch size 128, and learning rate 0.001, except for 80 epochs on the Malscan Dataset.

\begin{figure}[!htbp]
	\centering
	\begin{subfigure}[b]{0.35\textwidth}
		\centering
	\includegraphics[width=\textwidth]{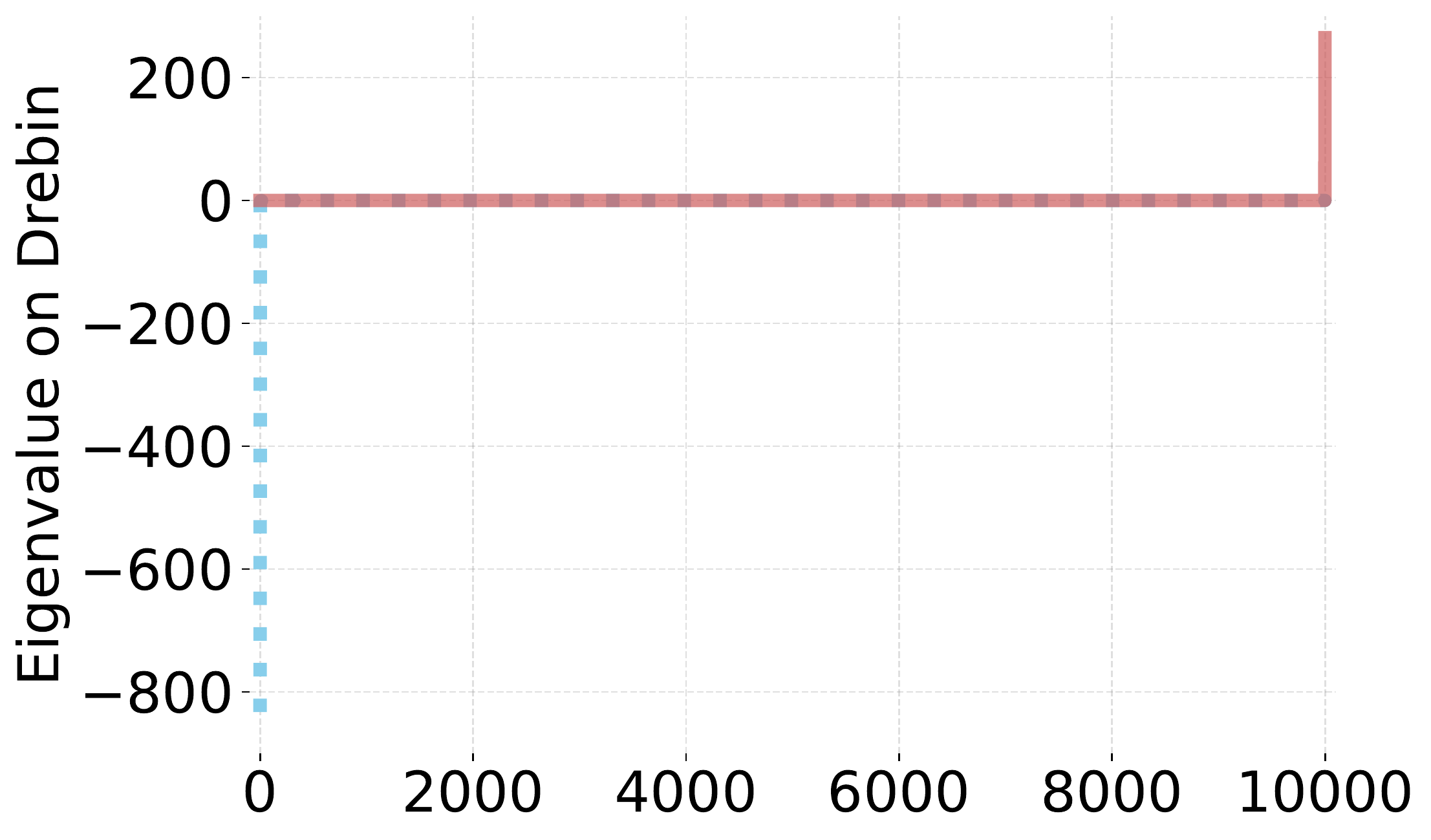}
	\end{subfigure} \hspace{-9pt}
	\begin{subfigure}[b]{0.35\textwidth}
		\centering
        \includegraphics[width=\textwidth]{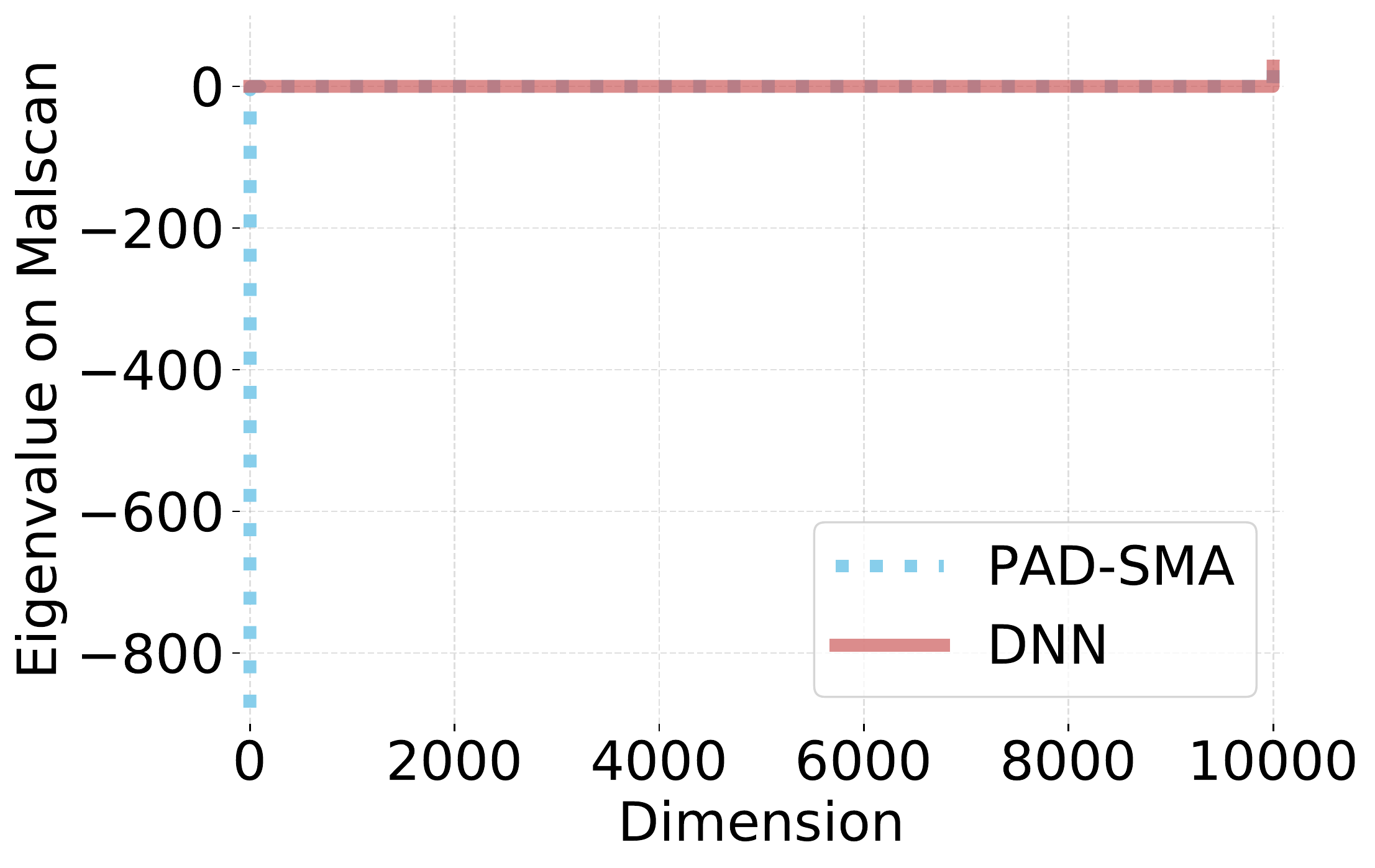}
	\end{subfigure}
	\caption{Sorted eigenvalues of Hessian matrix of $\mathcal{F-\lambda\psi_\vartheta}$ w.r.t. input when $\lambda=1$. 
     }
    \label{fig:hessian}
    
\end{figure}

\noindent{\bf Experiments on confirming that PAD-SMA yields concave inner maximization}.
Figure \ref{fig:hessian} illustrates sorted eigenvalues of the Hessian matrix of the loss function $\ce-\psi_\vartheta$ w.r.t. input. We randomly choose 100 instance-label pairs from test datasets of Drebin and Malscan, respectively. We let these instances separately pass through PAD-SMA or DNN (which has $\psi_\vartheta=0$) for calculating eigenvalues, and then average the eigenvalues element-wisely corresponding to the input dimension. We observe that most eigenvalues are near $0$, PAD-SMA produces large negative eigenvalues, and DNN has relatively small positive eigenvalues. This shows that PAD-SMA can yield a concave inner maximization, confirming the theoretical results. Note that PAD-SMA still has positive eigenvalues on the Malcan dataset, and that robustness is achieved.

\noindent{\bf Results answering RQ1}. Table \ref{tab:effe-no-attacks} reports the effectiveness of detectors on the two test sets. We observe that DNN achieves the highest detection accuracy (99.18\% on Drebin and 97.70\% on Malscan) and F1 score (96.45\% on Drebin and 97.73\% on Malscan). These accuracies are comparable to those reported in \cite{Daniel:NDSS,grosse2017adversarial,DBLP:conf/kbse/WuLZYZ019}. 
We also observe that KDE and ICNN have the same effectiveness as DNN because both are built upon DNN while introducing a separate model to detect adversarial examples. We further observe that when training with adversarial examples (e.g., AT-rFGSM$^k$, AT-MaxMA, DLA, DNN$^+$, and PAD-SMA), detectors' FNR decreases while FPR increases, leading to decreased F1 scores. This can be attributed to the fact that only the perturbed malware is used in the adversarial training and that data imbalance makes things worse.

\begin{table}[!t]
	\caption{
	Effectiveness (\%) of detectors without adversary detection capability in the absence of attacks.}
	\centering
	\begin{tabular}{c|l|ccccc}
		\toprule
		\multirow{2}{*}{} &\multirow{2}{*}{\specialcellleft{Defense}} & \multicolumn{5}{c}{Effectivenss (\%)} 
		\\\cmidrule{3-7}
		&&{FNR} & {FPR} & {Acc} & bAcc & {F1}\\
		\midrule
	 \multirow{8}{*}{\vthead{Drebin}} & \cc{DNN\cite{grosse2017adversarial}} & \cc{3.64} & \cc{0.45} & \cc{99.18} & \cc{97.96} & \cc{96.45}\\
		&AT-rFGSM$^k$\cite{al2018adversarial} & 2.36 & 3.43 & 96.69 & 97.10 & 87.18 \\
		&\cc{AT-MaxMA\cite{li2020adversarial}} &  \cc{1.73} & \cc{3.11} & \cc{97.05} & \cc{97.58} & \cc{88.46} \\
		&KDE\cite{pang2018towards} & 3.64 & 0.45 & 99.18 & 97.96 & 96.45 \\
        &\cc{DLA\cite{DBLP:conf/eurosp/SperlKCLB20}} & \cc{3.18} & \cc{0.58} & \cc{99.12} & \cc{98.12} & \cc{96.21} \\
		&DNN$^+$\cite{DBLP:journals/corr/GrosseMP0M17,carlini2017adversarial}& 3.36 & 0.50 & 99.17 & 98.07 & 96.42 \\
		& \cc{ICNN} & \cc{3.64} & \cc{0.45} & \cc{99.18} & \cc{97.96} & \cc{96.45} \\
            & PAD-SMA & 2.45 & 2.36 & 97.63 & 97.59 & 90.43 \\
		\midrule\midrule
		\multirow{8}{*}{\vthead{Malscan}} & \cc{DNN\cite{grosse2017adversarial}} & \cc{1.87} & \cc{2.73} & \cc{97.70} & \cc{97.70} & \cc{97.73} \\
		&AT-rFGSM$^k$\cite{al2018adversarial} & 0.84 & 5.49 & 96.86 & 96.84 & 96.96 \\
		&\cc{AT-MaxMA\cite{li2020adversarial}} & \cc{0.39} & \cc{8.84} & \cc{95.43} & \cc{95.39} & \cc{95.65} \\
		&KDE\cite{pang2018towards} & 1.87 & 2.73 & 97.70 & 97.70 & 97.73 \\
		&\cc{DLA\cite{DBLP:conf/eurosp/SperlKCLB20}} & \cc{1.45} & \cc{3.35} & \cc{97.61} & \cc{97.60} & \cc{97.65} \\
		&DNN$^+$\cite{DBLP:journals/corr/GrosseMP0M17,carlini2017adversarial}& 2.81 & 1.84 & 97.67 & 97.68 & 97.68 \\
		& \cc{ICNN} & \cc{1.87} & \cc{2.73} & \cc{97.70} & \cc{97.70} & \cc{97.73} \\
            & PAD-SMA & 0.42 & 8.58 & 95.54 & 95.50 & 95.75 \\
		\bottomrule
	\end{tabular}
	\label{tab:effe-no-attacks}
\end{table}
\begin{table}[!t]
	\caption{
	Accuracy (\%) and F1 score (\%) of detectors with adversary detection capability in the absence of attacks.
	}
	\centering
	\begin{tabular}{c|l|cc|cc|cc}
		\toprule
		\multirow{2}{*}{} &\multirow{2}{*}{\specialcellleft{Defense}} & \multicolumn{2}{c|}{@1 (\%)} & \multicolumn{2}{c|}{@5 (\%)} &
        \multicolumn{2}{c}{@10 (\%)}
		\\\cmidrule{3-8}
		&&{Acc} & {F1} & {Acc} & F1 & Acc & {F1}\\
		\midrule
		\multirow{5}{*}{\vthead{Drebin}} & \cc{KDE} & \cc{99.19} & \cc{96.45} & \cc{99.15} & \cc{96.33} & \cc{99.17} & \cc{96.43}\\
        & DLA & 99.14 &	96.27 & 99.13 & 96.27 & 99.14 & 96.53\\
		& \cc{DNN$^+$} & \cc{99.37} & \cc{97.20} & \cc{99.43} & \cc{97.44} & \cc{99.54} & \cc{97.93}\\
		& ICNN & 99.21 & 96.58 & 99.21 & 96.58 & 99.14 & 96.58\\
            & \cc{PAD-SMA} & \cc{97.79} & \cc{90.82} & \cc{97.99} & \cc{88.61} & \cc{98.14} &	\cc{79.54}\\
		\midrule\midrule
		\multirow{5}{*}{\vthead{Malscan}} & KDE & 97.68 & 97.71 & 97.61 & 97.61 & 97.82 & 97.80\\
        & \cc{DLA} & \cc{97.65} & \cc{97.67} & \cc{97.69} & \cc{97.63} & \cc{97.80} & \cc{97.64}\\
		& DNN$^+$ & 97.81 & 97.81 & 98.37 & 98.38 & 98.58 & 98.56\\
		& \cc{ICNN} & \cc{97.68} & \cc{97.73} & \cc{97.64} & \cc{97.74} & \cc{97.70} & \cc{97.83}\\
            & PAD-SMA & 95.66 & 95.89 & 95.72 & 95.83 & 95.59 & 95.47 \\
		\bottomrule
	\end{tabular}
	\label{tab:effe-no-attacks2}
 \vspace{-3pt}
\end{table}

Table \ref{tab:effe-no-attacks2} reports the accuracy and F1 score of detectors with adversary detection capability $g$. To observe the behavior of $g$, we abstain $f$ from the prediction when $g(x)\geq\tau$. We expect to see that the trend of accuracy or F1 score will increase when removing as outliers more examples with high confidence from $g$ on the validation set. However, this phenomenon is not always observed (e.g., DLA and ICNN). This might be caused by the fact that DLA and ICNN distinguish the pristine examples confidently in the training phase, while the rejected examples on the validation set are in the distribution and thus have little impact on the detection accuracy of $f$. PAD-SMA gets the downtrend of F1 score but not accuracy, particularly on the Drebin dataset. Though this is counter-intuitive, we attribute it to the adversarial training with adaptive attacks, which implicitly pushes $g$ to predict the pristine malware examples with higher confidence than the benign ones. Thus, rejecting more validation examples actually causes more malware examples to be dropped, causing the remaining malware samples to be more similar to the benign ones and $f$ to misclassify remaining malware, leading to lower F1 scores. 

\begin{figure*}[!htbp]
	\centering
	\begin{subfigure}[b]{\textwidth}
		\centering
		\includegraphics[width=1\textwidth]{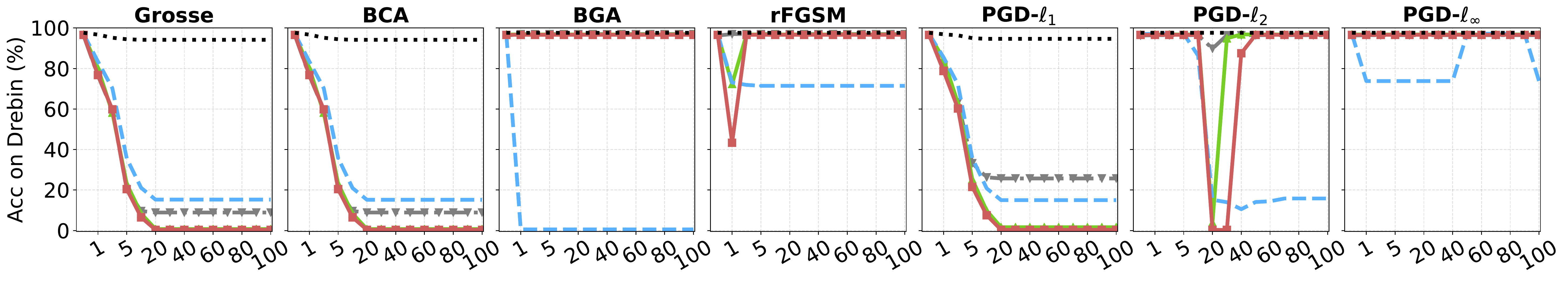}
	\end{subfigure}
	
	\begin{subfigure}[b]{\textwidth}
		\centering
	\vspace{-3pt}	 
        \includegraphics[width=1\textwidth]{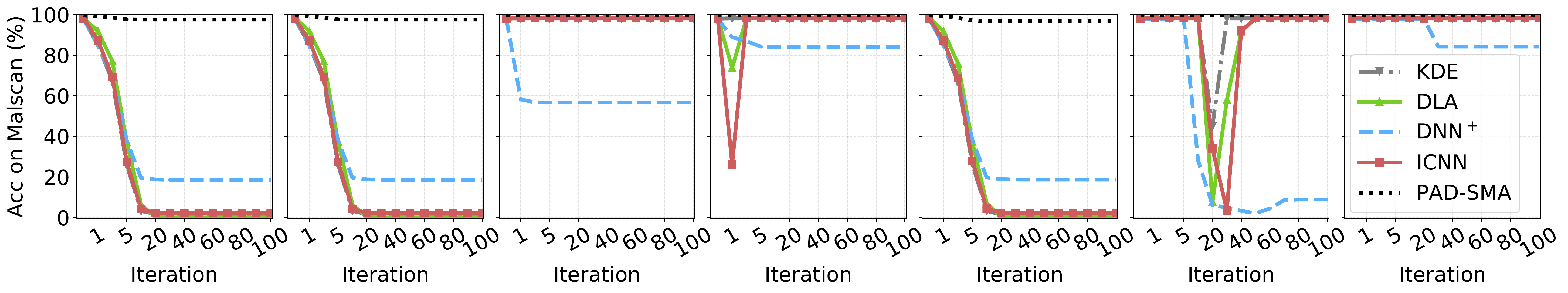}
	\end{subfigure}
	\caption{Accuracy (Acc) of detectors against oblivious attacks with iteration from 0 to 100.}
	\label{fig:oblivion-attacks}
\end{figure*}

In summary, PAD-SMA decreases FNR but increases FPR, leading to decreased accuracies ($\leq$2.16\%) and F1 scores ($\leq$6.02\%), which aligns with the malware detectors learned from adversarial training. The use of adversary detectors in PAD-SMA does not make the situation better.

\begin{tcolorbox}[boxrule=0.5pt,arc=4pt,
left=1.5pt,right=1.5pt,top=3pt,bottom=3pt,boxsep=0pt]
\noindent{\bf Answer to RQ1}: There is no ``free lunch'' in the sense that using detectors trained from adversarial examples may suffer from a slightly lower accuracy when there are no adversarial attacks.
\end{tcolorbox}

\subsection{RQ2: Robustness against Oblivious Attacks} \label{sec:exp:dvo}

\noindent{\bf Experimental setup}. We measure the robustness of KDE, DLA, DNN$^+$, ICNN, and PAD-SMA against oblivious attacks via the Drebin and Malscan datasets; we do not consider the other detectors (i.e., DNN, AT-rFGSM$^k$, and AT-MaxMA) because they do not have $g$.
We use the detectors learned in the previous group of experiments (for answering RQ1). The threshold is computed by dropping 5\% validation examples with top confidence, which is suggested in \cite{pang2018towards,DBLP:conf/eurosp/SperlKCLB20,carlini2017adversarial}, while noting that the accuracy of PAD-SMA is slightly better than that of AT-MaxMT at this setting.

We separately wage 11 oblivious attacks to perturb malware examples on the test set. For Grosse \cite{grosse2017adversarial}, BCA \cite{al2018adversarial}, FGSM \cite{al2018adversarial}, BGA \cite{al2018adversarial}, PGD-$\ell_1$ \cite{DBLP:journals/tnse/LiLYX21}, PGD-$\ell_2$ \cite{DBLP:journals/tnse/LiLYX21}, and PGD-$\ell_\infty$\cite{DBLP:journals/tnse/LiLYX21}, these attacks proceed iteratively till the 100$^\text{th}$ loop is reached. Grosse, BCA, FGSM, and BGA are proposed to only permit the feature addition operation (i.e., flipping some `0's to `1's). FGSM has a step size 0.02 with random rounding. Three PGD attacks permit both feature addition and feature removal: PGD-$\ell_2$ has a step size 0.5 and PGD-$\ell_\infty$ has a step size 0.02 (the settings are the same as adversarial training). For Mimicry \cite{rndic_laskov}, we leverage $N_{ben}$ benign examples to guide the attack (dubbed Mimicry$\times N_{ben}$). We select the one that can evade $f$ to wage attacks and use a random one otherwise. MaxMA \cite{li2020adversarial} contains PGD-$\ell_1$, PGD-$\ell_2$, and PGD-$\ell_\infty$ attacks. The iterative MaxMA (dubbed iMaxMA) runs MaxMA 5 times, with the start point updated. SMA has 100 iterations with step size 0.5 for PGD-$\ell_2$ and 0.02 for PGD-$\ell_\infty$. The three MA attacks use the scoring rule of Eq.\eqref{eq:pgd-sel} without $g$ considered. 

\noindent{\bf Results}. Fig.\ref{fig:oblivion-attacks} depicts the accuracy curves of the detectors on Drebin (top panel) and Malscan (bottom panel) datasets under the 7 oblivious attacks, along with the iterations ranging from 0 to 100. We make three observations.
First, all these attacks cannot evade PAD-SMA (accuracy $\geq$ 90\%), demonstrating the robustness of the proposed model. 

Second, the Grosse, BCA, and PGD-$\ell_1$ attacks can evade KDE, DLA, DNN$^+$, and ICNN when 20 iterations are used, while recalling that these three attacks stop manipulating malware when the perturbed example can evade malware detector $f$. It is known that DNN is sensitive to small perturbations; KDE relies on the close distance between activations to reject large manipulations; DLA and DNN$^+$ are learned upon the oblivious MaxMA, which modifies malware examples to a large extent; ICNN is also learned from salt-and-pepper noises which randomly change one half elements of a vector. Therefore, neither malware detector $f$ nor adversary detector $g$ of KDE, DLA, and ICNN can impede small perturbations effectively. This explains why KDE, DLA, and ICNN can mitigate BGA and PGD-$\ell_\infty$ attacks that use large perturbations.

Third, a dip exists in the accuracy curve of KDE, DLA, or ICNN against rFGSM and PGD-$\ell_2$ when the iteration increases from 0 to 100. We find that both attacks can obtain small perturbations: rFGSM uses the random $\round$ (the rounding thresholds are randomly sampled from $[0,1]$) \cite{al2018adversarial} at iteration 1, and PGD-$\ell_2$ produces certain discrete perturbations at iteration 20 via $\round$ (the threshold is 0.5).

\begin{table}[!t]
	\caption{Accuracy (\%) of detectors under oblivious attacks (i.e., attacker is unaware of adversary detector $g$).}
	\centering
	\setlength{\tabcolsep}{5pt}
	\begin{tabular}{c|l|ccccc}
		\toprule
		\multirow{2}{*}{} &\multirow{2}{*}{Attack name}& \multicolumn{5}{c}{Accuracy (\%)} \\\cmidrule{3-7}
		&&{KDE} & {DLA} & DNN$^+$ & ICNN & PAD-SMA \\
		\midrule
		\multirow{7}{*}{\vthead{Drebin}}
		& \cc{No Attack} &  \cc{96.28} & \cc{96.80} & \cc{97.02} & \cc{96.62} & \cc{\bf 97.64}\\
            &Mimicry$\times 1$ & 56.64 & 55.82 & 58.18 & 54.91 & {\bf 94.18} \\
            &\cc{Mimicry$\times 10$} & \cc{20.91} & \cc{20.91} & \cc{23.55} & \cc{21.00} & \cc{\bf 84.18}\\
            &Mimicry$\times 30$ & 10.64 & 10.64 & 12.82 & 10.00 & {\bf 81.27} \\
		&\cc{MaxMA} & \cc{96.46} & \cc{96.82} & \cc{29.64} & \cc{96.64} & \cc{\bf 97.64} \\
            &iMaxMA & 96.46 & 96.82 & 29.64 & 96.64 & {\bf 97.64}\\
		&\cc{SMA} & \cc{32.09} & \cc{27.82} & \cc{31.18} & \cc{32.36} & \cc{\bf 94.27} \\
		\midrule\midrule
		\multirow{7}{*}{\vthead{Malscan}} 
		& \cc{No Attack} & \cc{98.02} & \cc{98.41} & \cc{97.86} & \cc{98.11} & \cc{\bf 99.65} \\
            &Mimicry$\times 1$ & 49.74 & 53.65 & 47.81 & 49.32 & {\bf 83.68} \\
            &\cc{Mimicry$\times 10$} & \cc{18.13} & \cc{18.68} & \cc{21.68} & \cc{17.06} & \cc{\bf 69.13} \\
            &Mimicry$\times 30$ & 8.65 & 6.94 & 14.23 & 7.00 & {\bf 65.45}\\
		&\cc{MaxMA} & \cc{98.13} & \cc{98.55} & \cc{84.23} & \cc{98.16} & \cc{\bf 99.65} \\
            &iMaxMA & 98.13 & 98.55 & 84.23 & 98.16 & {\bf 99.65} \\
		&\cc{SMA} & \cc{6.00} & \cc{26.68} & \cc{19.03} & \cc{7.32} & \cc{\bf 96.68}\\
		\bottomrule
	\end{tabular}
	\label{tab:ob-attack-res02}
\end{table}

\begin{table*}[!htbp]
	\caption{Accuracy (\%) of detectors under adaptive attacks, where ``Orth'' stands for ``orthogonal'', ``$-$'' means an attack is not applicable.
    }
	\centering
	\begin{tabular}{c|l|cccccccc}
		\toprule
		\multirow{2}{*}{} &\multirow{2}{*}{Attack name}& \multicolumn{8}{c}{Accuracy (\%)} \\\cmidrule{3-10}
		&& DNN & AT-rFGSM & AT-MaxMA & {KDE} & {DLA} & DNN$^+$ & ICNN & PAD-SMA \\
		\midrule
		\multirow{17}{*}{Drebin}
		& \cc{Groose} & \cc{0.000} & \cc{48.00} & \cc{ 87.64} & \cc{0.000} & \cc{0.000} & \cc{0.000} & \cc{0.636} & \cc{{\bf 90.91}} \\
            & BCA & 0.000 & 47.73 & 87.64 & 6.182 & 0.000 & 4.727 & 3.000 & {\bf 93.00} \\
            & \cc{BGA} & \cc{0.000} & \cc{95.55} & \cc{96.64} & \cc{97.00} & \cc{2.455} & \cc{0.000} & \cc{33.36} & \cc{\bf 97.64} \\
            & rFGSM & 0.000 & 97.46 & {\bf 98.18} & 97.00 & 96.82 & 70.91 & 96.64 & {97.64} \\
		& \cc{PGD-$\ell_1$} & \cc{0.000} & \cc{44.46} & \cc{80.91} & \cc{0.182} & \cc{0.000} & \cc{0.000} & \cc{0.091} & \cc{\bf 89.72} \\
            & PGD-$\ell_2$ & 3.455 & 89.73 & 96.27 & 87.36 & 0.000 & 8.727 & 0.091 & {\bf 97.18} \\
            & \cc{PGD-$\ell_\infty$} & \cc{0.000} & \cc{96.55} & \cc{\bf 98.09} & \cc{97.00} & \cc{96.82} & \cc{63.73} & \cc{96.64} & \cc{97.46} \\
            & Mimicry$\times 1$ & 54.91 & 88.91 & 90.27 & 56.64 & 55.82 & 58.18 & 54.91 & {\bf 94.18} \\
            & \cc{Mimicry$\times 10$} & \cc{21.00} & \cc{71.82} & \cc{74.27} & \cc{25.73} & \cc{20.36} & \cc{19.18} & \cc{21.00} & \cc{\bf 81.18} \\
            & Mimicry$\times 30$ & 10.00 & 66.45 & 70.64 & 16.09 & 10.09 & 7.909 & 10.00 & {\bf 74.27} \\
            & \cc{MaxMA} & \cc{0.000} & \cc{44.36} & \cc{80.64} & \cc{0.182} & \cc{0.000} & \cc{0.000} & \cc{0.091} & \cc{\bf 89.09} \\
            & iMaxMA & 0.000 & 43.36 & 69.64 & 0.000 & 0.000 & 0.000 & 0.000 & {\bf 88.73} \\
		& \cc{SMA} & \cc{0.000} & \cc{57.82} & \cc{84.09} & \cc{16.36} & \cc{0.000} & \cc{8.636} & \cc{0.000} & \cc{\bf 94.46} \\
            & Orth PGD-$\ell_1$ & $-$ & $-$ & $-$ & 1.091 & 0.000 & 0.000 & 0.000 & {\bf 97.64} \\
            & \cc{Orth PGD-$\ell_2$} & \cc{$-$} & \cc{$-$} & \cc{$-$} & \cc{17.46} & \cc{2.455} & \cc{13.55} & \cc{3.909} & \cc{\bf 97.64} \\
            & Orth PGD-$\ell_\infty$ & $-$ & $-$ & $-$ & 96.82 & 31.73 & 55.18 & 96.46 & {\bf 97.64} \\
            & \cc{Orth MaxMa} & \cc{$-$} & \cc{$-$} & \cc{$-$} & \cc{1.091} & \cc{0.000} & \cc{0.000} & \cc{0.000} & \cc{\bf 97.64} \\
            & Orth iMaxMa & $-$ & $-$ & $-$ & 0.182 & 0.000 & 0.000 & 0.000 & {\bf 97.64}\\
		\midrule\midrule
		\multirow{17}{*}{Malscan} 
		& \cc{Groose} & \cc{0.000} & \cc{9.129} & \cc{77.26} & \cc{0.000} & \cc{0.000} & \cc{0.000} & \cc{0.871} & \cc{\bf 85.26} \\
            & BCA & 0.000 & 8.968 & 77.03 & 1.194 & 0.000 & 0.097 & 8.129 & {\bf 89.32} \\
            & \cc{BGA} & \cc{0.000} & \cc{10.97} & \cc{95.68} & \cc{98.13} & \cc{0.194} & \cc{30.19} & \cc{37.45} & \cc{\bf 99.45}\\
            & rFGSM & 0.000 & 99.16 & 99.55 & 98.13 & 98.55 & 83.42 & 98.16 & {\bf 99.65}\\
		  & \cc{PGD-$\ell_1$} & \cc{0.000} & \cc{6.000} & \cc{71.68} & \cc{0.000} & \cc{0.000} & \cc{0.000} & \cc{1.226} & \cc{\bf 84.87} \\
            & PGD-$\ell_2$ & 34.13 & 63.94 & 81.55 & 38.32 & 2.097 & 2.806 & 2.548 & {\bf 95.90} \\
            & \cc{PGD-$\ell_\infty$} & \cc{0.000} & \cc{99.16} & \cc{\bf 99.52} & \cc{98.13} & \cc{98.55} & \cc{41.07} & \cc{98.10} & \cc{99.45} \\
            & Mimicry$\times 1$ & 49.32 & 75.39 & 82.48 & 49.74 & 53.65 & 47.81 & 49.32 & {\bf 83.68} \\
            & \cc{Mimicry$\times 10$} & \cc{17.06} & \cc{49.13} & \cc{60.71} & \cc{17.52} & \cc{18.23} & \cc{11.65} & \cc{17.06} & \cc{\bf 59.94} \\
            & Mimicry$\times 30$ & 7.000 & 39.94 & 52.48 & 7.645 & 6.483 & 2.452 & 7.000 & {\bf 53.68} \\
            & \cc{MaxMA} & \cc{0.000} & \cc{5.742} & \cc{61.77} & \cc{0.645} & \cc{0.000} & \cc{0.000} & \cc{0.935} & \cc{\bf 85.26} \\
            & iMaxMA & 0.000 & 1.645 & 47.07 & 0.097 & 0.000 & 0.000 & 0.935 & {\bf 83.45} \\
		  & \cc{SMA} & \cc{0.000}  & \cc{28.77} & \cc{78.36} & \cc{0.323} & \cc{8.258} & \cc{1.000} & \cc{0.903} & \cc{\bf 97.48} \\
            & Orth PGD-$\ell_1$  & $-$ & $-$ & $-$ & 2.000 & 0.000 & 0.032 & 0.000 & {\bf 99.65} \\
            & \cc{Orth PGD-$\ell_2$}  & \cc{$-$} & \cc{$-$} & \cc{$-$} & \cc{38.32} & \cc{2.097} & \cc{2.806} & \cc{2.548} & \cc{\bf 99.65}\\
            & Orth PGD-$\ell_\infty$ & $-$ & $-$ & $-$ & 98.13 & 87.97 & 34.23 & 98.16 & {\bf 99.65} \\
            & \cc{Orth MaxMa} & \cc{$-$} & \cc{$-$} & \cc{$-$} & \cc{1.806} & \cc{0.000} & \cc{0.032} & \cc{0.000} & \cc{\bf 99.65} \\
            & Orth iMaxMa & $-$ & $-$ & $-$ & 0.484 & 0.000 & 0.032 & 0.000 & {\bf 99.65}\\
		\bottomrule
	\end{tabular}
	\label{tab:wb-attack-res}
\end{table*}

Table \ref{tab:ob-attack-res02} reports the attack results of Mimicry, MaxMA, iMaxMA, and SMA, which are not suitable for iterating with a large number of loops. We make three observations. First, PAD-SMA can effectively defend against these attacks, except for Mimicry$\times 30$ (with an accuracy of 65.45\% on Malscan). Mimicry attempts to modify malware representations to resemble benign ones. As reported in Section \ref{sec:exp:dwoa}, adversarial training promotes ICNN ($g$ of PAD-SMA) to implicitly distinguish malicious examples from benign ones. Both aspects decrease PAD-SMA's capability in mitigating the oblivious Mimicry attack effectively. 
Second, all detectors can resist MaxMA and iMaxMA, except for DNN$^+$. Both attacks maximize the classification loss of DNN$^+$, leading DNN$^+$ to misclassify perturbed examples as benign (rather than the newly introduced label). Third, all detectors are vulnerable to the SMA attack (with maximum accuracy of 32.36\% on Drebin and 26.68\% on Malscan), except for PAD-SMA. This is because SMA stops perturbing malware when a successful adversarial example against $f$ is obtained although the degree of perturbations is small, which cannot be identified by $g$ of KDE, DLA, DNN$^+$, or ICNN.

\begin{tcolorbox}[boxrule=0.5pt,arc=4pt,
left=1.5pt,right=1.5pt,top=3pt,bottom=3pt,boxsep=0pt]
\noindent{\bf Answer to RQ2}: PAD-SMA is significantly more robust than KDE, DLA, DNN$^+$, and ICNN against oblivious attacks. Still, PAD-SMA cannot effectively resist the Mimicry attacks that are guided by multiple benign samples. 
\end{tcolorbox}

\subsection{RQ3: Robustness against Adaptive Attacks} \label{sec:exp:dva}

\noindent{\bf Experimental setup}. We measure the robustness of the detectors against adaptive attacks on the Drebin and Malscan datasets. We use the 8 detectors in the first group of experiments. The threshold $\tau$ is set as the one in the second group of experiments unless explicitly stated otherwise.
The attacker knows $f$ and $g$ (if applicable) to manipulate malware examples on the test sets. We change the 11 oblivious attacks to adaptive attacks by using the loss function given in Eq.\eqref{eq:lag}, which contains both $\ce$ and $\psi_\vartheta$. When perturbing an example, a linear search is conducted to look for a $\lambda$ from the set of $\{10^{-5},\ldots,10^{5}\}$. In addition, the Mimicry attack can query both $f$ and $g$ and get feedback then. On the other hand, since DNN, AT-rFGSM, and AT-MaxMA contain no adversary detector, the oblivious attacks trivially meet the adaptive requirement. The other 5 attacks are adapted from orthogonal (Orth for short) PGD \cite{DBLP:journals/corr/abs-2106-15023}, including Orth PGD-$\ell_1$, PGD-$\ell_2$, PGD-$\ell_\infty$, MaxMA, and iMaxMA. We use the scoring rule of Eq.\eqref{eq:pgd-sel} to select the orthogonal manner. The hyper-parameters of attacks are the same as the second group of experiments, except for PGD-$\ell_1$ using 500 iterations, PGD-$\ell_2$ using 200 iterations with step size 0.05, and PGD-$\ell_\infty$ using 500 iterations with step size 0.002.

\noindent{\bf Results}. Table \ref{tab:wb-attack-res} summarizes the experimental results. We make three observations. First, DNN is vulnerable to all attacks 
with 0\% accuracy. The Mimicry attack achieves the lowest effectiveness in evading DNN because it modifies examples without using the internal information of victim detectors. AT-rFGSM can harden the robustness of DNN to some extent, but is still sensitive to BCA, PGD-$\ell_1$, MaxMa, and iMaxMA attacks (with an accuracy $\leq$ 47.73\% on both datasets). With an adversary detector, KDE, DLA, DNN$^+$, and ICNN can resist a few attacks (e.g., rFGSM and PGD-$\ell_\infty$), but the effectiveness is limited. AT-MaxMA impedes a range of attacks except for iMaxMA (with a $69.94\%$ accuracy on Drebin and $47.07\%$ on Malscan) and Mimicry$\times 30$ (with a $70.64\%$ accuracy on Drebin and $52.48\%$ on Malscan), which are consistent with previous results \cite{li2020adversarial}.

Second, PAD-SMA significantly outperforms the other defenses (e.g., AT-MaxMA), by achieving robustness against 16 attacks on the Drebin dataset and 13 attacks on the Malscan dataset (with accuracy $\geq 85\%$). For example, PAD-SMA can mitigate MaxMA and iMaxMA, while AT-MaxMA can resist MaxMA but not iMaxMA (accuracy dropping by 11\% on Drebin and 14.7\% on Malscan). The reason is that PAD-SMA is optimized with convergence guaranteed, causing that more iterations do not promote attack effectiveness, which resonates our theoretical results. Moreover, PAD-SMA gains high detection accuracy ($\geq 97.64\%$) against orthogonal attacks because the same scoring rule is used and PAD-SMA renders loss function concave. 

Third, Mimicry$\times 30$ can evade all defenses (with accuracy $\leq$ 74.27\% on Drebin and $\leq$ 53.68\% on Malscan). We additionally conduct two experiments on Drebin: (i) when we retrain PAD-SMA with penalty factor $\beta_1$ increased from $\beta_1=0.1$ to $\beta_1=1.0$, the detection accuracy increases to 85.27\% against Mimicry$\times 30$ with the detection accuracy on the test dataset decreasing notably (F1 score decreasing to 78.06\%); (ii) when we train PAD-SMA on Mimicry$\times 30$ with additional 10 epochs, the robustness increases to 83.64\% against Mimicry$\times 30$ but  the detection accuracy also decreases on the test set. These hint that our method, as other adversarial malware training methods, suffers from a trade-off between robustness and accuracy.

\begin{tcolorbox}[boxrule=0.5pt,arc=4pt,
left=1.5pt,right=1.5pt,top=3pt,bottom=3pt,boxsep=0pt]
\noindent{\bf Answer to RQ3}: PAD-SMA outperforms the other defenses, by significantly hardening malware detectors against a range of adaptive attacks but not Mimicry$\times 30$. 
\end{tcolorbox}

\subsection{RQ4: Robustness against Practical Attacks} \label{sec:exp:dvp}

\noindent{\bf Experimental setup}. We implement a system to produce adversarial malware for all attacks considered. We handle the inverse feature mapping problem (Section \ref{sec:ata}) as in \cite{li2020adversarial}, by mapping perturbations in the feature space to the problem space. Our manipulation proceeds as follows: (i) obtain feature perturbations; (ii) disassemble an app using Apktool \cite{apktool:Online}; (iii) perform manipulation and assemble perturbed files using Apktool. We add manifest features and do not remove them for preserving an app's functionality. We permit all APIs that can be added and the APIs with {\tt public} modifier but no {\em class inheritance} can be hidden by the reflection technique 
(see supplementary materials for details). In addition, the functionality estimation is conducted by {\em Android Monkey}, which is an efficient fuzz testing tool that can randomly generate app activities to execute on Android devices, along with logs. If an app and its modified version have the same activities, we treat them as having the same functionality. However, we manually re-analyze the non-functional ones to cope with the randomness of Monkey. We wage Mimcry$\times30$, iMaxMA, and SMA attacks because they achieve a high evasion capability in the feature space. 

\noindent{\bf Results}. We respectively modify 1,098, 1,098, and 1,098 apps by waging the Mimcry$\times30$, iMaxMA, and SMA attacks to the Drebin test set (leading to 1,100 malicious apps in total), and 2,790, 2,791, and 2,790 apps to the Malscan test set (leading to 3,100 malicious apps in total). Most failed cases are packed apps against ApkTool.

\begin{table}[!htbp]
	\caption{The number of apps with functionalities preserved from 100 randomly selected examples.}
	\centering
	\setlength{\tabcolsep}{3.5pt}
	\begin{tabular}{c|c|cccc}
		\toprule
		\multirow{2}{*}{Dataset} &\multirow{2}{*}{\specialcell{Functionality}}& \multicolumn{4}{c}{Apps (\#)} \\\cmidrule{3-6}
		&&No attack & Mimicry$\times30$ & iMaxMA & SMA \\
		\midrule
		\multirow{2}{*}{Drebin}& \cc{Installable} & \cc{89} & \cc{89} & \cc{89} & \cc{89} \\
		& Monkey & 80 & 68 & 66 & 65 \\
		\midrule\midrule
		\multirow{2}{*}{\specialcell{Andro-\\zoo}}& \cc{Installable} & \cc{86} & \cc{84} & \cc{86} & \cc{83} \\
		& Monkey & 76 & 58 & 65 & 64 \\
		
		\bottomrule
	\end{tabular}
	\label{tab:func-est}
\end{table}

Table \ref{tab:func-est} reports the number of modified apps that retain the malicious functionality. Given 100 randomly chosen apps, 89 apps on Drebin and 86 apps on Malscan can be deployed on an Android emulator (running Android API version 8.0 and ARM library supported). Monkey testing says that the ratio of functionality preservation is at least 73.03\% (65 out of 89) on the Drebin dataset and 69.05\% (58 out of 84) on the Malscan dataset. Through manual inspection, we find that the injection of {\tt null} constructor cannot pass the verification mechanism of the Android Runtime. Moreover, Java reflection sometimes breaks an app's functionality when the app verifies whether an API name is changed and then chooses to throw an error.

\begin{figure}[!htbp]
	\centering
	\includegraphics[width=0.5\textwidth]{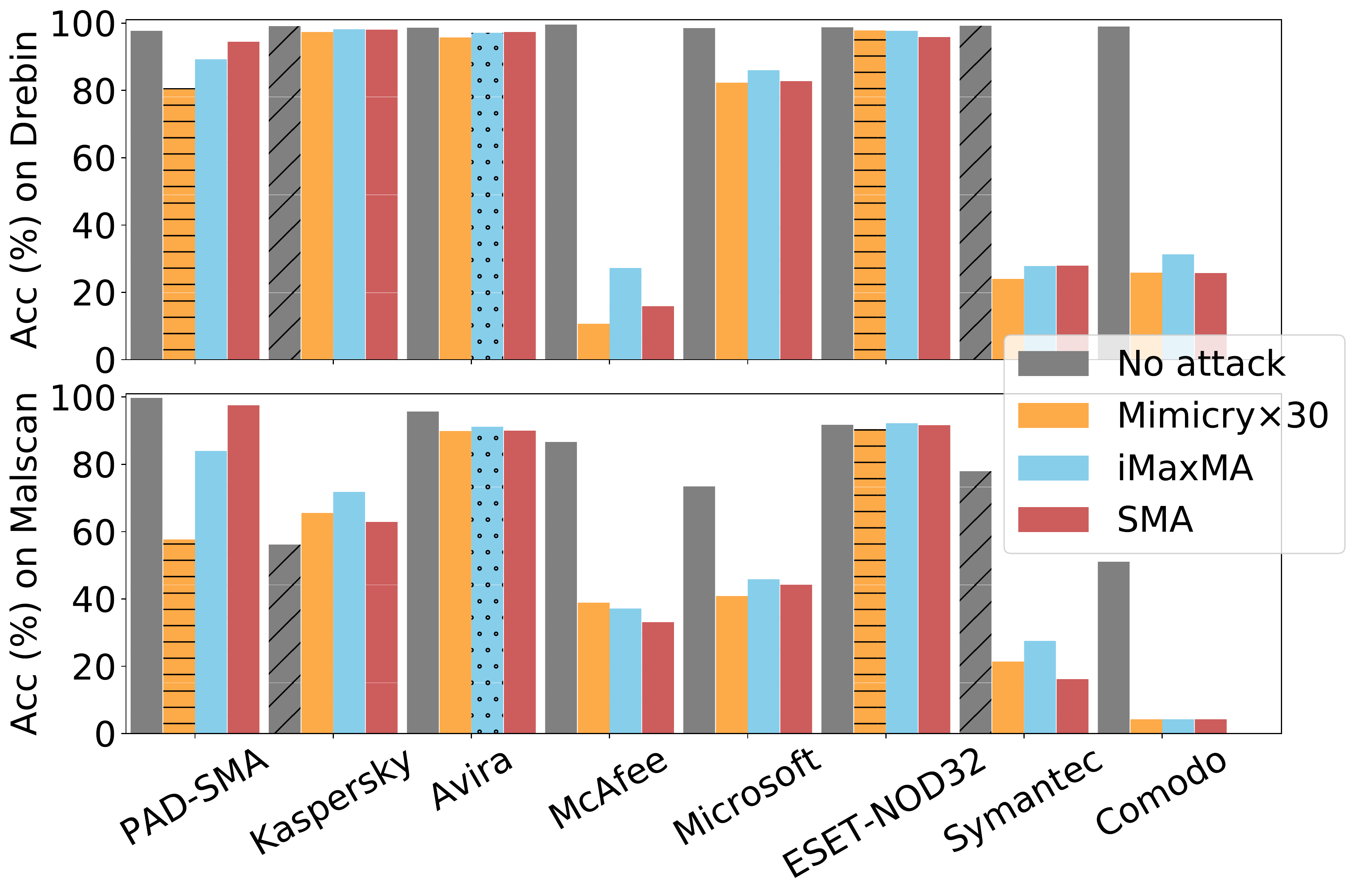}
	\caption{Effectiveness of PAD-SMA and malware scanners against practical attacks.}
	\label{fig:anti-scanners}
\end{figure}

Fig.\ref{fig:anti-scanners} depicts the detection accuracy of detectors against Mimicry$\times 30$, iMaxMA, and SMA attacks. We observe that PAD-SMA cannot surpass Avira and ESET-NOD32 on both the Drebin and Malscan datasets. Note that these attacks know the feature space of PAD-SMA but not anti-malware scanners. Nevertheless, PAD-SMA achieves comparable robustness to the three attacks by comparing with Microsoft, and outperforms McAfee, Symantec, and Comodo. In addition, Kaspersky is seemingly adaptive to these attacks because it obtains a slightly better accuracy on the modified apps than the unperturbed ones ($\leq$15.59\%) on the Malscan dataset.

\begin{tcolorbox}[boxrule=0.5pt,arc=4pt,
left=1.5pt,right=1.5pt,top=3pt,bottom=3pt,boxsep=0pt]
\noindent{\bf Answer to RQ4}: PAD-SMA is comparable to anti-malware scanners in the presence of practical attacks. It effectively mitigate iMaxMA and SMA attacks, but has limited success against Mimicry$\times 30$, akin to the cases of circumventing feature-space attacks.
\end{tcolorbox}
\section{Related Work} 
\label{sec:related-work}

We divide related prior studies into two classes: Adversarial Malware Detection (AMD) vs. Adversarial ML (AML).

\noindent{\bf Defenses against adversarial examples in AMD}. We further divide the related literature into three categories: (i) robust feature extraction, (ii) learning model enhancement, and (iii) adversarial example detection. 

In terms of robust feature extraction, Drebin features, including manifest instructions (e.g., required permissions) and syntax instructions (e.g., sensitive APIs), are usually applied to resist adversarial examples \cite{Daniel:NDSS,grosse2017adversarial,suciu2018does,pierazzi2019intriguing}. Furthermore, Demontis et al. \cite{7917369} demonstrate the robustness of Drebin features using several evasion attacks. However, a following study questions this observation with a mixture of attacks \cite{li2020adversarial}. Moreover, to cope with {\em obfuscation} attacks, researchers suggest leveraging system API calls \cite{hou2017hindroid}, and further enrich the representation by incorporating multiple modalities such as structural information (e.g., call graph), API usage (e.g., method argument types, API dependencies), and dynamic behaviors (e.g., network activity, memory dump) \cite{onwuzurike2019mamadroid,10.1145/3372297.3417291,DBLP:journals/tc/Ficco22}. In this paper, we mainly focus on improving the robustness of the learning model, although the feature robustness is also important. Therefore, we refine Drebin features by filtering the ones that can be easily manipulated.

In terms of learning model enhancement,
the defense mechanisms aim to enhance a malware detector itself to classify adversarial examples accurately. Several approaches exist, such as classifier randomization, ensemble learning, input transformation, and adversarial training, which are summarized by a recent survey \cite{DBLP:journals/corr/abs-2005-11671}. We focus on adversarial training, which augments the training dataset with adversarial examples \cite{xu2014evasion,DBLP:conf/eisic/ChenYB17,grosse2017adversarial,al2018adversarial}. In order to promote the robustness, the min-max adversarial training \cite{madry2017towards} in machine learning is adapted to the context of malware detection, aiming to make detectors perceive the optimal attack in a sense to resist non-optimal ones \cite{al2018adversarial,DBLP:journals/tnse/LiLYX21}. In practice, the attackers are free enough to generate multiple types of adversarial examples, straightly leading to the instantiation of adversarial training incorporating a mixture of attacks \cite{li2020adversarial}. In addition, combining adversarial training and ensemble learning further promotes robustness as long as the base model has a due amount of robustness \cite{li2020adversarial}; a recent study demonstrates that diversified features also promote the robustness of ensemble model \cite{DBLP:journals/tc/Ficco22}.
This paper aims to establish principled min-max adversarial training methods with rigorous robustness. Moreover, a new mixture of attacks is used to instantiate our framework.

In terms of adversarial example detection, the defenses aim to identify adversarial examples for further analysis. There are two approaches. The first approach is to study detectors based on traditional ML models such as ensemble learning based (e.g., \cite{smutz2016tree}). Inspired by the observation that grey-box attacks cannot thwart all basic building-block classifiers, Smutz et al. \cite{smutz2016tree} propose identifying evasion attacks via prediction confidences. However, it is not clear how to adapt these ideas to deep learning models because they leverage properties which may not exist in DL models (e.g., neural networks are poorly, rather than well, calibrated \cite{guo2017calibration}). The second approach is to leverage the invariant in malware features or detectors to recognize adversarial examples. For example, Grosse et al. \cite{DBLP:journals/corr/GrosseMP0M17} demonstrate the difference between examples and their perturbed versions using statistical tests. Li et al. \cite{li2018hashtran} and Li et al. \cite{DBLP:conf/www/LiZYLGC21} respectively propose detecting adversarial examples via stacked denoising autoencoders. However, these defense models seemingly cannot deal with adaptive attacks \cite{DBLP:journals/corr/GrosseMP0M17,li2018hashtran,DBLP:journals/corr/abs-2106-15023}. Moreover, some defense models are not validated with adaptive attacks \cite{DBLP:conf/www/LiZYLGC21}. When compared with these prior studies, our solution leverages a convex DNN model to recognize the evasion attacks, which is not only able to detect adversarial examples, but also able to promote principled defenses \cite{DBLP:conf/iclr/SinhaND18}, leading to a formal treatment on robustness. Although our model has malware and adversary detectors, it is different from ensemble learning because they use different losses.

\noindent{\bf Adversarial training in AML}. 
Adversarial training augments the training set with adversarial examples \cite{szegedyZSBEGF13,goodfellow6572explaining}. Multiple heuristic strategies have been proposed to generate adversarial examples, including the one that casts adversarial training as a min-max optimization problem \cite{madry2017towards}. It minimizes the loss for learning ML models upon the most powerful attack (i.e., considering the worst-case scenario). However, owing to the non-linearity of DNNs, it is NP-hard to solve the inner maximization exactly \cite{madry2017towards}. There are two lines of studies to improve the min-max adversarial training: one aims to select or produce the optimal adversarial examples (e.g., via advanced criterion or new learning strategies  \cite{DBLP:conf/iclr/TramerKPGBM18,DBLP:conf/iclr/0001ZY0MG20,DBLP:conf/ijcai/BaiL0WW21,DBLP:conf/cvpr/JiaZW0WC22}); the other aims to analyze statistical properties of resulting models (e.g.,via specific NN architectures or convexity assumptions \cite{DBLP:conf/aistats/XingSC21a,DBLP:conf/iclr/SinhaND18}). However, adversarial training is domain-specific, meaning that it is non-trivial to leverage these advancements for enhancing ML-based malware detectors.

\section{Conclusion} \label{sec:conclusion}

We devised a provable defense framework  for malware detection against adversarial examples. Instead of hardening the malware detector solely, we use an indicator to alert the presence of adversarial examples. We instantiate the framework via adversarial training with a new mixture of attacks, along with a theoretical analysis on the resulting robustness. Experiments with two Android datasets demonstrate the soundness of the framework against a set of attacks, including 3 practical ones.  
Future research needs to design other principled or verifiable methods. Learning or devising robust features, especially dynamic analysis based features, may be key to detecting adversarial examples.
Other open problems include unifying practical adversarial malware attacks, designing application-agnostic manipulations, and formally verifying functionality-preservation and model robustness.

\ifCLASSOPTIONcaptionsoff
  \newpage
\fi

\bibliographystyle{IEEEtran} 
\bibliography{IEEEabrv,malware}

\newpage
\clearpage
\appendices
\section{Theorem Proofs}
\subsection{Notations}
Table \ref{table:notations} summarizes the notations 
for improving the readability of the proofs.

\subsection{Proposition \ref{prop:framework}}
\begin{proposition*}
	Given  continuous function $\ce$, and continuous and convex distance $C(\cdot, \mathbf{x})=\max\{0, \psi_\vartheta(\cdot)-\tau\}$ with $\mathbf{x}\sim\mathbb{P}$, the dual problem of $\max\limits_{\mathbb{P}':W(\mathbb{P}',\mathbb{P})\leq0}\mathbb{E}_{\mathbf{x}'\sim\mathbb{P}'}\ce(\theta,\mathbf{x}',1)$ is 
	$$\inf_{\lambda}\Bigl\{\mathbb{E}_{\mathbf{x}\sim\mathbb{P}}\max\limits_{\delta_\mathbf{x}}(\ce(\theta,\mathbf{x}+\delta_\mathbf{x},1)-\lambda\psi_\vartheta(\mathbf{x}+\delta_\mathbf{x}) + \lambda\tau):\lambda\geq 0\Bigl\},$$ where $\mathbf{x}+\delta_\mathbf{x}\in\mathcal{X}$, $\psi_\vartheta(\mathbf{x}+\delta_\mathbf{x})\geq \tau$ and $W(\mathbb{P}',\mathbb{P}):=\inf\limits_{\Gamma}\left\{\int C(\mathbf{x}',\mathbf{x})d\Gamma(\mathbf{x}',\mathbf{x}):\Gamma\in\prod(\mathbb{P}',\mathbb{P})\right\}$.
\end{proposition*}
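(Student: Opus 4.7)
The plan is to apply standard Wasserstein distributionally robust optimization (DRO) duality, essentially following the template of Sinha et al.~\cite{DBLP:conf/iclr/SinhaND18}, and specialize it to our cost $C(\mathbf{x}',\mathbf{x})=\max\{0,\psi_\vartheta(\mathbf{x}')-\tau\}$. First I would rewrite the primal constraint $W(\mathbb{P}',\mathbb{P})\le 0$ via Lagrangian relaxation: since the constraint is a scalar inequality over $\mathbb{P}'$, weak duality gives
\[
\sup_{\mathbb{P}':W(\mathbb{P}',\mathbb{P})\le 0}\mathbb{E}_{\mathbf{x}'\sim\mathbb{P}'}\ce(\theta,\mathbf{x}',1)\;\le\;\inf_{\lambda\ge 0}\sup_{\mathbb{P}'}\bigl\{\mathbb{E}_{\mathbf{x}'\sim\mathbb{P}'}\ce(\theta,\mathbf{x}',1)-\lambda W(\mathbb{P}',\mathbb{P})\bigr\}.
\]
Strong duality then follows from the standard Wasserstein DRO argument (e.g.\ Blanchet--Murthy): $C$ is lower semicontinuous and nonnegative, $\ce$ is continuous, and the ambiguity set is nonempty (it contains $\mathbb{P}$), so the duality gap vanishes.

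Next I would collapse the inner supremum over $\mathbb{P}'$ to a pointwise supremum. Because $W(\mathbb{P}',\mathbb{P})$ is itself an infimum over couplings $\Gamma\in\prod(\mathbb{P}',\mathbb{P})$, lifting the optimization from $\mathbb{P}'$ to the joint $\Gamma$ (with fixed second marginal $\mathbb{P}$) yields
\[
\sup_{\Gamma:\,\Gamma_2=\mathbb{P}}\int\bigl[\ce(\theta,\mathbf{x}',1)-\lambda C(\mathbf{x}',\mathbf{x})\bigr]\,d\Gamma(\mathbf{x}',\mathbf{x}).
\]
Disintegrating $\Gamma(d\mathbf{x}',d\mathbf{x})=\Gamma(d\mathbf{x}'\mid\mathbf{x})\,\mathbb{P}(d\mathbf{x})$ and using a measurable selection (or the interchangeability of supremum and integral for normal integrands) converts this into
\[
\mathbb{E}_{\mathbf{x}\sim\mathbb{P}}\sup_{\mathbf{x}'\in\mathcal{X}}\bigl[\ce(\theta,\mathbf{x}',1)-\lambda C(\mathbf{x}',\mathbf{x})\bigr].
\]

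The final step is to plug in the explicit form of $C$. Split the inner supremum into the two regimes $\psi_\vartheta(\mathbf{x}')<\tau$ and $\psi_\vartheta(\mathbf{x}')\ge\tau$. In the first regime $C=0$, so the objective is $\ce(\theta,\mathbf{x}',1)$; in the second $C=\psi_\vartheta(\mathbf{x}')-\tau$, so the objective is $\ce(\theta,\mathbf{x}',1)-\lambda\psi_\vartheta(\mathbf{x}')+\lambda\tau$. Since the statement restricts attention to $\psi_\vartheta(\mathbf{x}+\delta_\mathbf{x})\ge\tau$ (i.e.\ genuine ``adversarial'' shifts), writing $\mathbf{x}'=\mathbf{x}+\delta_\mathbf{x}$ gives exactly the stated dual expression
\[
\inf_{\lambda\ge 0}\mathbb{E}_{\mathbf{x}\sim\mathbb{P}}\max_{\delta_\mathbf{x}}\bigl[\ce(\theta,\mathbf{x}+\delta_\mathbf{x},1)-\lambda\psi_\vartheta(\mathbf{x}+\delta_\mathbf{x})+\lambda\tau\bigr].
\]

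The main obstacle I anticipate is the strong-duality / measurable-selection justification in steps one and two: on a discrete space $\mathcal{X}$ with convex continuous $C$ and continuous $\ce$, one needs either to cite the general Kantorovich-style duality theorem directly, or to verify conditions like properness, tightness of $\mathbb{P}'$, and nonemptiness of the feasible set (trivially satisfied since $C(\mathbf{x},\mathbf{x})=0$ whenever $\psi_\vartheta(\mathbf{x})\le\tau$, so $\mathbb{P}\in\{W(\cdot,\mathbb{P})\le 0\}$). The algebraic manipulation after duality is then routine; the subtlety is purely measure-theoretic.
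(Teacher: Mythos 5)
Your proposal is correct and follows essentially the same route as the paper: both adapt the Wasserstein DRO duality of Sinha et al., first exchanging the $\inf$ over $\lambda$ with the $\sup$ over $\mathbb{P}'$ (the paper invokes Slater's condition where you cite Blanchet--Murthy-style strong duality), then lifting to couplings and collapsing the supremum over distributions to a pointwise maximum before substituting the explicit cost $C$. Your treatment of the disintegration/measurable-selection step is somewhat more careful than the paper's, which argues the reverse inequality via an arbitrary slack $\zeta\geq 0$, but the overall structure is the same.
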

\begin{proof}
The proof is adapted from the one presented in \cite{DBLP:conf/iclr/SinhaND18}.
	\begin{fleqn}
	\begin{align*}
	&\max\limits_{\mathbb{P}':W(\mathbb{P}',\mathbb{P})\leq0}\mathbb{E}_{\mathbf{x}'\sim\mathbb{P}'}\ce(\theta,\mathbf{x}',1) \\
		=&\max\limits_{\mathbb{P}':W(\mathbb{P}',\mathbb{P})\leq0}\inf_{\lambda\geq 0}\left\{\mathbb{E}_{\mathbf{x}'\sim\mathbb{P}'}\left[\ce(\theta,\mathbf{x}',1)\right]-\lambda W(\mathbb{P}',\mathbb{P})\right\}\\
		\overset{\footnotesize \whitecircled{1}}{=}&\inf_{\lambda\geq 0}\max\limits_{\mathbb{P}':W(\mathbb{P}',\mathbb{P})\leq0}\left\{\mathbb{E}_{\mathbf{x}'\sim\mathbb{P}'}\left[\ce(\theta,\mathbf{x}',1)\right]-\lambda W(\mathbb{P}',\mathbb{P})\right\} \\
		=&\inf_{\lambda\geq 0}\max\limits_{\Gamma:W(\mathbb{P}',\mathbb{P})\leq0}\left\{\mathbb{E}_{(\mathbf{x}',\mathbf{x})\sim\Gamma}\left[\ce(\theta,\mathbf{x}',1)-\lambda C(\mathbf{x}',\mathbf{x})\right]\right\} \\
		\leq&\inf_{\lambda\geq 0}\left\{\mathbb{E}_{\mathbf{x}\sim\mathbb{P}}\left[\max\limits_{\mathbf{x}'}\left(\ce(\theta,\mathbf{x}',1)-\lambda C(\mathbf{x}',\mathbf{x})\right)\right]\right\},
	\end{align*}
	\end{fleqn}
where ${\footnotesize \whitecircled{1}}$ holds because of Slater’s condition. Recall that $\mathbf{x}'$ is perturbed from $\mathbf{x}$, this constraint leads to 
	\begin{fleqn}
	\begin{align*}
		&\max\limits_{W(\mathbb{P}',\mathbb{P})\leq0}\left\{\mathbb{E}_{(\mathbf{x}',\mathbf{x})\sim\Gamma}\left[\ce(\theta,\mathbf{x}',1)-\lambda C(\mathbf{x}',\mathbf{x})\right]\right\} \\
		\geq & \mathbb{E}_{\mathbf{x}\sim\mathbb{P}}\left\{\max\limits_{\mathbb{P}':W(\mathbb{P}',\mathbb{P})\leq0}\left[\mathbb{E}_{\mathbf{x}'\sim \mathbb{P}'|\mathbb{P}}\left(\ce(\theta,\mathbf{x}',1)-\lambda C(\mathbf{x}',\mathbf{x})\right)\right]\right\} \\
		\geq & \mathbb{E}_{\mathbf{x}\sim\mathbb{P}}\left[\max\limits_{\mathbf{x}'\in\mathcal{X}}\left(\ce(\theta,\mathbf{x}'(\mathbf{x}),1)-\lambda C(\mathbf{x}'(\mathbf{x}),\mathbf{x})\right)\right] - \zeta, 
	\end{align*}
	\end{fleqn}
	where $\zeta\geq 0$ exists as the maximum value of a distribution can have measurable distance to its expectation. As $\zeta$ is arbitrary, this gives
	\begin{fleqn}
	\begin{align*}
		&\max\limits_{\mathbb{P}':W(\mathbb{P}',\mathbb{P})\leq0}\mathbb{E}_{\mathbf{x}'\sim\mathbb{P}'}\ce(\theta,\mathbf{x}',1) \\
		=&\inf_{\lambda\geq 0}\left\{\mathbb{E}_{\mathbf{x}\sim\mathbb{P}}\left[\max\limits_{\mathbf{x}'}\left(\ce(\theta,\mathbf{x}',1)-\lambda C(\mathbf{x}',\mathbf{x})\right)\right]\right\} \\
		=&\inf_{\lambda\geq 0}\left\{\mathbb{E}_{\mathbf{x}\sim\mathbb{P}}\left[\max\limits_{\mathbf{x}'}\left(\ce(\theta,\mathbf{x}',1)-\lambda\psi_\vartheta(\mathbf{x}')+\lambda\tau\right)\right]\right\},
	\end{align*}
	\end{fleqn}
	which leads to the proposition.
\end{proof}

\begin{table}[htbp!]
\centering\caption{Summary of notations\label{table:notations}}
\begin{tabular}{l|p{.33\textwidth}}
\hline
Notation & Meaning\\\hline
$z\in\mathcal{Z}$ & software sample $z\in\mathcal{Z}$ in the space $\mathcal{Z}$ \\
$y\in\mathcal{Y}$ & ground truth label $y$ corresponding to $z$ in the space $\mathcal{Y}=\{0,1\}$\\
$\mathbf{x}\in\mathcal{X}$ & representation vector in the discrete space $\mathcal{X}$ \\
$\phi:\mathcal{Z}\to\mathcal{X}$ & feature extraction $\phi$ maps $z$ to $\mathbf{x}\in\mathcal{X}$\\
$\phi^{-1},\tilde{\phi}^{-1}$ & exact and approximate inverse feature extractions, respectively \\
$\varphi_\theta:\mathcal{X}\to\mathcal{Y}$ & ML classifier $\varphi_\theta$ maps $\mathbf{x}$ into label space $\mathcal{Y}$ \\
$f:\mathcal{Z}\to\mathcal{Y}$ & malware detector $f(\cdot)=\varphi_\theta(\phi(\cdot))$ \\
$\psi_\vartheta:\mathcal{X}\to\mathbb{R}$ & density estimator maps $\mathbf{x}$ to a real-value confidence score\\
$g:\mathcal{Z}\to\mathbb{R}$ & adversary detector $g(\cdot)=\psi_\vartheta(\phi(\cdot))$ \\
$n$& the number of dimensions of data sample $\mathbf{x}$\\
$\theta,\vartheta$ & learnable parameters of ML models \\
$\mathcal{F},\mathcal{G}$ & loss functions for $f$ and $g$, respectively \\
$\mathcal{J}:\mathcal{X}\to\mathbb{R}$ &  criterion function for attackers \\
$D_z$ & training dataset on $\mathcal{Z}\times\mathcal{Y}$, i.e., $D_z\subseteq\mathcal{Z}\times\mathcal{Y}$ \\
$D_\mathbf{x}$ & training dataset on $\mathcal{X}\times\mathcal{Y}$ corresponding to $D_z$ \\
$\delta_z, z'$ & perturbations and adversarial example in the problem space, $z'=z+\delta_z$ \\
$\delta_\mathbf{x}, \mathbf{x}',\mathbf{x}^\ast$ & perturbations and adversarial example $\mathbf{x}'=\mathbf{x}+\delta_\mathbf{x}\in\mathcal{X}$, and $\mathbf{x}^\ast$ being optimal one \\
$\mathbf{e}_p$ & a unit vector with $\|\mathbf{e}\|_p=1$ for $p$ norm \\
$\beta_1,\beta_2,\lambda$ & positive values serving as penalty factors \\
$C$ & a point-wise measurement $C:\mathcal{X}\times\mathcal{X}\to\mathbb{R}$ \\
$\mathbb{P}, \mathbb{P}'$ & the underlying distributions of $\mathbf{x}$ and $\mathbf{x}'$, respectively \\
$W$ & Wasserstein distance \\
$p=1,2,\infty$ & $\ell_p$ norm types \\
$B$ &  batch size \\
$t, T$ & $t^\text{th}$ times of $T$ iterations for attacks \\
$j, N$ & $j^\text{th}$ times of $N$ epochs for training \\ 
$\gamma$ & learning rate of optimization for training \\
${\sf L}_\mathbf{xx}^f, {\sf L}_\mathbf{x\theta}^f$ & smoothness factors of classification loss w.r.t. input \\
${\sf L}_{\theta\mathbf{x}}^f$ & smoothness factor of classification loss w.r.t. parameters \\
${\sf L}_\mathbf{xx}^g, {\sf L}_\mathbf{x\vartheta}^g$ & smoothness factors of density estimation loss w.r.t. input \\
${\sf M}_\mathbf{xx}^g$ & convexity factor of $\psi_\vartheta$ \\
\hline
\end{tabular}
\end{table}

\subsection{Theorem \ref{theorem:attack}} \label{sec:appendix:theorem1}
\begin{theorem*}
	Suppose the smoothness assumption holds. 
	When ${\sf L}_{\bf xx}^f < \lambda {\sf M}_{\bf xx}^g$, the perturbed sample $\mathbf{x}'=\mathbf{x}+\delta_{\mathbf{x}}^{(T)}$ from Algorithm \ref{alg:adv-train} satisfies:
	$$
	\frac{\mathcal{J}(\mathbf{x}^\ast)-\mathcal{J}(\mathbf{x}')}{\mathcal{J}(\mathbf{x}^\ast)-\mathcal{J}(\mathbf{x})}\leq\exp(-\frac{T}{d}\cdot\frac{\lambda {\sf M}_{\bf xx}^g-{\sf L}_{\bf xx}^f}{\lambda {\sf L}_{\bf xx}^g+{\sf L}_{\bf xx}^f}),
	$$
	where $d$ is the dimension and $\mathcal{J}(\mathbf{x})=\ce(\theta,\mathbf{x},y)-\lambda\psi_\vartheta(\mathbf{x})$.
\end{theorem*}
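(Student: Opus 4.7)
The plan is to leverage the scoring rule \eqref{eq:pgd-sel} and reduce the analysis to the PGD-$\ell_1$ branch, which is exactly the greedy (Gauss--Southwell) coordinate-ascent step on $\mathcal{J}=\ce(\theta,\cdot,1)-\lambda\psi_\vartheta(\cdot)$. Because SMA picks at each iteration whichever of the three candidates maximizes $\mathcal{J}$, the suboptimality of its output after $T$ steps is dominated by that of pure PGD-$\ell_1$, so it is enough to prove the exponential bound for the coordinate-ascent trajectory and then transfer it.

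Denote $\mu=\lambda {\sf M}_{\bf xx}^g-{\sf L}_{\bf xx}^f$ and $L=\lambda {\sf L}_{\bf xx}^g+{\sf L}_{\bf xx}^f$. By Proposition~\ref{lemma:conca-smooth}, under ${\sf L}_{\bf xx}^f<\lambda {\sf M}_{\bf xx}^g$ the map $\mathcal{J}$ is $\mu$-strongly concave and $L$-smooth with respect to $\|\cdot\|_2$ on $[\underline{\mathbf{u}},\overline{\mathbf{u}}]$. Strong concavity yields the standard suboptimality-to-gradient bound
$$\mathcal{J}(\mathbf{x}^\ast)-\mathcal{J}(\mathbf{x}^{(t)})\leq\frac{1}{2\mu}\|\nabla\mathcal{J}(\mathbf{x}^{(t)})\|_2^2,$$
while smoothness applied along the greedy one-coordinate step $\mathbf{e}_1=\sign([\nabla\mathcal{J}]_{i^\ast})\mathbf{e}_{i^\ast}$ with $i^\ast=\argmax_i|[\nabla\mathcal{J}]_i|$ and reference step $1/L$ gives
$$\mathcal{J}(\mathbf{x}^{(t+1)})-\mathcal{J}(\mathbf{x}^{(t)})\geq\frac{1}{2L}\|\nabla\mathcal{J}(\mathbf{x}^{(t)})\|_\infty^2 \geq \frac{1}{2Ld}\|\nabla\mathcal{J}(\mathbf{x}^{(t)})\|_2^2,$$
using the elementary inequality $\|v\|_\infty^2\geq\|v\|_2^2/d$. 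Chaining the two estimates produces the one-step contraction
$$\mathcal{J}(\mathbf{x}^\ast)-\mathcal{J}(\mathbf{x}^{(t+1)})\leq\Bigl(1-\frac{\mu}{Ld}\Bigr)\bigl(\mathcal{J}(\mathbf{x}^\ast)-\mathcal{J}(\mathbf{x}^{(t)})\bigr),$$
and unrolling $T$ times from $\mathbf{x}^{(0)}=\mathbf{x}$ together with $1-s\leq e^{-s}$ yields exactly the claimed exponential rate.

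The main obstacles are handling the box projection and the fixed step sizes $\alpha_p$ used in Algorithm~\ref{alg:adv-train} rather than the textbook choice $1/L$. Since $[\underline{\mathbf{u}}-\mathbf{x},\overline{\mathbf{u}}-\mathbf{x}]$ decouples across coordinates and the greedy move alters only one coordinate, the projection reduces to a one-dimensional clip: if the unclipped update lies in the interval the smoothness lower bound applies verbatim, otherwise monotonicity of the one-dimensional $L$-smooth concave restriction along $\mathbf{e}_{i^\ast}$ still gives an increment of at least $\|\nabla\mathcal{J}\|_\infty^2/(2L)$. For the step size, one can either assume $\alpha_1=1/L$ (matching the convention in the analogous analysis of \cite{DBLP:conf/iclr/SinhaND18}) or exploit \eqref{eq:pgd-sel}: because it accepts whichever candidate maximizes $\mathcal{J}$, replacing the algorithm's step by a virtual $1/L$ step cannot decrease the chosen increment, so the one-step contraction survives. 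The handoff from PGD-$\ell_1$ to SMA is then immediate: the scoring rule forces $\mathcal{J}(\mathbf{x}+\delta_\mathbf{x}^{(t+1)})\geq\mathcal{J}(\mathbf{x}+\delta_{\mathbf{x},1}^{(t+1)})$ at every iteration, so the exponential bound derived for the coordinate-ascent trajectory transfers to the actual output $\mathbf{x}'=\mathbf{x}+\delta_\mathbf{x}^{(T)}$.
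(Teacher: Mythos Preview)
Your proposal is correct and follows essentially the same route as the paper: strong concavity gives the suboptimality-to-gradient bound (the paper's Lemma~\ref{lemma:pertb}), smoothness along the normalized step gives one-step progress $\geq\|\nabla\mathcal{J}\|_q^2/(2L)$, the norm inequality $\|\cdot\|_\infty^2\geq\|\cdot\|_2^2/d$ supplies the factor $d$, and unrolling yields the exponential rate; the paper carries general $p\in\{1,2,\infty\}$ while you reduce to $p=1$ (Gauss--Southwell) and transfer to SMA via the scoring rule, and both idealize the step size to $\alpha_p=\|\nabla\mathcal{J}\|_q/L$ (the paper explicitly acknowledges this ``is not always held''). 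One caveat: your second justification for the step size---that \eqref{eq:pgd-sel} lets you swap in a virtual $1/L$ step---does not follow, since the scoring rule only ranks the three actual fixed-$\alpha_p$ candidates and not an idealized one; but your first alternative (assume $\alpha_1=1/L$) is exactly the paper's move, so the argument stands.
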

	\begin{proof}
	We first present the following lemma: 
	\begin{lemma} \label{lemma:pertb}
		Given an instance-label pair $(\mathbf{x},y)$ with perturbation $\forall \delta_\mathbf{x}^{(t1)},\delta_\mathbf{x}^{(t2)} \in [\underline{\mathbf{u}}-\mathbf{x},\overline{\mathbf{u}}-\mathbf{x}]$ with $0\leq t1<t2\leq T$. We have
		\begin{align*}
		\mathcal{J}(\mathbf{x}^{(t2)})-\mathcal{J}(\mathbf{x}^{(t1)})
			\leq\frac{1/2}{\lambda {\sf M}_{\bf xx}^g-{\sf L}_{\bf xx}^f}\left\|\nabla_{\bf x}\mathcal{J}(\mathbf{x}^{(t1)})\right\|_2^2
		\end{align*}
		where $\mathbf{x}^{(t1)}=\mathbf{x}+\delta_\mathbf{x}^{(t1)}$ and $\mathbf{x}^{(t2)} = \mathbf{x}+\delta_\mathbf{x}^{(t2)}$.
		\end{lemma}
		Based on Proposition \ref{lemma:conca-smooth}, we have
		\begin{fleqn}
		\begin{align*}
			&\mathcal{J}(\mathbf{x}^{(t2)})-\mathcal{J}(\mathbf{x}^{(t1)}) \\
			\leq &\langle\nabla_{\bf x}\mathcal{J}(\mathbf{x}^{(t1)}),\mathbf{x}^{(t2)}-\mathbf{x}^{(t1)}\rangle-\frac{\lambda {\sf M}_{\bf xx}^g - {\sf L}_{\bf xx}^f}{2}\|\mathbf{x}^{(t2)}-\mathbf{x}^{(t1)}\|_2^2 \\
			\leq & \text{\small $\max_{\mathbf{a}\in[\underline{\mathbf{u}},\overline{\mathbf{u}}]}\left(\left\langle\nabla_{\bf x}\mathcal{J}(\mathbf{x}^{(t1)}),\mathbf{a}-\mathbf{x}^{(t1)}\right\rangle-\frac{\lambda {\sf M}_{\bf xx}^g - {\sf L}_{\bf xx}^f}{2}\|\mathbf{a}-\mathbf{x}^{(t1)}\|_2^2\right)$} 
		\end{align*}
		\end{fleqn}
Let $\mathbf{a}-\mathbf{x}^{(t1)}$ follow the same direction as $\nabla_{\bf x}\mathcal{J}$. We obtain the maximum $\frac{1/2}{\lambda {\sf M}_{\bf xx}^g-{\sf L}_{\bf xx}^f}\left\|\nabla_{\bf x}\mathcal{J}(\mathbf{x}^{(t1)})\right\|_2^2$ at the point $\mathbf{a}=\mathbf{x}^{(t1)}-1/({\sf L}_{\bf xx}^f-\lambda {\sf M}_{\bf xx}^g)\nabla_{\bf x}\mathcal{J}(\mathbf{x}^{(t1)})$. This leads to the lemma.
		
		\smallskip
		\noindent Further, let $p\,(p=1,2,\infty)$ norm correspond to its dual version $q\,(q=\infty,2,1)$. 
		Considering two adjacent perturbations $\delta_\mathbf{x}^{(t)}$ and $\delta_\mathbf{x}^{(t+1)}$ with $0 \leq t< T$, we can derive:
	\begin{fleqn}
	\begin{align*}
		&\mathcal{J}(\mathbf{x}+\delta_\mathbf{x}^{(t+1)}))-\mathcal{J}(\mathbf{x}+\delta_\mathbf{x}^{(t)})\\
		\geq&\langle\nabla_{\bf x}\mathcal{J}(\mathbf{x}+\delta_\mathbf{x}^{(t)})),\alpha_p\mathbf{e}_{p}\rangle-\alpha^2_p\frac{\lambda {\sf L}_{\bf xx}^g+{\sf L}_{\bf xx}^f}{2}\\
		=&\alpha_p\|\nabla_{\bf x}\mathcal{J}(\mathbf{x}+\delta_\mathbf{x}^{(t)})\|_q - \alpha^2_p\frac{\lambda {\sf L}_{\bf xx}^g+{\sf L}_{\bf xx}^f}{2} 
       \end{align*}
	\end{fleqn}
By plugging in
$$\alpha_p=\frac{\|\nabla_{\bf x}\mathcal{J}(\mathbf{x}+\delta_\mathbf{x}^{(t)})\|_q}{\lambda {\sf L}_{\bf xx}^g+{\sf L}_{\bf xx}^f},$$
we have
        \begin{fleqn}
	\begin{align*}
         &\mathcal{J}(\mathbf{x}+\delta_\mathbf{x}^{(t+1)}))-\mathcal{J}(\mathbf{x}+\delta_\mathbf{x}^{(t)}) \\
		\geq&\frac{1}{2\lambda {\sf L}_{\bf xx}^g+2{\sf L}_{\bf xx}^f}\left\Vert\nabla_{\bf x}\mathcal{J}(\mathbf{x}+\delta_\mathbf{x}^{(t)})\right\Vert_{q}^2 \\
		\overset{\footnotesize \whitecircled{2}}{\geq}&\frac{1}{2d\lambda {\sf L}_{\sf xx}^g+2d{\sf L}_{\bf xx}^f}\left\Vert\nabla_{\bf x}\mathcal{J}(\mathbf{x}+\delta_\mathbf{x}^{(t)})\right\Vert_2^2 \nonumber\\
		\overset{\footnotesize \whitecircled{3}}{\geq}& \frac{\lambda {\sf M}_{\bf xx}^g - {\sf L}_{\bf xx}^f}{d\lambda {\sf L}_{\bf xx}^g+d{\sf L}_{\bf xx}^f}\big(\mathcal{J}(\mathbf{x}+\delta_\mathbf{x}^\ast)-\mathcal{J}(\mathbf{x}+\delta_\mathbf{x}^{(t)})\big)
	\end{align*}
	\end{fleqn}
where {\footnotesize $\whitecircled{2}$} holds because of inequalities $\sqrt{d}\|\cdot\|_\infty\geq\|\cdot\|_2$ and $\|\cdot\|_1\geq\|\cdot\|_2$ on vector norms.
{\footnotesize $\whitecircled{3}$} holds because of Lemma \ref{lemma:pertb}, while noting that the value of $\alpha_p$ is not always held. Nevertheless, for any $\alpha_p$, we can derive certain theoretical results according to $\|\nabla_{\bf x}\mathcal{J}(\mathbf{x}+\delta_\mathbf{x}^{(t)})\|_q$, but decreasing the elegance of formulation. Furthermore, we have
	\begin{fleqn}
	\begin{align*}
	&\mathcal{J}(\mathbf{x}+\delta_\mathbf{x}^{(t+1)}) - \mathcal{J}(\mathbf{x}+\delta_\mathbf{x}^{(t)}) \\
	= &\text{\small$ \left(\mathcal{J}(\mathbf{x}+\delta_\mathbf{x}^\ast)-\mathcal{J}(\mathbf{x}+\delta_\mathbf{x}^{(t)})\right)-\left(\mathcal{J}(\mathbf{x}+\delta_\mathbf{x}^\ast)-\mathcal{J}(\mathbf{x}+\delta_\mathbf{x}^{(t+1)})\right)$}\\
	\geq &\frac{\lambda {\sf M}_{\bf xx}^g - {\sf L}_{\bf xx}^f}{d\lambda {\sf L}_{\bf xx}^g+d{\sf L}_{\bf xx}^f}\big(\mathcal{J}(\mathbf{x}+\delta_\mathbf{x}^\ast)-\mathcal{J}(\mathbf{x}+\delta_\mathbf{x}^{(t)})\big).
	\end{align*}
	\end{fleqn}
By re-organizing the preceding inequality, we obtain the gap between the optimal attack and the approximate one:
	\begin{fleqn}
	\begin{align*}
		&\mathcal{J}(\mathbf{x}^\ast)-\mathcal{J}(\mathbf{x}')=\mathcal{J}(\mathbf{x}+\delta_\mathbf{x}^\ast)-\mathcal{J}(\mathbf{x}+\delta_\mathbf{x}^{(T)}) \\
		\leq& \left(\mathcal{J}(\mathbf{x}+\delta_\mathbf{x}^\ast)-\mathcal{J}(\mathbf{x}+\delta_\mathbf{x}^{(T-1)})\right)\left(1-\frac{\lambda {\sf M}_{\bf xx}^g - {\sf L}_{\bf xx}^f}{d\lambda {\sf L}_{\bf xx}^g+d{\sf L}_{\bf xx}^f}\right)\\
		\leq& \cdots \\
		\leq&\left(\mathcal{J}(\mathbf{x}+\delta_\mathbf{x}^\ast)-\mathcal{J}(\mathbf{x}+\delta_\mathbf{x}^{(0)})\right)\left(1-\frac{\lambda {\sf M}_{\bf xx}^g - {\sf L}_{\bf xx}^f}{d\lambda {\sf L}_{\bf xx}^g+d{\sf L}_{\bf xx}^f}\right)^T\\
		\leq&(\mathcal{J}(\mathbf{x}^\ast)-\mathcal{J}(\mathbf{x}))\exp(-\frac{T}{d}\cdot\frac{\lambda {\sf M}_{\bf xx}^g - {\sf L}_{\bf xx}^f}{\lambda {\sf L}_{\bf xx}^g+{\sf L}_{\bf xx}^f}).
	\end{align*}
	\end{fleqn}
	This leads to the theorem.
\end{proof}

\subsection{Theorem \ref{theorem:convergence}} \label{sec:appendix:theorem2}

Let  $\mathcal{H}(\theta)=\mathbb{E}_{(\mathbf{x},y)\in D_\mathbf{x}}\ce(\theta,\mathbf{x}^\ast(\theta),y)$ denote the objective on the entire training dataset $D_\mathbf{x}$. Given a batch of training data samples $\{(\mathbf{x}_i,y_i)\}_{i=1}^B$, let $h(\theta)=\frac{1}{B}\sum_{i=1}^{B}\ce(\theta,\mathbf{x}^\ast_i,y_i)$ denote the mean classification loss on a batch of optimal adversarial examples. This implies that $\mathbf{x}^\ast$ is perturbed from $\mathbf{x}$ satisfying $\langle\nabla_{\bf x}\mathcal{J}(\mathbf{\bf x}^\ast), {\mathbf{x}'}-\mathbf{x}^\ast\rangle\leq0$ with ${\mathbf{x}'}$ near to $\mathbf{x}^\ast$. Indeed, the parameter $\theta$ is updated by $\theta^{(j+1)}=\theta^{(j)}-\gamma^{(j)} \nabla\hat{h}(\theta^{(j)})$, where $\hat{h}(\theta^{(j)})=\frac{1}{B}\sum_{i=1}^{B}\ce(\theta^{(j)},{\mathbf{x}}'_i)$ on perturbed examples, and $\gamma^{(j)}$ is the learning rate at $j^\text{th}$ iteration.

We additionally make an assumption of bounded gradients for SGD \cite{pmlr-v97-wang19i}. 
\begin{assumption}[Boundness assumption \cite{DBLP:conf/iclr/SinhaND18}] \label{assump:bound}
The variance of stochastic gradients is bounded by a constant $\zeta^2>0$ where
	$$\mathbb{E}(\|\nabla h(\theta)-\nabla \mathcal{H}(\theta)\|_2^2)\leq\zeta^2.$$
\end{assumption}

We first show $\mathcal{H}$ is smooth and then prove the SGD convergence under the approximate attack. Recall that ${\sf L}_{\theta \mathbf{x}}^f$ and ${\sf L}_{\theta\theta}^f$ denote the Lipschitz contant of $\nabla_\theta \ce(\theta,\mathbf{x},y)$ w.r.t $\mathbf{x}$ and $\theta$, respectively.

\begin{lemma} \label{lemma:convg}
	Let assumption \ref{assump:smooth} hold. Then, $\mathbb{E}_{(\mathbf{x},y)\in D_\mathbf{x}}\ce(\theta,\mathbf{x}^\ast,y)$ is ${\sf L}$-smooth, where ${\sf L}=\frac{{\sf L}^f_{\theta \mathbf{x}}(\lambda {\sf L}_{\mathbf{x}\theta}^g+{\sf L}_{\mathbf{x}\theta}^f)}{\lambda {\sf M}_{\bf xx}^g-{\sf L}_{\bf xx}^f}+{\sf L}^f_{\theta\theta}.$
\end{lemma}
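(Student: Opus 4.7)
The plan is to follow the standard value-function smoothness argument for strongly concave inner problems, in the spirit of Sinha--Namkoong--Duchi. The hypothesis ${\sf L}_{\mathbf{xx}}^f<\lambda{\sf M}_{\mathbf{xx}}^g$ makes the inner Lagrangian $\mathcal{J}(\mathbf{x};\theta)=\ce(\theta,\mathbf{x},y)-\lambda\psi_\vartheta(\mathbf{x})$ strongly concave in $\mathbf{x}$ by Proposition \ref{lemma:conca-smooth}, so the maximizer $\mathbf{x}^\ast(\theta)$ is unique and Danskin's envelope theorem applies; because $\psi_\vartheta$ does not depend on $\theta$, this yields the clean identity $\nabla_\theta\mathcal{H}(\theta)=\mathbb{E}[\nabla_\theta\ce(\theta,\mathbf{x}^\ast(\theta),y)]$. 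With this in hand, the task reduces to Lipschitz-bounding the composite map $\theta\mapsto\nabla_\theta\ce(\theta,\mathbf{x}^\ast(\theta),y)$, which I will do in two steps.

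The first step is to show that $\mathbf{x}^\ast(\theta)$ is itself Lipschitz in $\theta$. Writing the first-order optimality conditions $\nabla_\mathbf{x}\mathcal{J}(\mathbf{x}^\ast_i;\theta_i)=0$ for $i\in\{1,2\}$ and pairing $\nabla_\mathbf{x}\mathcal{J}(\mathbf{x}^\ast_1;\theta_1)-\nabla_\mathbf{x}\mathcal{J}(\mathbf{x}^\ast_2;\theta_1)$ with $\mathbf{x}^\ast_1-\mathbf{x}^\ast_2$, strong concavity furnishes a lower bound of $(\lambda{\sf M}_{\mathbf{xx}}^g-{\sf L}_{\mathbf{xx}}^f)\|\mathbf{x}^\ast_1-\mathbf{x}^\ast_2\|_2^2$; reexpressing the same inner product as $\langle\nabla_\mathbf{x}\mathcal{J}(\mathbf{x}^\ast_2;\theta_2)-\nabla_\mathbf{x}\mathcal{J}(\mathbf{x}^\ast_2;\theta_1),\mathbf{x}^\ast_1-\mathbf{x}^\ast_2\rangle$ (legitimate because both inner-gradient terms vanish at their respective maximizers) and applying the joint smoothness constants ${\sf L}_{\mathbf{x}\theta}^f$ and $\lambda{\sf L}_{\mathbf{x}\theta}^g$ together with Cauchy--Schwarz furnishes an upper bound $(\lambda{\sf L}_{\mathbf{x}\theta}^g+{\sf L}_{\mathbf{x}\theta}^f)\|\theta_1-\theta_2\|_2\|\mathbf{x}^\ast_1-\mathbf{x}^\ast_2\|_2$. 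Dividing yields the Lipschitz constant $(\lambda{\sf L}_{\mathbf{x}\theta}^g+{\sf L}_{\mathbf{x}\theta}^f)/(\lambda{\sf M}_{\mathbf{xx}}^g-{\sf L}_{\mathbf{xx}}^f)$ for $\mathbf{x}^\ast$. The second step then inserts the midpoint $\nabla_\theta\ce(\theta_2,\mathbf{x}^\ast(\theta_1),y)$ into the outer gradient difference and applies the triangle inequality: one piece is controlled by ${\sf L}_{\theta\theta}^f\|\theta_1-\theta_2\|_2$, the other by ${\sf L}_{\theta\mathbf{x}}^f\|\mathbf{x}^\ast_1-\mathbf{x}^\ast_2\|_2$. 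Substituting the Lipschitz bound on $\mathbf{x}^\ast$ and taking expectation over $(\mathbf{x},y)$ collects the constants into the stated ${\sf L}$.

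The main obstacle is the envelope step: without uniqueness of the maximizer, $\mathcal{H}$ need not even be differentiable and Danskin's identity is unavailable. Strong concavity of $\mathcal{J}$ is exactly what resolves this, and it is also what powers the sensitivity bound on $\mathbf{x}^\ast$ in the first step -- so the hypothesis ${\sf L}_{\mathbf{xx}}^f<\lambda{\sf M}_{\mathbf{xx}}^g$ is doing double duty here, much as it does in Theorem \ref{theorem:attack}. A secondary subtlety worth noting is that the SGD update in the paper treats $\mathbf{x}^\ast$ as detached from $\theta$ during backpropagation; the envelope identity is precisely what legitimizes that practice, because the chain-rule contribution through $\partial\mathbf{x}^\ast/\partial\theta$ is multiplied by $\nabla_\mathbf{x}(\ce-\lambda\psi_\vartheta)$, which vanishes at the optimum.
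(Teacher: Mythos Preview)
Your proposal is correct and follows essentially the same Sinha--Namkoong--Duchi route as the paper: split the outer gradient difference with the midpoint $\nabla_\theta\ce(\theta_2,\mathbf{x}^\ast(\theta_1),y)$, then bound $\|\mathbf{x}^\ast(\theta_1)-\mathbf{x}^\ast(\theta_2)\|_2$ via strong concavity and $(\mathbf{x},\theta)$-smoothness. One minor discrepancy worth flagging: you invoke the unconstrained stationarity $\nabla_\mathbf{x}\mathcal{J}(\mathbf{x}^\ast_i;\theta_i)=0$, whereas the paper works with the variational inequality $\langle\nabla_\mathbf{x}\mathcal{J}_{\theta_1}(\mathbf{x}^\ast(\theta_1)),\mathbf{x}^\ast(\theta_2)-\mathbf{x}^\ast(\theta_1)\rangle\leq 0$, which is the correct first-order condition under the box constraint $[\underline{\mathbf{u}},\overline{\mathbf{u}}]$; your argument goes through verbatim once you swap in that inequality.
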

\begin{proof}
	Given any two sets of parameters $\theta_1,\theta_2$, we have:
	\begin{fleqn}
	\begin{align}
		&\left\|\mathbb{E}_{(\mathbf{x},y)\in D_\mathbf{x}}\left[\nabla_\theta \ce(\theta_2, \mathbf{x}^\ast(\theta_2),y)-\nabla_\theta\ce(\theta_1,\mathbf{x}^\ast(\theta_1),y)\right]\right\|_2 \nonumber\\
		\leq &\mathbb{E}_{(\mathbf{x},y)\in D_\mathbf{x}}\left\|\nabla_\theta \ce(\theta_2, \mathbf{x}^\ast(\theta_2),y)-\nabla_\theta\ce(\theta_1,\mathbf{x}^\ast(\theta_1),y)\right\|_2 \nonumber\\
		\leq &\mathbb{E}_{(\mathbf{x},y)\in D_\mathbf{x}}\left\|\nabla_\theta \ce(\theta_2, \mathbf{x}^\ast(\theta_2),y)-\nabla_\theta \ce(\theta_2, \mathbf{x}^\ast(\theta_1),y)\right\|_2 \nonumber\\
		&+\mathbb{E}_{(\mathbf{x},y)\in D_\mathbf{x}}\left\|\nabla_\theta \ce(\theta_2, \mathbf{x}^\ast(\theta_1),y)-\nabla_\theta \ce(\theta_1, \mathbf{x}^\ast(\theta_1),y)\right\|_2 \nonumber\\
		\leq &{\sf L}^f_{\theta \mathbf{x}}\|\mathbf{x}^\ast(\theta_2)-\mathbf{x}^\ast(\theta_1)\|_2+{\sf L}^f_{\theta\theta}\|\theta_1-\theta_2\|_2. \label{eq:lemma2:1}
	\end{align}
	\end{fleqn}
	The first and second inequalities hold because of the triangle inequality. Suppose $\mathcal{J}$ is parameterized by $\theta_2$, say $\mathcal{J}_{\theta_2}$, due to its concavity, we derive
	{\small
	\begin{fleqn}
	\begin{align*}
		&\mathcal{J}_{\theta_2}(\mathbf{x}^\ast(\theta_2)) - \mathcal{J}_{\theta_2}(\mathbf{x}^\ast(\theta_1))
		\leq\big\langle\nabla_x \mathcal{J}_{\theta_2}(\mathbf{x}^\ast(\theta_1)),\mathbf{x}^\ast(\theta_2) -\mathbf{x}^\ast(\theta_1)\big\rangle\\
		&\quad\quad
		-\frac{\lambda {\sf M}_{\bf xx}^g-{\sf L}_{\bf xx}^f}{2}\|\mathbf{x}^\ast(\theta_2)-\mathbf{x}^\ast(\theta_1)\|_2^2; \\
		&\frac{\lambda {\sf M}_{\bf xx}^g-{\sf L}_{\bf xx}^f}{2}\|\mathbf{x}^\ast(\theta_2)-\mathbf{x}^\ast(\theta_1)\|_2^2\leq \mathcal{J}_{\theta_2}(\mathbf{x}^\ast(\theta_2))-\mathcal{J}_{\theta_2}(\mathbf{x}^\ast(\theta_1)).
	\end{align*}
	\end{fleqn}
	}
By combining the two inequalities, we obtain:
	\begin{fleqn}
	\begin{align}
		&(\lambda {\sf M}_{\bf xx}^g-{\sf L}_{\bf xx})\|\mathbf{x}^\ast(\theta_2)-\mathbf{x}^\ast(\theta_1)\|_2^2 \nonumber\\
		 \leq&\langle\nabla_{\bf x}\mathcal{J}_{\theta_2}(\mathbf{x}^\ast(\theta_1)),\mathbf{x}^\ast(\theta_2) -\mathbf{x}^\ast(\theta_1)\rangle \nonumber\\
		\overset{\footnotesize \whitecircled{4}}{\leq}& \left\langle\nabla_{\bf x}\mathcal{J}_{\theta_2}(\mathbf{x}^\ast(\theta_1))-\nabla_{\bf x}\mathcal{J}_{\theta_1}(\mathbf{x}^\ast(\theta_1)),\mathbf{x}^\ast(\theta_2) -\mathbf{x}^\ast(\theta_1)\right\rangle \nonumber\\
		\overset{\footnotesize \whitecircled{5}}{\leq}&\left\|\nabla_{\bf x}\mathcal{J}_{\theta_2}(\mathbf{x}^\ast(\theta_1))-\nabla_{\bf x}\mathcal{J}_{\theta_1}(\mathbf{x}^\ast(\theta_1))\right\|_2\left\|\mathbf{x}^\ast(\theta_2) -\mathbf{x}^\ast(\theta_1)\right\|_2 \nonumber\\
		\overset{\footnotesize \whitecircled{6}}{\leq}& ({\sf L}_{\mathbf{x}\theta}^f+\lambda {\sf L}_{\mathbf{x}\theta}^g)\left\|\theta_1-\theta_2\right\|_2\left\|\mathbf{x}^\ast(\theta_2) -\mathbf{x}^\ast(\theta_1)\right\|_2 \label{eq:lemm2:2} 
	\end{align}
	\end{fleqn}
{\footnotesize where \whitecircled{4}} holds as $\langle\nabla_{\bf x}J_{\theta_1}(\mathbf{x}^\ast(\theta_1)),\mathbf{x}^\ast(\theta_2) -\mathbf{x}^\ast(\theta_1)\rangle\leq0$, {\footnotesize \whitecircled{5}} holds because of the Cauchy-Schwarz inequality, and {\footnotesize \whitecircled{6}} holds as $\mathcal{J}_{\theta_2}$ is $({\sf L}_{\mathbf{x}\theta}^f+\lambda {\sf L}_{\mathbf{x}\theta}^g)$-smooth. Combining Eq.\eqref{eq:lemma2:1} and Eq.\eqref{eq:lemm2:2} leads to
	\begin{align*}
		\frac{\|\nabla\mathcal{H}(\theta_1)-\nabla \mathcal{H}(\theta_2)\|_2}{\|\theta_1-\theta_2\|_2}\leq\bigg(\frac{{\sf L}^f_{\theta \mathbf{x}}(\lambda {\sf L}_{\mathbf{x}\theta}^g+{\sf L}_{\mathbf{x}\theta}^f)}{\lambda {\sf M}_{\bf xx}^g-{\sf L}_{\bf xx}^f}
		+{\sf L}^f_{\theta\theta}\bigg).
	\end{align*} 
\end{proof}

\begin{theorem*} 
Let $\Delta=\mathcal{H}(\theta^{(0)})-\min_\theta \mathcal{H}(\theta)$. Under Assumption \ref{assump:smooth} and Assumption \ref{assump:bound}, if we set the learning rate to $\gamma^{(j)}=\gamma=\minimum(1/{\sf L}, \sqrt{\Delta/({\sf L}\zeta^2N)}$, the adversarial training satisfies
	\begin{equation}
		\frac{1}{N}\sum_{j=0}^{N}\mathbb{E}\left\|\nabla \mathcal{H}(\theta^{(j)})\right\|\leq \zeta\sqrt{8\frac{\Delta {\sf L}}{N}} + 2\hat{c},
	\end{equation} 
	where $N$ is the epochs (i.e., the total iterations of SGD), and 
 $\hat{c}=(\mathcal{J}(\mathbf{x}^\ast)-\mathcal{J}({\mathbf{x}}))\frac{ 2{\sf L}^f_{\theta \mathbf{x}}}{\lambda {\sf M}_{\bf xx}^g-{\sf L}_{\bf xx}^f}\exp(\frac{T}{d}\frac{{\sf L}_{\bf xx}^f-\lambda {\sf M}_{\bf xx}^g}{\lambda {\sf L}_{\bf xx}^g+{\sf L}_{\bf xx}^f})$ is a constant. 
\end{theorem*}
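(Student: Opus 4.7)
My approach is the standard non-convex SGD convergence argument for an ${\sf L}$-smooth objective, extended to track a deterministic bias introduced by using the approximate inner maximizer $\mathbf{x}'$ in place of the exact $\mathbf{x}^{\ast}$. Starting from the ${\sf L}$-smoothness descent inequality supplied by Lemma \ref{lemma:convg}, I would decompose the stochastic gradient used in SGD as $\nabla\hat{h}(\theta) = \nabla\mathcal{H}(\theta) + \bigl(\nabla h^{\ast}(\theta)-\nabla\mathcal{H}(\theta)\bigr) + \bigl(\nabla\hat{h}(\theta)-\nabla h^{\ast}(\theta)\bigr)$, where $h^{\ast}(\theta) = \tfrac{1}{B}\sum_{i}\ce(\theta,\mathbf{x}^{\ast}_{i}(\theta),y_{i})$ is the mini-batch loss evaluated at the optimal adversarial points. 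Under Assumption \ref{assump:bound} the middle term is zero-mean with variance at most $\zeta^{2}$, and only the last term, the \emph{attack bias}, needs new work.

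\textbf{Bounding the bias by $\hat{c}$.} Assumption \ref{assump:smooth} yields the per-sample estimate $\|\nabla_{\theta}\ce(\theta,\mathbf{x}',y)-\nabla_{\theta}\ce(\theta,\mathbf{x}^{\ast},y)\|\leq {\sf L}^{f}_{\theta\mathbf{x}}\|\mathbf{x}'-\mathbf{x}^{\ast}\|$. To convert the Euclidean gap into the objective gap used in the definition of $\hat{c}$, I would apply Proposition \ref{lemma:conca-smooth}: the $(\lambda{\sf M}^{g}_{\bf xx}-{\sf L}^{f}_{\bf xx})$-strong concavity of $\mathcal{J}$ combined with the first-order optimality $\langle\nabla\mathcal{J}(\mathbf{x}^{\ast}),\mathbf{x}'-\mathbf{x}^{\ast}\rangle\leq 0$ on the feasible box (the same trick used at step {\footnotesize\whitecircled{4}} of Lemma \ref{lemma:convg}) bounds $\|\mathbf{x}'-\mathbf{x}^{\ast}\|$ by $\mathcal{J}(\mathbf{x}^{\ast})-\mathcal{J}(\mathbf{x}')$ with leading constant $\tfrac{2}{\lambda{\sf M}^{g}_{\bf xx}-{\sf L}^{f}_{\bf xx}}$. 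Substituting Theorem \ref{theorem:attack} to cap that sub-optimality gap by $(\mathcal{J}(\mathbf{x}^{\ast})-\mathcal{J}(\mathbf{x}))\exp\bigl(-\tfrac{T}{d}\cdot\tfrac{\lambda{\sf M}^{g}_{\bf xx}-{\sf L}^{f}_{\bf xx}}{\lambda{\sf L}^{g}_{\bf xx}+{\sf L}^{f}_{\bf xx}}\bigr)$ reproduces the constant $\hat{c}$ exactly as stated.

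\textbf{Assembling and main obstacle.} Taking conditional expectation kills the noise cross-term, Young's inequality splits the bias cross-term as $-\gamma\langle\nabla\mathcal{H},b\rangle\leq\tfrac{\gamma}{2}\|\nabla\mathcal{H}\|^{2}+\tfrac{\gamma}{2}\hat{c}^{2}$, and the quadratic contribution $\tfrac{{\sf L}\gamma^{2}}{2}\mathbb{E}\|\nabla\hat{h}\|^{2}$ is controlled via $\mathbb{E}\|\nabla\hat{h}\|^{2}\leq 3(\|\nabla\mathcal{H}\|^{2}+\zeta^{2}+\hat{c}^{2})$ together with $\gamma\leq 1/{\sf L}$, so that a positive coefficient of $\|\nabla\mathcal{H}\|^{2}$ survives on the left. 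Telescoping over $j=0,\dots,N-1$ against $\Delta=\mathcal{H}(\theta^{(0)})-\min_{\theta}\mathcal{H}$, dividing by $N\gamma$, and plugging in $\gamma=\min\{1/{\sf L},\sqrt{\Delta/({\sf L}\zeta^{2}N)}\}$ balances the $2\Delta/(N\gamma)$ and ${\sf L}\gamma\zeta^{2}$ contributions into $\zeta\sqrt{8\Delta{\sf L}/N}$; a final Jensen step converting $\tfrac{1}{N}\sum\mathbb{E}\|\nabla\mathcal{H}\|^{2}$ into $\tfrac{1}{N}\sum\mathbb{E}\|\nabla\mathcal{H}\|$ absorbs the residual $O(\hat{c}^{2})$ into the additive $2\hat{c}$. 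The main technical obstacle will be tight control of the bias-to-gap conversion: the naive strong-concavity estimate yields only $\|\mathbf{x}'-\mathbf{x}^{\ast}\|\lesssim\sqrt{\mathcal{J}(\mathbf{x}^{\ast})-\mathcal{J}(\mathbf{x}')}$, one square root looser than what the stated $\hat{c}$ encodes, so matching the precise constant requires using both the strongly-concave lower and the smooth upper halves of Proposition \ref{lemma:conca-smooth} simultaneously, together with the box-optimality of $\mathbf{x}^{\ast}$, in a way that eliminates the gradient cross-term cleanly before invoking Theorem \ref{theorem:attack}.
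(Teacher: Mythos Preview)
Your overall template matches the paper's proof: start from Lemma~\ref{lemma:convg}, expand the descent inequality, split the stochastic gradient into the population gradient, the sampling noise $\nabla h-\nabla\mathcal{H}$, and the inner-attack bias $\nabla\hat h-\nabla h$, take conditional expectation, telescope, and balance $\gamma$. The bias-to-gap conversion you describe (smoothness in $\mathbf{x}$, then strong concavity plus the box-optimality condition $\langle\nabla\mathcal{J}(\mathbf{x}^{\ast}),\mathbf{x}'-\mathbf{x}^{\ast}\rangle\le 0$, then Theorem~\ref{theorem:attack}) is exactly the chain the paper uses.

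Where you diverge is in what $\hat c$ is meant to bound, and this is the source of the ``obstacle'' you flag. In the paper, $\hat c$ bounds the \emph{squared} bias $\|\nabla h(\theta)-\nabla\hat h(\theta)\|_2^2$, not its norm. The chain is
\[
\|\nabla h-\nabla\hat h\|_2^2 \;\le\; \tfrac{1}{B}\sum_i {\sf L}^f_{\theta\mathbf{x}}\,\|\mathbf{x}'_i-\mathbf{x}^{\ast}_i\|_2^2 \;\le\; \tfrac{1}{B}\sum_i \tfrac{2{\sf L}^f_{\theta\mathbf{x}}}{\lambda{\sf M}^g_{\bf xx}-{\sf L}^f_{\bf xx}}\bigl(\mathcal{J}(\mathbf{x}^{\ast}_i)-\mathcal{J}(\mathbf{x}'_i)\bigr) \;\le\; \hat c,
\]
so the ``naive'' strong-concavity estimate $\|\mathbf{x}'-\mathbf{x}^{\ast}\|_2^2\le \tfrac{2}{\lambda{\sf M}^g_{\bf xx}-{\sf L}^f_{\bf xx}}(\mathcal{J}(\mathbf{x}^{\ast})-\mathcal{J}(\mathbf{x}'))$ is already the right power; there is no square-root mismatch, and you do not need the smooth upper half of Proposition~\ref{lemma:conca-smooth} at all. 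Correspondingly, the paper's final inequality is in fact a bound on $\tfrac{1}{N}\sum_j\mathbb{E}\|\nabla\mathcal{H}(\theta^{(j)})\|_2^2$ (the missing square in the theorem display is a typo), so no Jensen step, no Young split yielding $\hat c^2$, and no absorbing of $O(\hat c^2)$ into $2\hat c$ is needed: the bias enters the one-step recursion directly as $\tfrac{\gamma+{\sf L}\gamma^2}{2}\hat c\le \gamma\hat c$, and after dividing by $\gamma$ and telescoping one reads off the additive $2\hat c$.
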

\begin{proof}
Inspired \cite{DBLP:conf/iclr/SinhaND18}, we derive the following at the $j^\text{th}$ iteration:
\begin{fleqn}
\begin{align*}
	&\mathcal{H}(\theta^{(j+1)})\\
	\leq&\mathcal{H}(\theta^{(j)})+\langle\nabla \mathcal{H}(\theta^{(j)}),\theta^{(j+1)}-\theta^{(j)}\rangle+\frac{\sf L}{2}\|\theta^{(j+1)}-\theta^{(j)}\|_2^2 \\
	=&\mathcal{H}(\theta^{(j)})+\gamma\langle\nabla\mathcal{H}(\theta^{(j)}),\nabla\mathcal{H}(\theta^{(j)})-\nabla\hat{h}(\theta^{(j)})\rangle \\
	&\,-\gamma\|\nabla\mathcal{H}(\theta^{(j)})\|_2^2+\frac{{\sf L}\gamma^2}{2}\|\nabla\hat{h}(\theta^{(j)})\|_2^2 \\
	=&\mathcal{H}(\theta^{(j)})+(\gamma-{\sf L}\gamma^2)\left\langle\nabla\mathcal{H}(\theta^{(j)}),\nabla\mathcal{H}(\theta^{(j)})-\nabla\hat{h}(\theta^{(j)})\right\rangle \\
	&\,+\frac{{\sf L}\gamma^2}{2}\|\nabla\hat{h}(\theta^{(j)})-\nabla\mathcal{H}(\theta^{(j)})\|_2^2 \\
	&\,-(\gamma-\frac{{\sf L}\gamma^2}{2})\|\nabla\mathcal{H}(\theta^{(j)})\|_2^2\\
	=& \mathcal{H}(\theta^{(j)})+(\gamma-{\sf L}\gamma^2)\left\langle\nabla\mathcal{H}(\theta^{(j)}),\nabla\mathcal{H}(\theta^{(j)})-\nabla{h}(\theta^{(j)})\right\rangle\\
	&\,+{\sf L}\gamma^2\|\nabla h(\theta^{(j)})-\nabla\mathcal{H}(\theta^{(j)})\|_2^2-\frac{\gamma}{2}\|\nabla\mathcal{H}(\theta^{(j)})\|_2^2 \\
	&\,+\frac{\gamma+{\sf L}\gamma^2}{2}\|\nabla h(\theta^{(j)})-\nabla\hat{h}(\theta^{(j)})\|_2^2.
\end{align*}
\end{fleqn}
Taking conditional expectations of $\mathcal{H}(\theta^{(j+1)})-\mathcal{H}(\theta^{(j)})$ on $\theta^{(j)}$ and using $\mathbb{E}(\nabla h(\theta^{(j)}))=\mathcal{H}(\theta^{(j)})$, we have 
\begin{eqnarray}
	&\mathbb{E}(\mathcal{H}(\theta^{(j+1)})-\mathcal{H}(\theta^{(j)})|\theta^{(j)})\leq -\frac{\gamma}{2}\mathbb{E}(\|\nabla\mathcal{H}(\theta^{(j)})\|_2^2) \nonumber\\
	&+{\sf L}\gamma^2\zeta^2+\frac{\gamma+{\sf L}\gamma^2}{2}\|\nabla h(\theta^{(j)})-\nabla\hat{h}(\theta^{(j)})\|_2^2. \label{eq:converge-tele}
\end{eqnarray}
Furthermore, we derive
\begin{align*}
	&\|\nabla h(\theta^{(j)})-\nabla\hat{h}(\theta^{(j)})\|_2^2 \\ =&\left\|\frac{1}{B}\sum_{i=1}^{B}\nabla_\theta \ce(\theta^{(j)},{\mathbf{x}'}_i,y_i) - \frac{1}{B}\sum_{i=1}^{B}\nabla_\theta \ce(\theta^{(j)},\mathbf{x}^\ast_i,y_i)\right\|_2^2 \\
	\leq& \frac{1}{B}\sum_{i=1}^{B}\left\|\nabla_\theta \ce(\theta^{(j)},{\mathbf{x}'}_i,y_i)-\nabla_\theta\ce(\theta^{(j)},{\mathbf{x}}^\ast_i,y_i)\right\|_2^2 \\
	\leq& \frac{1}{B}\sum_{i=1}^{B}{\sf L}^f_{\theta \mathbf{x}}\|\mathbf{x}'_i-\mathbf{x}^\ast_i\|_2^2 \\
	\leq& \frac{1}{B}\sum_{i=1}^{B}\frac{2{\sf L}^f_{\theta \mathbf{x}}}{\lambda {\sf M}_{\bf xx}^g-{\sf L}_{\bf xx}}\left(\mathcal{J}(\mathbf{x}^\ast_i)-\mathcal{J}(\mathbf{x}'_i)\right) \\
	\leq& (\mathcal{J}(\mathbf{x}^\ast_i)-\mathcal{J}(\mathbf{x}_i))\frac{ 2{\sf L}^f_{\theta \mathbf{x}}}{\lambda {\sf M}_{\bf xx}^g-{\sf L}_{\bf xx}^f}\exp(\frac{T}{d}\frac{{\sf L}_{\bf xx}^f-\lambda {\sf M}_{\bf xx}^g}{{\sf L}_{\bf xx}^f+\lambda {\sf L}_{\bf xx}^g}) = \hat{c}.
\end{align*}
Plugging the preceding inequities into Ineq.\eqref{eq:converge-tele} and taking telescope sum of it over $j=0,\ldots,N-1$, we obtain
\begin{align*}
	\frac{1}{N}\sum_{j=0}^{N-1}\mathbb{E}(\|\nabla\mathcal{H}(\theta^{(j)})\|_2^2)\leq&\frac{2}{\gamma N}\mathbb{E}(\mathcal{H}(\theta^{(0)})-\mathcal{H}(\theta^{(N)}))\\
	&+2{\sf L}\gamma\zeta^2
	+(1+{\sf L}\gamma)\hat{c}. 
\end{align*} 
Using the fact $\gamma \leq \frac{1}{\sf L}$ and $\mathcal{H}(\theta^{(0)})-\mathcal{H}(\theta^{(N)})\leq\mathcal{H}(\theta^{(0)})-\min_\theta \mathcal{H}(\theta)=\Delta$, we have
\begin{align*}
	\frac{1}{N}\sum_{i=0}^{N-1}\mathbb{E}(\|\nabla\mathcal{H}(\theta^{(i)})\|_2^2)&\leq\frac{2\Delta}{\gamma N}+2{\sf L}\gamma\zeta^2
	+2\hat{c} \\
	&\leq \min_{\gamma} \left(\frac{2\Delta}{\gamma N}+2{\sf L}\gamma\zeta^2
	+2\hat{c}\right) \\
	&= \zeta\sqrt{8\frac{\Delta {\sf L}}{N}} + 2\hat{c},
\end{align*}
where $\gamma=\sqrt{\frac{\Delta}{{\sf L}\zeta^2}N}$.
This leads to the theorem.
\end{proof}

\section{Additional Experimental Analysis}
\subsection{Manipulation Example} \label{append:me}

We show how to manipulate malware examples by conducting perturbations in the feature space. For manifest features, we inject them into the file AndroidManifest.xml by following the defined format. For API features, we leverage an example to illustrate the manipulation. Listing \ref{lst:java-example} shows the malware gets the device ID and then sends sensitive information from the phone to the outside world via SMS. We observe that apps (e.g., the one with md5 checksum {\tt 4cc8****f212} and the one with {\tt f07d****3b7b}) use this pattern to retrieve a user's private information. In order to mislead malware detectors, Listing \ref{lst:api-ist} shows how to inject irrelevant APIs into the code snippet, and Listing \ref{lst:api-rmv} hides {\tt sendTextMessage} using {\tt Java} reflection, both of which retain the malicious functionality.

\makeatletter
\setlength{\@fptop}{0pt}
\makeatother
\begin{figure}[!t]
	\begin{lstlisting}[
		numbers=none,
		xleftmargin=.11\columnwidth,
		xrightmargin=.11\columnwidth,
		caption=Sending sensitive information via SMS,
		label={lst:java-example}, 
		captionpos=b
		]
		TelephonyManager telecom = // default ;
		String str = "";
		if (Build.VERSION.SDK_INT >= Build.VERSION_CODES.O) {
			str = telephonyMgr.getImei();
		} else {
			str = telecom.getDeviceId();
		}
		SmsManager smgr = SmsManager.getDefault();
		smgr.sendTextMessage("97605", null, str, null, null);
	\end{lstlisting}
	\begin{lstlisting}[
		numbers=none,
		xleftmargin=.11\columnwidth,
		xrightmargin=.11\columnwidth,
		caption=API insertion,
		label={lst:api-ist}, 
		captionpos=b
		]
		if (False){
			try {
				ConnectivityManager cmgr = null;
				NetworkInfo anet = cmgr.getActiveNetworkInfo();
			} catch (Exception e) {}
		}
	}
	\end{lstlisting}
	\begin{lstlisting}[
		numbers=none,
		xleftmargin=.11\columnwidth,
		xrightmargin=.11\columnwidth,
		caption=API removal,
		label={lst:api-rmv}, 
		captionpos=b
		]
		String mtd_name = "sendTextMessage";
		Method send_sms = null;
		send_sms = smgr.getClass().getMethod(mtd_name, String.class, String.class, String.class, PendingIntent.class, PendingIntent.class);
		send_sms.invoke(smgr, "97605", null, str, null, null);
	\end{lstlisting}
	
	\begin{tikzpicture}[overlay,remember picture]
		\draw[-latex,dashed,line width=1,black!60] (1.8,9.25) -- (0.6, 9.25) -- (0.6,6.3) -- (0.9,6.3);
		\node(add)[label=below:{},align=center] at (0.38, 7.2) {\circled{1}};
		\draw[-latex,dashed,line width=1,black!60] (7.3,9.1) -- (8.3, 9.1) -- (8.3,2.6) -- (8, 2.6);
		\draw[dashed,line width=1,black!60] (7.3,8.8) edge (8.3, 8.8);
		\node(rmv)[label=below:{},align=center] at (8.62, 3.9) {\circled{2}};
	\end{tikzpicture}
	
	\vspace{-10pt}
	\caption{Code snippets for perturbing apps. Manipulation \protect\scalebox{0.6}{\protect\circled{1}} inserts junk codes before sending text messages and manipulation \protect\scalebox{0.6}{\protect\circled{2}} hides the {\tt sendTextMessage} using {\tt Java} reflection.}
	\label{fig:code-example}
\end{figure}

\subsection{Training Time and Test Time}

We implement the defense models using PyTroch libraries \cite{DBLP:conf/nips/PaszkeGMLBCKLGA19} and run experiments on a CUDA-enabled GTX 2080 Ti GPU and Intel Xeon W-2133 CPU@3.60GHz.

Figure \ref{fig:train-time} reports the training time of the defenses. We observe that adversarial training-based defenses take much longer than standard training without involving adversarial examples. This is because searching for perturbations is conducted per iteration in standard training. Furthermore, AT-MaxMA and PAD-SMA leverage several attacks to produce adversarial examples and thus require more time. Since PAD-SMA encapsulates not only a malware detector but also an adversary detector, the longest cost is consumed.

Furthermore, we report the Mean Test Time to Detection (MTTD) for PAD-SMA. We ignore the other defenses because all models share the same feature extraction method and the ML part runs very fast. Using the Drebin test dataset, MTTD of PAD-SMA is 1.72s using 1 CPU core and 0.52s using 6 CPU cores. Using the Malscan test dataset, MTTD of PAD-SMA is 8.91s using 1 CPU core and 2.79s using 6 CPU cores. Our model may not hit the limit of the user's patience, particularly when multi-core computing is available, because the test time within 5s is reasonable \cite{10.1145/3433667.3433669}.

\begin{figure}[!t]
	\centering
	\begin{subfigure}[b]{0.48\textwidth}
		\centering
	\includegraphics[width=\textwidth]{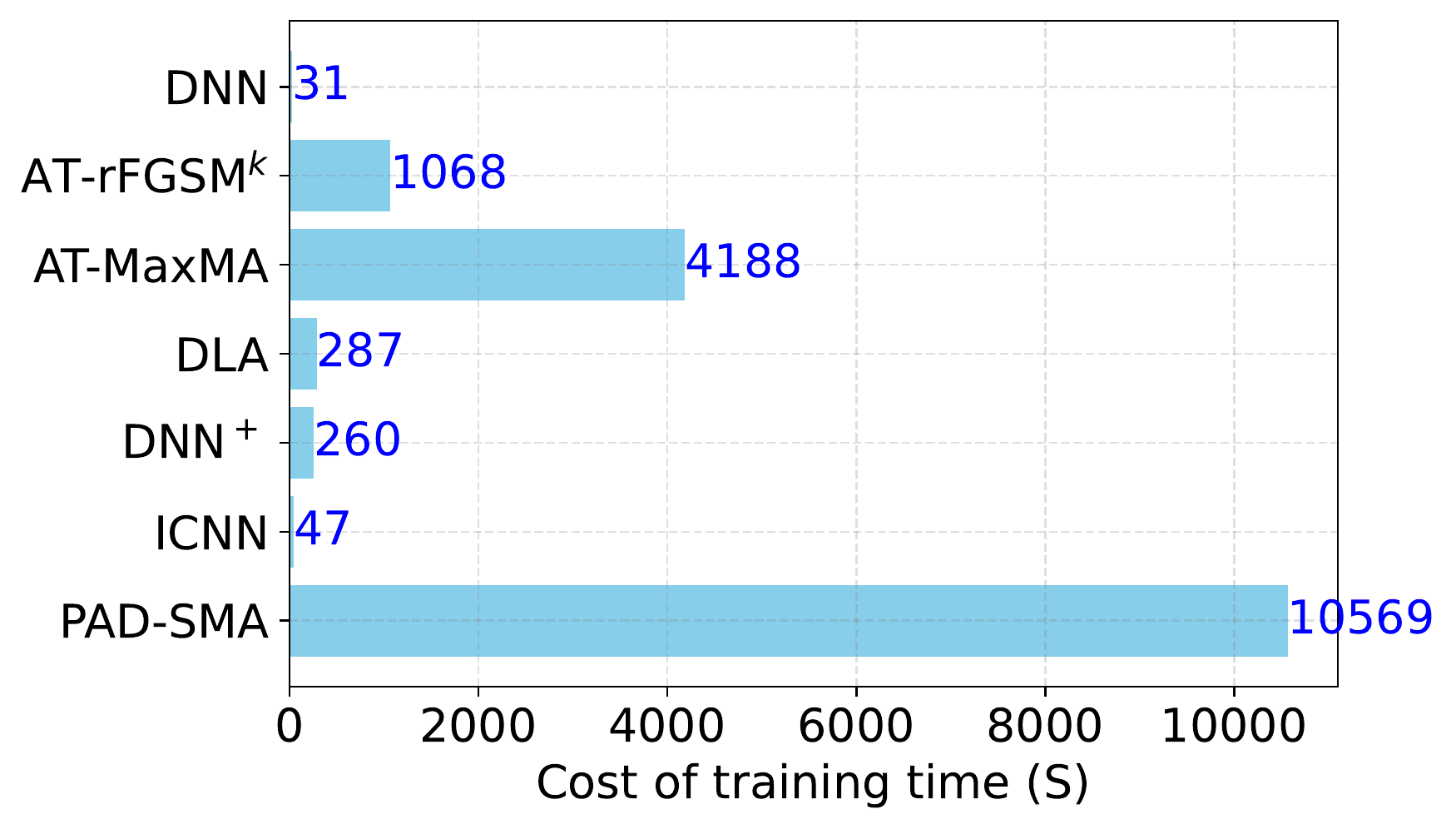}
		\caption{Drebin}
	\end{subfigure} \vspace{-8pt}	 
	\begin{subfigure}[b]{0.48\textwidth}
		\centering
        \includegraphics[width=\textwidth]{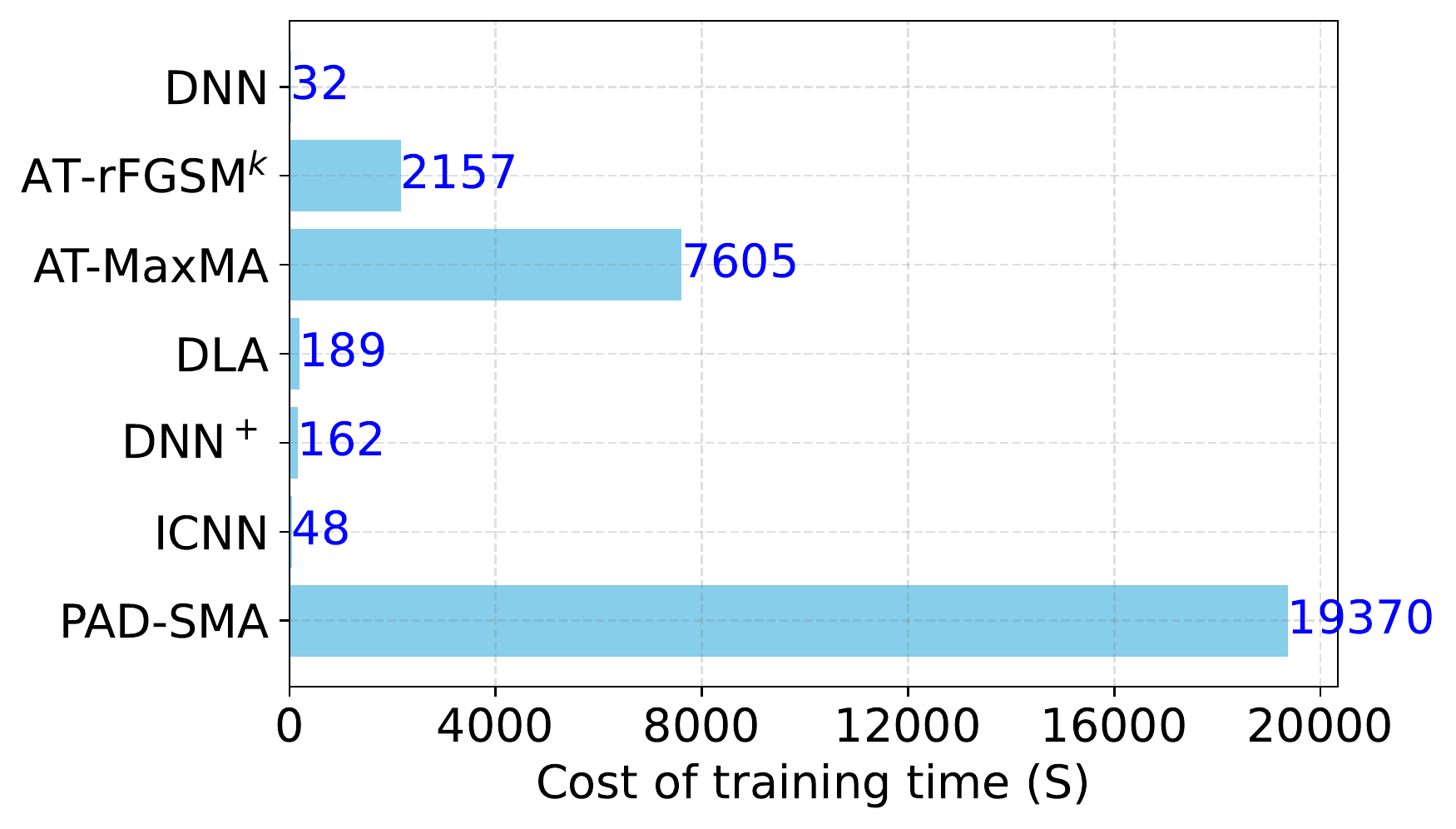}
		\caption{Malscan}
	\end{subfigure}
	\caption{The cost of training time for defenses.
     }
    \label{fig:train-time}
\end{figure}

\end{document}